\documentclass{ieeeaccess}
\input{preprint_display.tex}
\usepackage{cite}
\usepackage{amsmath,amssymb,amsfonts}
\usepackage{amsthm}
\usepackage{algorithmic}
\usepackage{graphicx}
\usepackage{float}
\usepackage{multirow}
\usepackage{textcomp}
\usepackage[colorlinks=true,linkcolor=blue,citecolor=blue,urlcolor=blue]{hyperref}

\usepackage{bm}
\newcommand{\EqLink}[1]{\hyperref[#1]{Eq.~(\ref*{#1})}}
\newcommand{\EquationLink}[1]{\hyperref[#1]{Eq.~(\ref*{#1})}}
\newcommand{\EqsLink}[2]{\hyperref[#1]{Eqs.~(\ref*{#1})} and \hyperref[#2]{(\ref*{#2})}}
\newcommand{\EqRangeLink}[2]{\hyperref[#1]{Equations~(\ref*{#1})--(\ref*{#2})}}
\DeclareRobustCommand{\StageILink}{Stage~\hyperlink{stage-training-i}{I}}
\DeclareRobustCommand{\StageIILink}{Stage~\hyperlink{stage-training-ii}{II}}
\DeclareRobustCommand{\StageIIILink}{Stage~\hyperlink{stage-training-iii}{III}}
\newcommand{\alphabf}{\bm{\alpha}}
\newtheorem{proposition}{Proposition}

\makeatletter
\AtBeginDocument{\DeclareMathVersion{bold}
\SetSymbolFont{operators}{bold}{T1}{times}{b}{n}
\SetSymbolFont{NewLetters}{bold}{T1}{times}{b}{it}
\SetMathAlphabet{\mathrm}{bold}{T1}{times}{b}{n}
\SetMathAlphabet{\mathit}{bold}{T1}{times}{b}{it}
\SetMathAlphabet{\mathbf}{bold}{T1}{times}{b}{n}
\SetMathAlphabet{\mathtt}{bold}{OT1}{pcr}{b}{n}
\SetSymbolFont{symbols}{bold}{OMS}{cmsy}{b}{n}
\renewcommand\boldmath{\@nomath\boldmath\mathversion{bold}}}
\makeatother

\def\BibTeX{{\rm B\kern-.05em{\sc i\kern-.025em b}\kern-.08em
    T\kern-.1667em\lower.7ex\hbox{E}\kern-.125emX}}

\begin{document}
\bstctlcite{access_bstctl}
\hypersetup{colorlinks=true,linkcolor=blue,citecolor=blue,urlcolor=blue}
\history{\vphantom{Date of publication}}
\doi{}

\title{Adaptive Fused Prior Transfer for Controllable Generative Image Compression}
\author{\uppercase{Yifei Pei}\authorrefmark{1},
\uppercase{Ying Liu}\authorrefmark{1}, and
\uppercase{Nam Ling}\authorrefmark{1}\IEEEmembership{Life Fellow, IEEE}}

\address[1]{Department of Computer Science and Engineering, Santa Clara University, Santa Clara, CA 95053 USA}

\markboth
{Pei \headeretal: Adaptive Fused Prior Transfer for Controllable Generative Image Compression}
{Pei \headeretal: Adaptive Fused Prior Transfer for Controllable Generative Image Compression}

\corresp{Corresponding authors: Yifei Pei (e-mail: ypei@scu.edu; yifeipet@gmail.com) and Nam Ling (e-mail: nling@scu.edu).}

\begin{abstract}
Learned image compression has achieved competitive rate-distortion performance through end-to-end optimized transforms, quantization, and entropy modeling. At very low bitrates, however, the compressed representation often cannot preserve fine textures and local structures, and distortion-oriented reconstruction can produce over-smoothed images with reduced naturalness. Perceptual and generative codecs address this problem by synthesizing missing details with reconstruction priors. Controllable codecs further allow one model to cover different bitrate and reconstruction preferences. However, controllability alone does not resolve the decoder-side reconstruction-prior problem: under severe bit constraints, the decoder must infer missing details from limited transmitted information, and existing codebook-based controllable designs generally rely on single-codebook token-based reconstruction priors. This paper proposes Adaptive Fused Prior Transfer for Controllable Generative Image Compression (AFP-GIC), a controllable codec that transfers an adaptive fused prior from a frozen pretrained AdaCode model. Encoder-side fused-prior features guide latent formation, while the decoder predicts a compatible fused prior from the compressed representation and selected control variables, enabling prior-guided reconstruction without transmitting the fused prior itself. A motivating analysis shows that better decoder-side fused-prior alignment tightens a reconstruction-error upper bound and that the fused-prior family contains single-codebook choices as special cases. Under the unified benchmark, AFP-GIC achieves 18.1\% lower decoder latency and uses 31.10 million (20.5\%) fewer inference parameters than DC-VIC. Experiments on Kodak, CLIC2020, and DIV2K show competitive PSNR and SSIM, with the clearest perceptual gains in NIQE scores and very-low-bitrate visual comparisons. Code is available at \url{https://github.com/yifeipet/AFP_GIC}.
\end{abstract}

\begin{keywords}
Adaptive fused-prior transfer, controllable image compression, generative image compression, learned image compression, low-bitrate reconstruction, pretrained prior transfer.
\end{keywords}

\titlepgskip=-21pt

\maketitle

\section{Introduction}
\label{sec:introduction}

\PARstart{L}{earned} image compression (LIC) has made substantial progress under conventional rate-distortion evaluation and is now frequently compared with established image codecs. Rate is typically measured in bits per pixel (bpp); distortion measures the discrepancy between the original and decoded images. Most LIC systems follow the nonlinear transform coding framework, in which learned analysis and synthesis transforms, quantization, and entropy modeling are optimized end to end \cite{balle2021ntc}. Early end-to-end and hyperprior-based methods established the main learned transform-coding paradigm \cite{balle2017e2e,balle2018hyperprior}. Subsequent joint autoregressive-hierarchical entropy priors and representative conditional probability models further improved entropy modeling \cite{minnen2018joint,mentzer2018conditional}. More recent architectures, such as ELIC, improved transform design, context modeling, and decoding efficiency \cite{he2022elic}. Benchmark studies also report continued improvements of learned codecs relative to conventional reference codecs under rate-distortion evaluation \cite{hu2022benchmark}. Still, strong rate-distortion performance does not by itself address the challenges of very-low-bitrate compression. At such rates, reconstruction quality depends on both efficient transmission and the decoder's ability to infer missing detail.

This difficulty is most evident in very-low-bitrate image compression. For the low-bitrate operating points considered in this paper, including rates below roughly 0.2 bpp, the compressed representation often lacks sufficient information to preserve fine textures and local structures. Conventional distortion-oriented codecs are often trained with pixel-domain losses such as mean squared error (MSE) and evaluated using metrics such as peak signal-to-noise ratio (PSNR), which is derived from MSE \cite{wang2009mse}. Optimizing pixel-fidelity objectives can therefore favor averaged outputs in low-bitrate regions where local structure is not fully encoded. This behavior reflects the perception-distortion tradeoff described by Blau and Michaeli in the rate-distortion-perception framework \cite{blau2018perceptiondistortion,blau2019rdp}: improving distortion metrics such as MSE or PSNR does not necessarily improve perceived image quality. In this paper, perceptual quality denotes the visually perceived quality of the decoded image. Realism is used more specifically for naturalness and plausibility, especially in synthesized textures and local structures. Thus, high PSNR does not necessarily imply visually sharp reconstructions. This motivates reporting perceptual metrics together with distortion metrics in the very-low-bitrate regime.

Perceptual and generative compression methods therefore use learned reconstruction priors to recover details not fully represented in the bitstream. HiFiC \cite{mentzer2020hific} combines learned compression with adversarial and perceptual objectives. MS-ILLM \cite{muckley2023illm} targets improved distributional fidelity through learned perceptual modeling. Recent diffusion-based codecs, such as PerCo \cite{careil2024perco} and conditional diffusion compression \cite{yang2023cdc}, have reported improved visual realism at low bitrates, particularly under perceptual or distribution-oriented evaluation. A remaining difficulty is that many perceptual codecs are trained for a limited set of operating preferences, such as a target bitrate range or a particular distortion-perception balance. Adapting the same model to different deployment requirements can therefore be difficult.

Controllable image compression offers one way to reduce this dependence on fixed operating preferences, since a single model can operate across different bitrate, distortion, and perceptual-quality targets. Multi-Realism demonstrated that one compressed representation can be decoded with different balances between pixel fidelity and realism \cite{agustsson2023multirealism}. CRDR \cite{iwai2024crdr} considered joint control of bitrate and reconstruction behavior. Control-GIC \cite{li2025controlgic} pursued a once-for-all controllable setting through dynamic granularity adaptation. DC-VIC combines controllable compression with a pretrained codebook-based generative model and explicitly models the allocation of information between token-based generation and feature-level refinement \cite{iwai2024dcvic}. These studies indicate the importance of reconstruction priors in controllable generative compression. They also leave open how such priors should be represented and made available at the decoder.

Codebook-based generative priors provide a discrete representation for token modeling and can be integrated with entropy coding. In many such designs, each location is represented by a token selected from a single learned codebook. VQ-VAE \cite{oord2017vqvae} and VQGAN \cite{esser2021vqgan} follow this discrete-codebook formulation. DC-VIC adopts a pretrained codebook-based reconstruction prior for controllable compression. AdaCode generalizes the single-codebook representation by adaptively fusing multiple basis codebooks according to image content \cite{liu2023adacode}. The resulting fused representation can describe a broader range of structures and textures than a single-codebook choice. This leads to the question considered in this paper: can an adaptive fused prior be incorporated into a controllable generative codec without transmitting the fused prior itself or violating entropy-constrained decoding?

To address this question, we investigate adaptive fused-prior transfer for low-bitrate controllable generative image compression. AFP-GIC builds on the dual-conditioned controllable codec structure used in recent controllable generative compression and replaces single-codebook prior modeling with prior transfer from a frozen pretrained AdaCode model. A key challenge is asymmetric prior availability. At the encoder, latent formation can be guided by fused-prior features extracted from the input image. At the decoder, reconstruction has access only to the transmitted compressed representations and compact control/header information. AFP-GIC handles this asymmetry by using fused-prior guidance at the encoder and prior prediction at the decoder. The bitstream therefore remains entropy-constrained. At the same time, the decoder can use a broader adaptive-prior representation during low-bitrate reconstruction.

Our contributions are:
\begin{itemize}
\item We introduce adaptive fused-prior transfer for controllable generative image compression. The proposed formulation generalizes prior-guided controllable compression from single-codebook reconstruction priors to a frozen adaptive fused-prior model.
\item We design an asymmetric encoder-decoder prior-transfer mechanism that mitigates the prior-availability mismatch in entropy-constrained compression. Adaptive fused-prior features guide latent formation at the encoder. At the decoder, a compatible fused prior is predicted from compressed representations and used to guide reconstruction through the frozen pretrained decoder.
\item We provide an analytical motivation under a Lipschitz decoder assumption, relating improved prior alignment to a tighter reconstruction-error upper bound and showing that the adaptively fused prior family contains the single-codebook alternative as a special case.
\item We pair adaptive fused-prior transfer with a lightweight fully convolutional decoder, which reduces the parameter footprint and measured decoding latency relative to DC-VIC in our evaluation \cite{iwai2024dcvic}.
\end{itemize}

\section{Related Work}
\label{sec:related_work}

\subsection{Learned Image Compression}
Most learned image compression methods follow the nonlinear transform coding framework, where an analysis transform maps the input image to a latent representation, quantization produces discrete codes, an entropy model estimates their probability, and a synthesis transform reconstructs the decoded image \cite{balle2021ntc}. Early end-to-end optimized codecs established this transform-coding formulation for learned compression \cite{balle2017e2e}. Scale hyperpriors were then introduced to transmit side information for more accurate entropy modeling \cite{balle2018hyperprior}, and joint autoregressive-hierarchical entropy priors further improved probability estimation for the quantized latents \cite{minnen2018joint}. Conditional probability models provide another representative direction for learned entropy modeling \cite{mentzer2018conditional}. More recent systems, such as Cheng et al.'s attention-based codec and ELIC, improve transform design, context modeling, and coding efficiency \cite{cheng2020gaussianmix,he2022elic}. Benchmark studies report that learned codecs can approach or exceed conventional reference codecs under conventional rate-distortion evaluation \cite{hu2022benchmark}.

The above codecs often serve as the starting point for perceptual and generative compression studies, but they are primarily optimized for rate-distortion objectives and are not specifically targeted at perceptual reconstruction quality at very low rates. When the bit budget is insufficient, minimizing distortion alone tends to favor over-smoothed reconstructions and loss of high-frequency texture. This limitation has motivated perceptual and generative compression methods that improve visual reconstruction while keeping distortion competitive, rather than focusing only on entropy modeling or transform design.

\subsection{Perceptual and Generative Image Compression}
Perceptual and generative image compression methods are designed to improve visual quality when the bit budget is very limited. Early work on generative compression studied distribution-preserving lossy compression \cite{tschannen2018dplc}. GAN-based extreme learned compression introduced adversarial training for extreme low-bitrate reconstruction \cite{agustsson2019extremegan}, and HiFiC established a high-fidelity generative compression framework based on distortion, perceptual, and adversarial objectives \cite{mentzer2020hific}. Other approaches improve perceptual quality through stronger statistical objectives, as in MS-ILLM \cite{muckley2023illm}, or through highly generative low-bitrate latent modeling, as in Generative Latent Coding \cite{jia2024glc}. A common difficulty in this area is that GAN-based training can introduce structured artifacts, and the relative weights of the losses often depend on bitrate and image content.

Perceptual and generative image compression is commonly framed in rate-distortion-perception terms, with different methods varying mainly in how the perceptual objective is defined. More recently, diffusion-based perceptual codecs have shown strong visual quality at low bitrates, including PerCo \cite{careil2024perco}, conditional diffusion compression \cite{yang2023cdc}, and foundation-diffusion compression \cite{relic2024foundationdiffusion}. Recent efforts improve decoding efficiency through compressed-feature-initialized relay residual diffusion in RDEIC \cite{li2025rdeic} and one-step diffusion in OSDiff \cite{jia2025osdiff}. However, these systems often operate at a fixed training point, can be computationally expensive at decoding time, or emphasize distributional realism even when reference fidelity is relaxed.

\begin{figure*}[!t]
\centering
\includegraphics[width=0.82\textwidth]{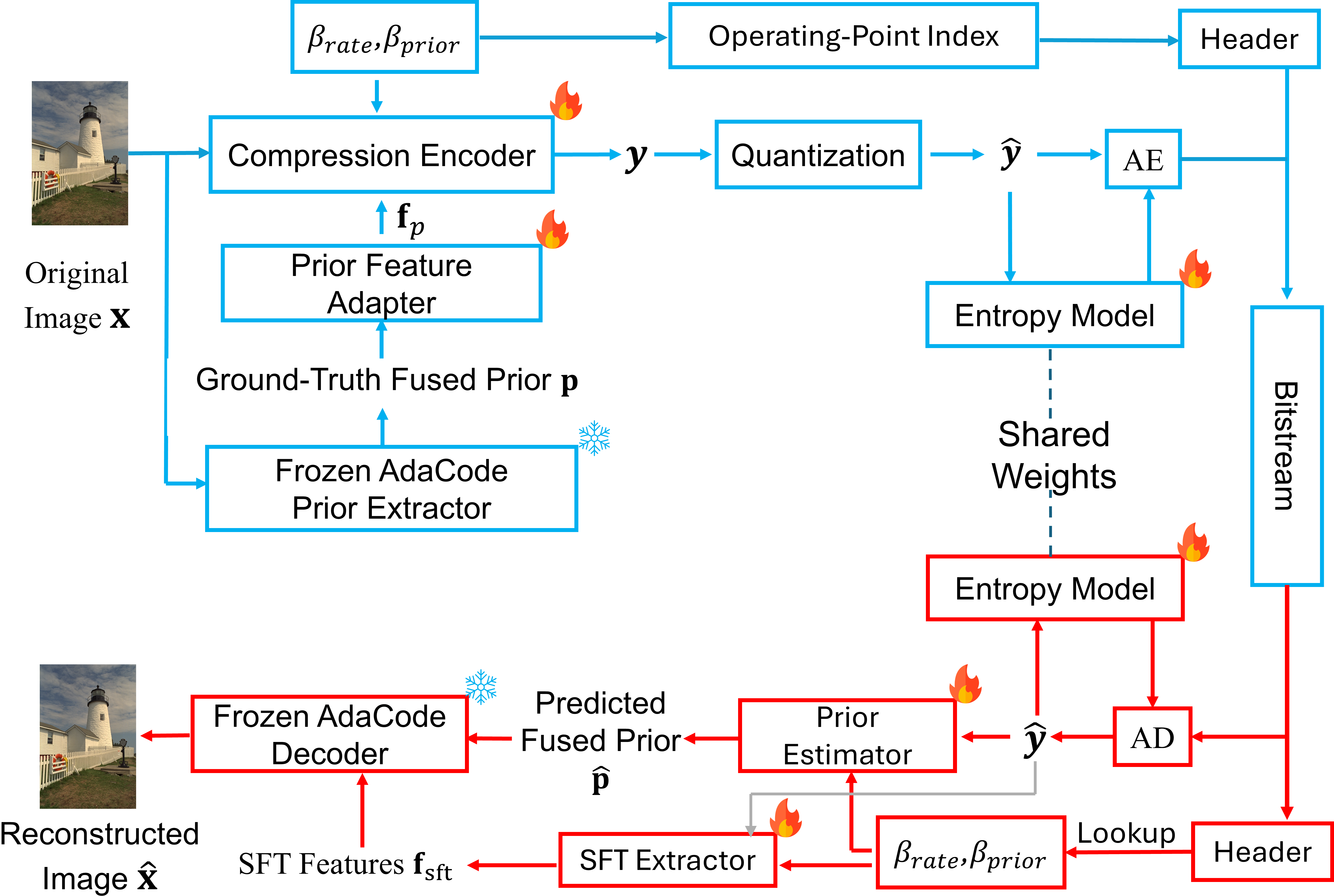}
\caption{Overview of the proposed Adaptive Fused Prior Transfer for Controllable Generative Image Compression (AFP-GIC). Blue denotes the encoding process, and red denotes the decoding process. Snowflake icons denote frozen modules, whereas fire icons denote trainable modules. The selected control pair is conveyed through a compact operating-point index in the header.}
\label{fig:overview}
\end{figure*}

\begin{figure*}[!t]
\centering
\includegraphics[width=0.88\textwidth]{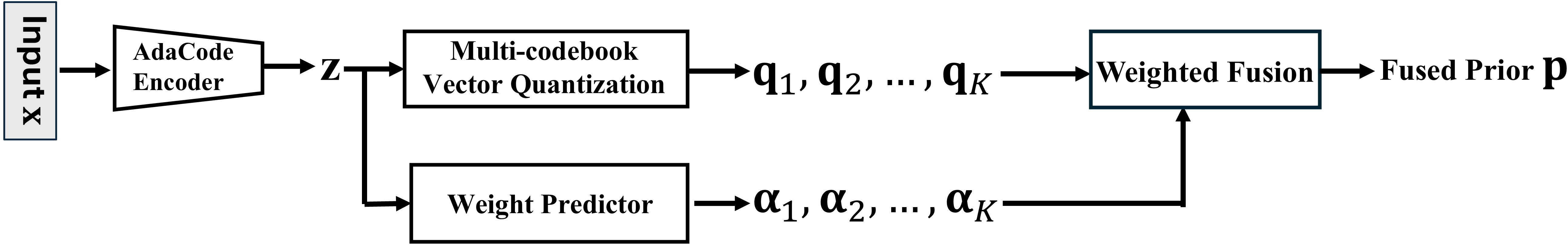}
\caption{Frozen AdaCode prior extractor. The input image is encoded and quantized by multiple codebooks. The resulting codebook representations are then adaptively fused through predicted weights to produce the ground-truth fused prior $\mathbf{p}$.}
\label{fig:adacode_prior_extractor}
\end{figure*}

\subsection{Controllable Image Compression}
Controllable image compression aims to use a single trained model to cover multiple operating preferences, instead of training a separate codec for each bitrate or reconstruction style. In distortion-oriented learned compression, this idea is often realized as variable-rate coding, where the model changes its target bitrate through a conditioning variable or gain mechanism \cite{choi2019conditionalautoencoder,cui2021agdic,song2021sftvr}. For perceptual and generative compression, the control target is broader because the decoder must also balance pixel fidelity and visual realism. Fidelity-controllable extreme compression introduced explicit control over this fidelity-realism behavior in a GAN-based codec \cite{iwai2021fidelitycontrollable}.

Multi-Realism further showed that one compressed representation can be decoded into reconstructions with different realism levels through a decoder-side control variable \cite{agustsson2023multirealism}. The transmitted representation is fixed, and the control signal changes the reconstruction behavior at the decoder. CRDR extends the controllable formulation by jointly considering rate, distortion, and realism, with the goal of using one model across a wider set of compression conditions \cite{iwai2024crdr}. Control-GIC also follows this once-for-all direction for controllable generative image compression with dynamic granularity adaptation \cite{li2025controlgic}. These methods establish controllability as an important practical requirement, but the choice of generative prior and the way it is made available to the decoder remain central design issues.

DC-VIC combines controllable compression with a pretrained generative model and introduces dual-conditioned training to control both the total rate and the allocation of information between token-based generation and feature-level modification \cite{iwai2024dcvic}. This design further shows that bitrate control and reconstruction-behavior control can be coupled in a single low-bitrate generative codec.

\subsection{Pretrained Generative Priors and Codebook-Based Models}
At low bitrates, the transmitted representation may not fully specify local structures and textures. Pretrained generative priors have therefore been studied as decoder-side reconstruction guidance, and codebook-based generative models provide one discrete form of such guidance. VQ-VAE introduced neural discrete representations \cite{oord2017vqvae}, VQ-VAE-2 improved hierarchical discrete latent modeling \cite{razavi2019vqvae2}, and VQGAN combined vector quantization with adversarial and Transformer-based generative modeling for high-resolution synthesis \cite{esser2021vqgan}. In image compression, pretrained VQGAN tokenizers have also been used directly for extreme compression \cite{mao2024vqganextreme}, and DC-VIC demonstrates that a pretrained codebook-based reconstruction prior can be integrated with controllable compression \cite{iwai2024dcvic}.

However, single-codebook generative priors represent each spatial location through one selected codebook entry, which can restrict the available reconstruction-prior representation when image content is diverse. AdaCode learns image-adaptive codebooks for image restoration by combining multiple basis codebooks with spatially varying fusion weights \cite{liu2023adacode}. This mechanism provides a continuous fused-prior representation rather than a single discrete codebook choice. These properties make adaptive codebook fusion relevant to low-bitrate generative compression, although its use under decoder-side information constraints remains less explored.

\section{Methodology}
\label{sec:methodology}

Fig.~\ref{fig:overview} summarizes AFP-GIC. The method combines an entropy-constrained compression backbone with adaptive fused-prior transfer from a frozen AdaCode model. Throughout this paper, unless otherwise specified, ``prior'' denotes the transferred codebook-based generative prior used for reconstruction guidance, while entropy priors denote the probability models used for arithmetic coding. The key asymmetry is that encoder-side prior features are derived from the original image, whereas decoder-side reconstruction must rely on the transmitted compressed representations and compact control/header information, without access to the original image or encoder-side fused prior. AFP-GIC handles this asymmetry through encoder-side prior guidance and decoder-side prior prediction.

\subsection{Overall Framework}
\label{sec:overall_framework}

Given an input image $\mathbf{x}$, the encoder produces a latent representation $\mathbf{y}$, which is quantized to $\hat{\mathbf{y}}$ and entropy coded. AFP-GIC differs from token-prior codecs by using adaptive fused-prior features from a frozen AdaCode model: the encoder injects an adapted fused prior, while the Prior Estimator predicts a compatible fused prior and the SFT extractor separately generates modulation features for the frozen AdaCode decoder \cite{wang2018sft}. This enables prior-guided reconstruction without transmitting the fused prior itself. The proposed prior-transfer mechanism comprises a Prior Feature Adapter and a Prior Estimator; Tables~\ref{tab:prior_adapter_config} and \ref{tab:prior_estimator_config} summarize their configurations.

\begin{table}[t]
\caption{The proposed Prior Feature Adapter configuration.}
\label{tab:prior_adapter_config}
\centering
\scriptsize
\renewcommand{\arraystretch}{1.48}
\setlength{\tabcolsep}{3.0pt}
\resizebox{0.97\columnwidth}{!}{%
\begin{tabular}{|p{0.16\columnwidth}|p{0.23\columnwidth}|p{0.31\columnwidth}|p{0.22\columnwidth}|}
\hline
Block / Module & Layer Type / Stride & (Filter Shape) $\times$ Filters & Output Shape \\
\hline
 \shortstack[l]{\rule{0pt}{2.4ex}input fused\\ prior} & \shortstack[l]{\rule{0pt}{2.4ex}AdaCode fused prior\\ / --} & \shortstack[l]{\rule{0pt}{2.4ex}fused prior feature map at\\ $h_p\times w_p$, 256 channels} & \shortstack[l]{\rule{0pt}{2.4ex}$B\times256\times$\\ $h_p\times w_p$} \\
 \hline
\shortstack[l]{\rule{0pt}{2.4ex}grid\\ alignment} & \shortstack[l]{\rule{0pt}{2.4ex}bilinear resize\\ / --} & \shortstack[l]{\rule{0pt}{2.4ex}resize to encoder\\ prior grid} & \shortstack[l]{\rule{0pt}{2.4ex}$B\times256\times$\\ $\frac{H}{8}\times \frac{W}{8}$} \\
\hline
\shortstack[l]{\rule{0pt}{2.4ex}adapter} & \shortstack[l]{\rule{0pt}{2.4ex}Conv2D\\ / $s=1$} & \shortstack[l]{\rule{0pt}{2.4ex}$(1\times1\times256)$\\ $\times260$} & \shortstack[l]{\rule{0pt}{2.4ex}$B\times260\times$\\ $\frac{H}{8}\times \frac{W}{8}$} \\
\hline
\end{tabular}
}
\renewcommand{\arraystretch}{1.0}
\end{table}

\begin{table}[t]
\caption{The proposed Prior Estimator configuration.}
\label{tab:prior_estimator_config}
\centering
\scriptsize
\renewcommand{\arraystretch}{1.4}
\setlength{\tabcolsep}{3.0pt}
\resizebox{0.97\columnwidth}{!}{%
\begin{tabular}{|p{0.16\columnwidth}|p{0.22\columnwidth}|p{0.33\columnwidth}|p{0.21\columnwidth}|}
\hline
Block / Module & Layer Type / Stride & (Filter Shape) $\times$ Filters & Output Shape \\
\hline
\shortstack[l]{\rule{0pt}{2.4ex}input} & \shortstack[l]{\rule{0pt}{2.4ex}decoder feature\\ / --} & \shortstack[l]{\rule{0pt}{2.4ex}decoder feature map at\\ $\frac{H}{8}\times\frac{W}{8}$, 192 channels} & \shortstack[l]{\rule{0pt}{2.4ex}$B\times192\times$\\ $\frac{H}{8}\times \frac{W}{8}$} \\
 \hline
\shortstack[l]{\rule{0pt}{2.4ex}block 0} & \shortstack[l]{\rule{0pt}{2.4ex}Conv2D\\ / $s=1$} & \shortstack[l]{\rule{0pt}{2.4ex}$(3\times3\times192)$\\ $\times384$} & \shortstack[l]{\rule{0pt}{2.4ex}$B\times384\times$\\ $\frac{H}{8}\times \frac{W}{8}$} \\
\hline
\shortstack[l]{\rule{0pt}{2.4ex}blocks 1--8} & \shortstack[l]{\rule{0pt}{2.4ex}ResBlock $\times 8$\\ / $s=1$} & \shortstack[l]{\rule{0pt}{2.4ex}per block:\\ GN(32)$+$SiLU$+$Conv2D\\ $(3\times3\times384)\times384$\\ GN(32)$+$SiLU$+$Conv2D\\ $(3\times3\times384)\times384$\\ residual addition} & \shortstack[l]{\rule{0pt}{2.4ex}$B\times384\times$\\ $\frac{H}{8}\times \frac{W}{8}$} \\
\hline
\shortstack[l]{\rule{0pt}{2.4ex}output head} & \shortstack[l]{\rule{0pt}{2.4ex}Conv2D\\ / $s=1$} & \shortstack[l]{\rule{0pt}{2.4ex}$(3\times3\times384)$\\ $\times256$} & \shortstack[l]{\rule{0pt}{2.4ex}$B\times256\times$\\ $\frac{H}{8}\times \frac{W}{8}$} \\
\hline
\shortstack[l]{\rule{0pt}{2.4ex}skip head} & \shortstack[l]{\rule{0pt}{2.4ex}Conv2D\\ / $s=1$} & \shortstack[l]{\rule{0pt}{2.4ex}$(1\times1\times384)$\\ $\times256$} & \shortstack[l]{\rule{0pt}{2.4ex}$B\times256\times$\\ $\frac{H}{8}\times \frac{W}{8}$} \\
\hline
\shortstack[l]{\rule{0pt}{2.4ex}output fused\\ prior} & \shortstack[l]{\rule{0pt}{2.4ex}element-wise add\\ / --} & \shortstack[l]{\rule{0pt}{2.4ex}output head\\ $+$ skip head} & \shortstack[l]{\rule{0pt}{2.4ex}$B\times256\times$\\ $\frac{H}{8}\times \frac{W}{8}$} \\
\hline
\end{tabular}
}
\renewcommand{\arraystretch}{1.0}

\vspace{0.2em}
\footnotesize{\textit{GN(32)} denotes Group Normalization with 32 groups \cite{wu2018group}, and \textit{SiLU} denotes the Sigmoid Linear Unit activation~\cite{elfwing2018silu}.}
\end{table}

\subsection{Encoder-Side Adaptive Fused-Prior Transfer}
\label{sec:encoder_transfer}

Encoder-side adaptive fused-prior transfer begins with the frozen AdaCode prior extractor shown in Fig.~\ref{fig:adacode_prior_extractor}. Given an input image $\mathbf{x}$, the frozen AdaCode encoder extracts a latent representation \(\mathbf{z} = E_{\mathrm{Ada}}(\mathbf{x})\), where \(E_{\mathrm{Ada}}(\cdot)\) denotes the pretrained AdaCode encoder. AdaCode is trained to accommodate diverse image content through multiple basis codebooks. In the original AdaCode model \cite{liu2023adacode}, fine-grained semantic labels are further merged into five coarse super-classes for basis-codebook diversification during pretraining. Table~\ref{tab:adacode_groups} lists these pretrained super-classes and their representative content. These groups are not imposed as hard constraints at inference time; they expose the codebooks to different structural and textural patterns. AFP-GIC inherits this pretrained prior space and transfers it to the compression model.

\begin{table}[t]
\caption{Five coarse semantic super-classes adopted in AdaCode \cite{liu2023adacode} to diversify the basis codebooks during pretraining.}
\label{tab:adacode_groups}
\centering
\small
\begin{tabular}{ll}
\hline
Super-class & Representative content \\
\hline
Architectures & buildings, facades, structural layouts \\
Indoor objects & furniture, appliances, indoor object regions \\
Natural scenes & vegetation, mountains, water, sky \\
Street views & roads, vehicles, urban outdoor scenes \\
Portraits & faces, hair, skin, person-centered regions \\
\hline
\end{tabular}
\end{table}

Although these super-classes are introduced in AdaCode for basis-codebook diversification, our codec does not rely on explicit category labels at coding time. Instead, AFP-GIC transfers the fused prior extracted by the frozen AdaCode model and adapts its influence through the encoder and decoder pathways described below. Instead of quantizing \(\mathbf{z}\) with a single codebook, AdaCode uses \(K\) codebooks. For the \(i\)-th codebook \(\mathcal{C}_{i}=\{\mathbf{c}_{i,m}\}_{m=1}^{M_i}\), quantization is defined as
\begin{equation}
\mathbf{q}_{i}(u,v)=\arg\min_{\mathbf{c}\in\mathcal{C}_{i}}
\left\|\mathbf{z}(u,v)-\mathbf{c}\right\|_{2},
\end{equation}
where \((u,v)\) indexes the spatial location in the latent map. Equivalently, the quantization operator \(Q_i(\cdot)\) produces the \(i\)-th quantized representation
\begin{equation}
\mathbf{q}_{i} = Q_{i}(\mathbf{z}), \qquad i=1,\ldots,K,
\end{equation}
where \(Q_{i}(\cdot)\) denotes quantization with the \(i\)-th codebook. In parallel, a weight predictor estimates spatially varying fusion-weight maps \(\{\alphabf_{1},\alphabf_{2},\ldots,\alphabf_{K}\}\), where \(\alphabf_{i}(u,v)\) denotes the fusion weight assigned to the \(i\)-th codebook at spatial location \((u,v)\), and \(\sum_{i=1}^{K}\alphabf_{i}(u,v)=1\) for each \((u,v)\). The adaptive fused prior is then written as
\begin{equation}
\mathbf{p}(u,v) = \sum_{i=1}^{K}\alphabf_{i}(u,v)\mathbf{q}_{i}(u,v),
\label{eq:fused_prior}
\end{equation}
where \(\mathbf{p}\) denotes the ground-truth fused prior extracted from the input image by the frozen AdaCode prior extractor.

At each spatial location, a single codebook restricts the prior to one selected entry. By predicting fusion weights and combining multiple codebook branches, the weight predictor forms a continuous fused representation that can cover a broader set of local structural and textural patterns.

For the compression task studied here, this construction replaces a fixed single-codebook prior with an image-adaptive prior \(\mathbf{p}\) assembled from multiple codebooks. In AFP-GIC, the ground-truth fused prior \(\mathbf{p}\) is not transmitted to the decoder. Instead, it is used as encoder-side guidance and passed through the prior feature adapter before entering the compression pathway. The encoder is thus encouraged to produce latent variables that remain compatible with the adaptive fused-prior space defined by the frozen AdaCode model.

The prior feature adapter maps the ground-truth fused prior into a feature space suitable for the compression encoder, i.e., \(\mathbf{f}_{p}=A_{\phi}(\mathbf{p})\), where \(A_{\phi}(\cdot)\) denotes the prior feature adapter and \(\mathbf{f}_{p}\) is the adapted prior feature. Its function is to align the fused prior with the feature space used by the compression backbone through spatial resizing and channel projection. The encoder-side latent representation can then be written as
\begin{equation}
\mathbf{y}=E_{\theta}(\mathbf{x},\mathbf{f}_{p},\beta_{\mathrm{rate}},\beta_{\mathrm{prior}}),
\label{eq:encoder_with_prior}
\end{equation}
where \(E_{\theta}(\cdot)\) denotes the compression encoder, and \(\beta_{\mathrm{rate}}\) and \(\beta_{\mathrm{prior}}\) are the two control variables. In this formulation, the fused prior is first adapted to the encoder feature space and then used to guide latent formation.

\subsection{Decoder-Side Prior Prediction and Guided Reconstruction}
\label{sec:decoder_prediction}

The decoder operates under stricter informational constraints. During reconstruction, it lacks access to the original image and therefore cannot directly use the ground-truth fused prior \(\mathbf{p}\) in \EqLink{eq:fused_prior}. Instead, it must infer a compatible prior representation from the quantized latent \(\hat{\mathbf{y}}\). AFP-GIC therefore introduces a prior estimator that predicts a decoder-side adaptive fused prior
\begin{equation}
\hat{\mathbf{p}} = P_{\psi}(\hat{\mathbf{y}},\beta_{\mathrm{rate}},\beta_{\mathrm{prior}}),
\label{eq:prior_estimator}
\end{equation}
where \(P_{\psi}(\cdot)\) denotes the prior estimator and \(\hat{\mathbf{p}}\) is the predicted fused prior. The estimator produces a prior representation in the AdaCode-compatible feature space, which is then used as the decoder-side prior for reconstruction.

In the residual blocks of the prior estimator (Table~\ref{tab:prior_estimator_config}, blocks 1--8), each block adopts Group Normalization~\cite{wu2018group} and SiLU~\cite{elfwing2018silu} activation. For a given sample, with the batch index omitted, Group Normalization is written as
\begin{equation}
\bigl[\mathrm{GN}(\mathbf{h})\bigr]_{c,u,v}
=
\gamma_c
\frac{h_{c,u,v}-\mu_{g(c)}}{\sqrt{\sigma_{g(c)}^2+\epsilon}}
+\kappa_c,
\label{eq:group_norm}
\end{equation}
where \(g(c)\) denotes the group containing channel \(c\), and \(\mu_{g(c)}\) and \(\sigma_{g(c)}^2\) are the mean and variance computed over all channels in that group and all spatial locations. The learnable channel-wise affine parameters \(\gamma_c\) and \(\kappa_c\) are shared across spatial locations. Here, \(\mathrm{GN}(32)\) indicates that the channels are divided into 32 groups. The SiLU activation used after normalization is defined as \(\mathrm{SiLU}(z)=z\,\sigma(z)\), where \(\sigma(\cdot)\) denotes the sigmoid function.

Once \(\hat{\mathbf{p}}\) is obtained, the decoder still requires controllable feature modulation so that reconstruction can respond to the selected control pair. To this end, an SFT extractor generates decoder conditioning features \(\mathbf{f}_{\mathrm{sft}} = S_{\nu}(\hat{\mathbf{y}},\beta_{\mathrm{rate}},\beta_{\mathrm{prior}})\), where \(S_{\nu}(\cdot)\) denotes the Spatial Feature Transform (SFT) extractor \cite{wang2018sft} and \(\mathbf{f}_{\mathrm{sft}}\) denotes the resulting modulation features. In SFT-based conditioning, an intermediate decoder feature \(\mathbf{h}\) is modulated by spatially varying affine parameters,
\begin{equation}
\mathrm{SFT}(\mathbf{h}\,|\,\boldsymbol{\gamma},\boldsymbol{\delta})
= \boldsymbol{\gamma}\odot\mathbf{h}+\boldsymbol{\delta},
\label{eq:sft}
\end{equation}
where \(\boldsymbol{\gamma}\) and \(\boldsymbol{\delta}\) are learned scale and shift terms, and \(\odot\) denotes element-wise multiplication. In AFP-GIC, these modulation parameters are generated from \(\mathbf{f}_{\mathrm{sft}}\), while the predicted fused prior \(\hat{\mathbf{p}}\) is fed to the frozen AdaCode decoder as the decoder input prior.

The final reconstruction is then written as
\begin{equation}
\hat{\mathbf{x}} = D_{\mathrm{Ada}}(\hat{\mathbf{p}},\mathbf{f}_{\mathrm{sft}}),
\label{eq:decoder_recon}
\end{equation}
where \(D_{\mathrm{Ada}}(\cdot)\) denotes the frozen AdaCode decoder. \EqRangeLink{eq:prior_estimator}{eq:decoder_recon} summarize the decoder-side reconstruction process: the Prior Estimator predicts \(\hat{\mathbf{p}}\) from \(\hat{\mathbf{y}}\), while a separate SFT extractor produces \(\mathbf{f}_{\mathrm{sft}}\); both are supplied to the frozen AdaCode decoder.

Because the fused prior itself is not transmitted, this decoder-side construction keeps the entropy-coded bitstream based on the compressed latents and compact header information while still allowing reconstruction to use a pretrained adaptive fused-prior space.

\subsection{Design Considerations of Prior Transfer}
\label{sec:design_considerations}

The encoder-side and decoder-side prior modules are deliberately asymmetric. At the encoder, the fused prior is extracted from the original image itself, so the main requirement is feature alignment rather than prior generation. Here, feature alignment refers to resizing the fused prior to the encoder feature grid and projecting it into the encoder channel space. For this reason, the Prior Feature Adapter is kept lightweight: it resizes the fused prior to the encoder feature grid and applies a channel projection before injection into the compression backbone. The lightweight adapter is therefore chosen to target the observed scale and channel mismatch while keeping the encoder-side prior-transfer module small.

The decoder faces a different problem. Because the original image and the encoder-side ground-truth fused prior \(\mathbf{p}\) are unavailable at the decoder, the prior estimator must infer a compatible prior representation from the reconstructed compressed representation and the selected control variables. AFP-GIC therefore uses a dedicated Prior Estimator. The residual blocks of the prior estimator (Table~\ref{tab:prior_estimator_config}, blocks 1--8) refine the decoder feature into a prior-compatible representation, while the output skip branch preserves coarse information from the projected input feature.

This asymmetry motivates the prior-consistency term, while the lightweight decoder is chosen to keep the prior-guided reconstruction path efficient. Because the ground-truth fused prior \(\mathbf{p}\) and the decoder-side predicted fused prior \(\hat{\mathbf{p}}\) stem from disparate information conditions, they must be explicitly aligned to remain compatible with the shared, frozen generative decoder. Enforcing this consistency reduces the mismatch between encoder-side guidance and decoder-side prediction. The continuous fused-prior pathway also separates prior alignment from hard single-codebook selection, allowing the prior estimator to regress a fused feature representation rather than select a single discrete branch. This pathway is implemented with a fully convolutional decoder. Compared with the Transformer-based decoder used in DC-VIC, this lightweight design reduces parameters and decoding latency in the benchmark reported in Section~\ref{sec:complexity}.

\subsection{Dual-Control Formulation}
\label{sec:dual_control_formulation}

AFP-GIC is controlled by two variables, \(\beta_{\mathrm{rate}}\) and \(\beta_{\mathrm{prior}}\). The first governs bitrate-oriented behavior, whereas the second governs prior-oriented reconstruction behavior associated with adaptive fused-prior prediction and guided decoding. Using two controls is appropriate because these effects are distinct: adjusting the overall coding rate is not strictly equivalent to scaling the contribution of the prior-guided reconstruction pathway.

The two variables are not used directly as raw scalars. Instead, each variable is first mapped to a Fourier-based embedding \cite{tancik2020fourier}, i.e., \(\mathbf{e}_{r}=\Phi_{r}(\beta_{\mathrm{rate}})\) and \(\mathbf{e}_{p}=\Phi_{p}(\beta_{\mathrm{prior}})\), and the two embeddings are then combined through a lightweight multi-layer perceptron (MLP):
\begin{equation}
\mathbf{e}_{\beta}=M_{\eta}\bigl([\mathbf{e}_{r},\mathbf{e}_{p}]\bigr),
\label{eq:beta_embedding}
\end{equation}
where \(\Phi_{r}(\cdot)\) and \(\Phi_{p}(\cdot)\) denote the two Fourier feature mappings, \([\cdot,\cdot]\) denotes concatenation, and \(M_{\eta}(\cdot)\) denotes the embedding MLP. The resulting control feature \(\mathbf{e}_{\beta}\) modulates both encoder-side and decoder-side processing. Conditioning is therefore represented through learned feature embeddings rather than fixed scalar gating.

The two control variables influence both encoding and decoding. At the encoder, they act together with the adapted prior feature in \EqLink{eq:encoder_with_prior}, ensuring that latent formation reflects both bitrate preference and prior-oriented guidance. At the decoder, the same pair conditions prior prediction in \EqLink{eq:prior_estimator} and the SFT extractor that produces the modulation features used in \EqLink{eq:sft}. Bitrate control and prior control therefore act jointly on the codec, rather than being separated into distinct encoder-side and decoder-side roles.

\subsection{Quantization and Entropy Coding}
\label{sec:quant_entropy}

Except for the proposed prior-transfer modules, AFP-GIC follows a standard hyperprior-based quantization and entropy-coding pipeline \cite{balle2018hyperprior,minnen2018joint,he2022elic}. Given an input image \(\mathbf{x}\), the compression encoder produces a latent representation \(\mathbf{y}=E_{\theta}(\mathbf{x},\cdots)\), and a hyper-encoder extracts the corresponding hyper-latent
\begin{equation}
\mathbf{z}_{h}=H_{\mathrm{enc}}(\mathbf{y}),
\label{eq:hyperlatent}
\end{equation}
where \(\mathbf{z}_{h}\) denotes the hyper-latent and \(H_{\mathrm{enc}}(\cdot)\) denotes the hyper-encoder. During training, hard rounding is approximated by a straight-through estimator \cite{bengio2013estimating}, while at test time both \(\mathbf{z}_{h}\) and \(\mathbf{y}\) are discretized by rounding and encoded with arithmetic coding. Here, \(\mathbf{z}_{h}\) is modeled by an entropy bottleneck \cite{balle2018hyperprior,minnen2018joint}, whereas \(\mathbf{y}\) is modeled slice-wise by a conditional Gaussian distribution whose parameters are predicted from the decoded hyperprior and previously reconstructed slices \cite{minnen2018joint,he2022elic}.

For the hyper-latent, the quantized value is written as
\begin{equation}
\hat{\mathbf{z}}_{h}
=
\operatorname{round}\!\left(\mathbf{z}_{h}-\mathbf{m}_{z}\right)
+ \mathbf{m}_{z},
\label{eq:z_quant}
\end{equation}
where \(\mathbf{m}_{z}\) denotes the learned median used by the entropy bottleneck. The main latent \(\mathbf{y}\) is divided channel-wise into \(S\) slices, \(\mathbf{y}=\{\mathbf{y}^{(s)}\}_{s=1}^{S}\). For the \(s\)-th slice, the hyper-decoder and slice-wise context transforms predict a mean \(\boldsymbol{\mu}^{(s)}\) and a scale \(\boldsymbol{\sigma}^{(s)}\), and quantization is expressed as
\begin{equation}
\hat{\mathbf{y}}^{(s)}
=
\operatorname{round}\!\left(\mathbf{y}^{(s)}-\boldsymbol{\mu}^{(s)}\right)
+ \boldsymbol{\mu}^{(s)}.
\label{eq:y_quant}
\end{equation}

For entropy coding, \(\hat{\mathbf{z}}_{h}\) is modeled by the entropy bottleneck, while \(\hat{\mathbf{y}}\) is modeled conditionally. After decoding \(\hat{\mathbf{z}}_{h}\), the hyper-decoder \(H_{\mathrm{dec}}(\cdot)\) produces hyper features
\begin{equation}
\mathbf{h}_{z}=H_{\mathrm{dec}}(\hat{\mathbf{z}}_{h}),
\label{eq:hyperdecode}
\end{equation}
from which the slice-wise mean and scale parameters are estimated recursively,
\begin{equation}
\bigl(\boldsymbol{\mu}^{(s)},\boldsymbol{\sigma}^{(s)}\bigr)
=
G_{s}\!\left(\mathbf{h}_{z},\hat{\mathbf{y}}^{(<s)}\right),
\label{eq:gaussian_params}
\end{equation}
where \(G_s(\cdot)\) denotes the context-dependent parameter predictor and \(\hat{\mathbf{y}}^{(<s)}\) collects the previously reconstructed slices. As in prior LIC models, the discrete symbol probabilities are obtained by integrating the corresponding continuous densities over unit-width quantization bins \cite{balle2018hyperprior,minnen2018joint,he2022elic,mentzer2018conditional}. The bitrate term used in optimization is then computed from the negative log-likelihoods of the quantized hyper-latent and main latent,
\begin{equation}
R=
\frac{
\displaystyle -\sum_{i}\log_{2} p(\hat{z}_{h,i})
\displaystyle -\sum_{s=1}^{S}\sum_{i}
\log_{2} p\!\left(\hat{y}^{(s)}_{i}\mid\hat{\mathbf{z}}_{h},\hat{\mathbf{y}}^{(<s)}\right)
}{HW}.
\label{eq:total_rate}
\end{equation}
where \(H\) and \(W\) are the image height and width. The selected control pair is conveyed through a compact header index, which is excluded from the training-time rate term \(R\). The adaptive fused-prior mechanism proposed in this paper is built on top of this quantization and entropy-coding pipeline rather than replacing it.

\subsection{Objective Functions and Discriminator}
\label{sec:loss_and_discriminator}

During non-adversarial dual-conditioned training, the same two variables also determine the relative emphasis of the rate term and the prior-consistency term. We parameterize the corresponding weights exponentially,
\begin{equation}
w_{r}=\exp(\beta_{\mathrm{rate}}), \qquad
w_{p}=\exp(\beta_{\mathrm{prior}}).
\label{eq:beta_weights}
\end{equation}
The exponential parameterization has two practical consequences. First, \(w_r\) and \(w_p\) remain strictly positive for all control values, so the corresponding loss terms retain their intended roles. Second, the emphasis assigned to the two terms changes smoothly and monotonically as \(\beta_{\mathrm{rate}}\) or \(\beta_{\mathrm{prior}}\) varies. These weights rescale the rate-related and prior-related terms during training, so that the objective changes consistently with the selected control variables.

The non-adversarial generator objective of AFP-GIC contains four components: a rate term, an image distortion term, a perceptual term, and a prior-consistency term. Let \(R\) denote the bitrate estimated from the entropy model, let \(D(\mathbf{x},\hat{\mathbf{x}})\) denote the distortion between the input image and the reconstruction, let \(P(\mathbf{x},\hat{\mathbf{x}})\) denote the perceptual loss, and let \(\mathcal{L}_{\mathrm{prior}}=\left\|\hat{\mathbf{p}}-\mathbf{p}\right\|_2^2\) denote the prior-consistency term that keeps the predicted fused prior close to the ground-truth fused prior. With the dual-control weights in \EqLink{eq:beta_weights}, the generator objective used during non-adversarial training is written as
\begin{equation}
\mathcal{L}_{G}^{\mathrm{base}}
=
w_r \lambda_R R
 + \lambda_D D(\mathbf{x},\hat{\mathbf{x}})
 + \lambda_P P(\mathbf{x},\hat{\mathbf{x}})
 + w_p \lambda_{\mathrm{prior}} \mathcal{L}_{\mathrm{prior}},
\label{eq:base_generator_loss}
\end{equation}
where \(\lambda_R\), \(\lambda_D\), \(\lambda_P\), and \(\lambda_{\mathrm{prior}}\) are scalar coefficients. Here, \(D(\cdot,\cdot)\) denotes the pixel-domain mean squared error (MSE) distortion term, and \(P(\cdot,\cdot)\) is the Learned Perceptual Image Patch Similarity (LPIPS)-based perceptual term computed with the AlexNet backbone. For consistency, the reported LPIPS metric in the experiments is also computed with the AlexNet backbone. In \EquationLink{eq:base_generator_loss}, \(\beta_{\mathrm{rate}}\) adjusts the emphasis on coding cost through \(w_r\), whereas \(\beta_{\mathrm{prior}}\) adjusts the emphasis on prior alignment through \(w_p\).

After the initial non-adversarial period, adversarial supervision is introduced to improve perceptual realism at low bitrates. During adversarial refinement, the compression encoder and entropy models are fixed, and the rate term is omitted. The overall generator objective during adversarial refinement is
\begin{equation}
\mathcal{L}_{G}^{\mathrm{adv}}
=
\lambda_D D(\mathbf{x},\hat{\mathbf{x}})
+\lambda_P P(\mathbf{x},\hat{\mathbf{x}})
+\lambda_{\mathrm{adv}}\mathcal{L}_{\mathrm{adv}}
+\lambda_{\mathrm{prior}}\mathcal{L}_{\mathrm{prior}},
\label{eq:adversarial_generator_loss}
\end{equation}
where the generator-side adversarial loss is
\begin{equation}
\mathcal{L}_{\mathrm{adv}}
=
-\left\langle \log \sigma\!\left(D_{\xi}(\hat{\mathbf{x}};\beta_{\mathrm{rate}},\beta_{\mathrm{prior}})\right)\right\rangle,
\label{eq:adv_loss}
\end{equation}
where \(\sigma(\cdot)\) denotes the sigmoid function and \(\langle\cdot\rangle\) denotes averaging over the PatchGAN outputs. For adversarial training, we use a conditional PatchGAN discriminator \cite{isola2017pix2pix}. Instead of assigning one global real/fake score to the whole image, a patch discriminator evaluates local image regions, matching the local nature of many low-bitrate reconstruction artifacts, texture inconsistencies, and unnatural high-frequency patterns. The discriminator is conditioned on the selected control pair, so we write it as \(D_{\xi}(\cdot;\beta_{\mathrm{rate}},\beta_{\mathrm{prior}})\). Its objective is
\begin{equation}
\begin{aligned}
\mathcal{L}_{D}
&=
-\frac{1}{2}\left\langle \log \sigma\!\left(D_{\xi}(\mathbf{x};\beta_{\mathrm{rate}},\beta_{\mathrm{prior}})\right)\right\rangle
\\
&\quad
-\frac{1}{2}\left\langle \log \left(1-\sigma\!\left(D_{\xi}(\hat{\mathbf{x}};\beta_{\mathrm{rate}},\beta_{\mathrm{prior}})\right)\right)\right\rangle,
\end{aligned}
\label{eq:disc_loss}
\end{equation}
The discriminator is introduced only after the initial non-adversarial period, once the codec has already learned stable controllable compression and prior prediction. This ordering allows the model to first establish a compression-reconstruction mapping before adversarial refinement is applied, reducing the risk of destabilizing the preceding training.

\subsection{Training Strategy}
\label{sec:training_strategy}

Directly optimizing AFP-GIC for a single, fixed operating point would undermine the flexibility required for controllable compression. Following the general controllable-training strategy used in DC-VIC \cite{iwai2024dcvic}, the model is first trained over sampled control pairs, then a small set of bitrate-specific operating points is selected on a validation set, and finally the model is fine-tuned on the selected operating points. In AFP-GIC, this schedule is coupled with adaptive fused-prior transfer and decoder-side prior alignment.

\vspace{0.5em}
\hypertarget{stage-training-i}{}%
\noindent\textbf{Stage I:} DUAL-CONDITIONED TRAINING WITH SAMPLED CONTROL PAIRS

In Stage~I, the model is trained under sampled pairs \((\beta_{\mathrm{rate}},\beta_{\mathrm{prior}})\), with the two variables uniformly drawn from discrete grids over \([0,\beta_{\mathrm{rate,max}}]\) and \([0,\beta_{\mathrm{prior,max}}]\). This exposes the model to a wide range of bitrate-prior trade-offs instead of concentrating training on one pair. Stage~I begins with a \(1.0\)M-iteration non-adversarial period optimized with the base objective in \EqLink{eq:base_generator_loss}, followed by a \(500\)K adversarial-refinement segment optimized with the adversarial objective in \EqLink{eq:adversarial_generator_loss}.

\vspace{0.5em}
\hypertarget{stage-training-ii}{}%
\noindent\textbf{Stage II:} VALIDATION-BASED BETA SELECTION

After Stage~I, the model can respond to many control pairs, but evaluation still requires a small number of concrete bitrate-specific operating points. In the following, an operating point refers to a selected control pair \( (\beta_{\mathrm{rate}},\beta_{\mathrm{prior}}) \) and its realized average bpp on the corresponding evaluation set. Stage~II selects one beta pair for each target bitrate. Let \(r_t\) denote a target bitrate. The selection procedure is as follows:

\begin{enumerate}
\item We first define a candidate set of prior-oriented control values \(\{\beta_{\mathrm{prior}}^{(1)},\beta_{\mathrm{prior}}^{(2)},\ldots,\beta_{\mathrm{prior}}^{(M)}\}\).

\item In Stage~II, \(\beta_{\mathrm{prior}}\) is evaluated on a uniform grid over \([0.25, 3.5]\) with interval \(0.25\). For each grid point and target bitrate, \(\beta_{\mathrm{rate}}\) is selected by binary search to match the target validation bitrate as closely as possible. More precisely, for each candidate \(\beta_{\mathrm{prior}}^{(k)}\), we search for a corresponding \(\beta_{\mathrm{rate}}\) whose average validation bitrate is closest to the target bitrate \(r_t\):
\begin{equation}
\beta_{\mathrm{rate}}^{\star}(\beta_{\mathrm{prior}},r_t)
=
\arg\min_{\beta_{\mathrm{rate}}}
\left|
\bar{R}(\beta_{\mathrm{rate}},\beta_{\mathrm{prior}})-r_t
\right|,
\label{eq:beta_rate_select}
\end{equation}
where \(\bar{R}(\cdot)\) denotes the average validation bitrate. This step yields a set of candidate pairs whose achieved bitrate is close to \(r_t\).

\item For each candidate pair, we generate reconstructions on the validation set and rank them using a composite score that balances reference fidelity and distribution-level perceptual realism measured by the Fr\'{e}chet Inception Distance (FID). The ranking score is
\begin{equation}
J_{\mathrm{sel}} = \alpha \cdot \mathrm{PSNR} - \mathrm{FID},
\label{eq:beta_score}
\end{equation}
where a larger \(J_{\mathrm{sel}}\) indicates a better validation-time trade-off between reference fidelity and perceptual realism. In practice, this score is used only to rank a small number of candidate beta pairs on the validation set: PSNR keeps the selected pair near the desired reference-fidelity level, while FID serves as a coarse realism-oriented screening signal. FID is therefore used only for validation-time beta-pair selection and is not part of the training objective itself. Here, this screening FID is computed on the same 2000-image Open Images validation subset using the local patch-based pipeline: following HiFiC \cite{mentzer2020hific}, we extract \(256\times256\) patches from both real and reconstructed images, add a half-patch spatial shift, and compute FID on the resulting patch sets.

\item The selected beta pair for \(r_t\) is the candidate with the highest validation score under \EqLink{eq:beta_score}.
\end{enumerate}

Repeating this procedure for all target bitrates yields a compact set of bitrate-specific operating points used in Stage~III and in the main experiments. We do not treat FID as a primary reporting metric because it is a set-level statistic, is sensitive to sample size and evaluation protocol, and is less suitable for judging image-wise fidelity in controllable low-bitrate reconstruction.

\vspace{0.5em}
\hypertarget{stage-training-iii}{}%
\noindent\textbf{Stage III:} SELECTED-PAIR FINE-TUNING

Stage~III no longer samples from the full beta space. Instead, training is restricted to the compact selected-pair set obtained from Stage~II, where \(\mathcal{B}_{\mathrm{sel}}\) denotes the set of selected beta pairs. Each iteration draws a pair uniformly from \(\mathcal{B}_{\mathrm{sel}}\). In Stage~III, training continues with the adversarial objective in \EqLink{eq:adversarial_generator_loss}, retaining adversarial and prior-consistency supervision while the rate term remains omitted.

\subsection{Motivating Analysis}
\label{sec:theoretical_justification}

Compared with controllable codecs that rely on a single-codebook prior, such as DC-VIC, AFP-GIC adopts adaptive fused-prior transfer. The following analysis supports two structural points relevant to our design: first, the reconstruction-error upper bound decreases when the decoder-side prior is better aligned with the ideal prior for the current sample; second, the adaptive fused-prior family contains the single-branch family and is strictly richer when the branch features are non-degenerate.

\begin{proposition}[Prior alignment bound]
\label{prop:prior_alignment_bound}
Let the reconstruction be written as \(\hat{\mathbf{x}} = f(\hat{\mathbf{y}},\hat{\mathbf{p}})\), where \(\hat{\mathbf{y}}\) is the quantized compression latent and \(\hat{\mathbf{p}}\) is the predicted fused prior supplied to the decoder. Let \(\mathbf{p}^{\star}\) denote an ideal prior for reconstructing the current image from \(\hat{\mathbf{y}}\). For fixed \(\hat{\mathbf{y}}\), assume that the decoder is Lipschitz continuous with respect to its prior input. That is, for any two decoder-side prior inputs \(\mathbf{q}_1\) and \(\mathbf{q}_2\),
\begin{equation}
\|f(\hat{\mathbf{y}},\mathbf{q}_{1})-f(\hat{\mathbf{y}},\mathbf{q}_{2})\|
\le
L\|\mathbf{q}_{1}-\mathbf{q}_{2}\|,
\label{eq:lipschitz_prior}
\end{equation}
for some constant \(L>0\). Then
\begin{equation}
\|f(\hat{\mathbf{y}},\hat{\mathbf{p}})-\mathbf{x}\|^{2}
\le
2L^{2}\|\hat{\mathbf{p}}-\mathbf{p}^{\star}\|^{2}
+
2\|f(\hat{\mathbf{y}},\mathbf{p}^{\star})-\mathbf{x}\|^{2}.
\label{eq:prior_bound}
\end{equation}
\end{proposition}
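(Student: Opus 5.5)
The plan is to insert the reconstruction obtained with the ideal prior as an intermediate point and split the error with the triangle inequality. Concretely, I will write
\[
f(\hat{\mathbf{y}},\hat{\mathbf{p}})-\mathbf{x}
=
\bigl[f(\hat{\mathbf{y}},\hat{\mathbf{p}})-f(\hat{\mathbf{y}},\mathbf{p}^{\star})\bigr]
+
\bigl[f(\hat{\mathbf{y}},\mathbf{p}^{\star})-\mathbf{x}\bigr],
\]
so that the first bracket isolates the prior-misalignment effect and the second is the residual error under the ideal prior. By the triangle inequality,
\[
\|f(\hat{\mathbf{y}},\hat{\mathbf{p}})-\mathbf{x}\|\le a+b,
\]
where $a=\|f(\hat{\mathbf{y}},\hat{\mathbf{p}})-f(\hat{\mathbf{y}},\mathbf{p}^{\star})\|$ and $b=\|f(\hat{\mathbf{y}},\mathbf{p}^{\star})-\mathbf{x}\|$.

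Next I will square both sides and apply the elementary inequality $(a+b)^2\le 2a^2+2b^2$. This follows from $2ab\le a^2+b^2$, or equivalently from $(a-b)^2\ge 0$. Squaring is legitimate because both sides of the previous inequality are nonnegative.

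Then I will bound $a$ using the Lipschitz assumption \EqLink{eq:lipschitz_prior} with $\mathbf{q}_1=\hat{\mathbf{p}}$ and $\mathbf{q}_2=\mathbf{p}^{\star}$. This gives $a\le L\|\hat{\mathbf{p}}-\mathbf{p}^{\star}\|$, hence $a^2\le L^2\|\hat{\mathbf{p}}-\mathbf{p}^{\star}\|^2$. Substituting this yields exactly \EqLink{eq:prior_bound}. The assumption is only needed for fixed $\hat{\mathbf{y}}$, which is the setting here, since both reconstructions share the same latent.

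There is no substantive obstacle. The only point needing care is that $\hat{\mathbf{p}}$ and $\mathbf{p}^{\star}$ must both lie in the domain of prior inputs on which the Lipschitz bound holds, and that the norms are the ones used in \EqLink{eq:lipschitz_prior}. I will state this explicitly, along with the remark that the bound does not require $\mathbf{p}^{\star}$ to be unique; any admissible choice works, and the tightest bound is obtained by minimizing over such choices.
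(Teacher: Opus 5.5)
Your proposal is correct and matches the paper's proof: add and subtract $f(\hat{\mathbf{y}},\mathbf{p}^{\star})$, apply the triangle inequality, bound the first term with the Lipschitz assumption, and finish with $(a+b)^2\le 2a^2+2b^2$. The only difference is that you square before applying the Lipschitz bound while the paper squares after, which changes nothing because $a\ge 0$.
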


\begin{proof}
Starting from the reconstruction error, add and subtract \(f(\hat{\mathbf{y}},\mathbf{p}^{\star})\):
\[
f(\hat{\mathbf{y}},\hat{\mathbf{p}})-\mathbf{x}
=
\bigl(f(\hat{\mathbf{y}},\hat{\mathbf{p}})-f(\hat{\mathbf{y}},\mathbf{p}^{\star})\bigr)
+
\bigl(f(\hat{\mathbf{y}},\mathbf{p}^{\star})-\mathbf{x}\bigr).
\]
Applying the triangle inequality gives
\[
\|f(\hat{\mathbf{y}},\hat{\mathbf{p}})-\mathbf{x}\|
\le
\|f(\hat{\mathbf{y}},\hat{\mathbf{p}})-f(\hat{\mathbf{y}},\mathbf{p}^{\star})\|
+
\|f(\hat{\mathbf{y}},\mathbf{p}^{\star})-\mathbf{x}\|.
\]
Applying \EqLink{eq:lipschitz_prior} with \(\mathbf{q}_1=\hat{\mathbf{p}}\) and \(\mathbf{q}_2=\mathbf{p}^{\star}\) yields
\[
\|f(\hat{\mathbf{y}},\hat{\mathbf{p}})-\mathbf{x}\|
\le
L\|\hat{\mathbf{p}}-\mathbf{p}^{\star}\|
+
\|f(\hat{\mathbf{y}},\mathbf{p}^{\star})-\mathbf{x}\|.
\]
Squaring both sides and using \((a+b)^2\le 2a^2+2b^2\) yields \EqLink{eq:prior_bound}.
\end{proof}

\noindent\textit{Interpretation:} \EquationLink{eq:prior_bound} separates the reconstruction error into a prior-mismatch term, \(2L^{2}\|\hat{\mathbf{p}}-\mathbf{p}^{\star}\|^{2}\), and a residual term under the ideal prior, \(2\|f(\hat{\mathbf{y}},\mathbf{p}^{\star})-\mathbf{x}\|^{2}\). Thus, moving the predicted fused prior closer to the ideal prior tightens the bound. Since \(\mathbf{p}^{\star}\) is unobservable, AFP-GIC uses the ground-truth fused prior \(\mathbf{p}\) extracted from the input image as the supervision target for \(\hat{\mathbf{p}}\). The proposition motivates prior alignment but does not assume that \(\mathbf{p}=\mathbf{p}^{\star}\).

\begin{proposition}[Expressive advantage of adaptive fused priors]
\label{prop:fused_prior_expressive_advantage}
At any spatial location \((u,v)\), let \(\mathbf{p}_{k}(u,v)\) denote the prior feature vector produced by the \(k\)-th codebook branch, and let \(\alphabf(u,v)=[\alphabf_{1}(u,v),\ldots,\alphabf_{K}(u,v)]^{\top}\in\Delta^{K-1}\) denote the local fusion-weight vector, where \(\alphabf_{k}(u,v)\) is its \(k\)-th component. The adaptive fused prior at that location is
\begin{equation}
\mathbf{p}_{A}(u,v)=\sum_{k=1}^{K}\alphabf_{k}(u,v)\mathbf{p}_{k}(u,v).
\label{eq:adaptive_fused_prior_family}
\end{equation}
Let \(\mathbf{p}^{\star}(u,v)\) denote an ideal prior feature vector at \((u,v)\), and let \(\Delta^{K-1}=\{\alphabf\in\mathbb{R}^{K}:\alphabf_k\ge 0,\sum_{k=1}^{K}\alphabf_k=1\}\) denote the probability simplex. Then
\begin{equation}
\begin{aligned}
\phantom{\le}\,\min_{\alphabf(u,v)\in\Delta^{K-1}}
\Bigl\|
\mathbf{p}^{\star}(u,v)-\sum_{k=1}^{K}\alphabf_{k}(u,v)\mathbf{p}_{k}(u,v)
\Bigr\|
\\
\le \min_{k}\,\|\mathbf{p}^{\star}(u,v)-\mathbf{p}_{k}(u,v)\|.
\end{aligned}
\raisetag{4pt}
\label{eq:fused_family_dominance}
\end{equation}
\end{proposition}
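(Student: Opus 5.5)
The plan is to prove \EqLink{eq:fused_family_dominance} by noting that the simplex $\Delta^{K-1}$ contains its vertices, so the minimum over the fused family cannot exceed the minimum over single branches. Fix a spatial location $(u,v)$ and suppress it from the notation. For each $j\in\{1,\ldots,K\}$, let $\mathbf{e}_j\in\mathbb{R}^K$ denote the $j$-th standard basis vector. Since $\mathbf{e}_j$ has nonnegative entries summing to one, we have $\mathbf{e}_j\in\Delta^{K-1}$. Substituting $\alphabf=\mathbf{e}_j$ into \EqLink{eq:adaptive_fused_prior_family} gives $\sum_{k}\mathbf{e}_{j,k}\mathbf{p}_k=\mathbf{p}_j$. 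The single-branch prior $\mathbf{p}_j$ is therefore a member of the adaptive fused family, and this establishes the containment claimed before the proposition.

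Next, define $g(\alphabf)=\|\mathbf{p}^{\star}-\sum_k\alphabf_k\mathbf{p}_k\|$ on $\Delta^{K-1}$. The map $\alphabf\mapsto\sum_k\alphabf_k\mathbf{p}_k$ is linear and the norm is continuous, so $g$ is continuous. The set $\Delta^{K-1}$ is closed and bounded in $\mathbb{R}^K$, hence compact. By the extreme value theorem the minimum on the left of \EqLink{eq:fused_family_dominance} is attained, which justifies writing $\min$ rather than $\inf$. This is the only point requiring any care, and it is routine. Because each $\mathbf{e}_j$ lies in the feasible set, $\min_{\alphabf\in\Delta^{K-1}}g(\alphabf)\le g(\mathbf{e}_j)=\|\mathbf{p}^{\star}-\mathbf{p}_j\|$ for every $j$. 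Taking the minimum over $j$ on the right-hand side yields \EqLink{eq:fused_family_dominance}.

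Finally, I would remark on when the inequality is strict, since the surrounding text speaks of a ``strictly richer'' family under non-degenerate branches. If the $\mathbf{p}_k$ are pairwise distinct, the convex hull contains non-vertex points. Strict inequality then holds whenever $\mathbf{p}^{\star}$ is not closest to a vertex of the hull. For example, if $\mathbf{p}^{\star}=\tfrac12(\mathbf{p}_1+\mathbf{p}_2)$ with $\mathbf{p}_1\neq\mathbf{p}_2$ and $\mathbf{p}^{\star}$ is not equal to any $\mathbf{p}_k$, the left side is $0$ while the right side is positive. This is illustrative only, because the proposition as stated asserts just the non-strict inequality. The main obstacle is therefore minimal: the whole argument reduces to the vertices of the simplex being feasible.
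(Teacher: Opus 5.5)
Your proof is correct and follows the paper's argument: each one-hot vector $\mathbf{e}_j$ lies in $\Delta^{K-1}$ and reproduces the single-branch prior $\mathbf{p}_j$, so the minimum over the larger feasible set cannot exceed $\min_k\|\mathbf{p}^{\star}-\mathbf{p}_k\|$. Your compactness argument, which justifies writing $\min$ rather than $\inf$, is a small point of rigor the paper leaves implicit. Your strictness aside is not needed for the stated claim, and its condition should read ``the nearest point of the convex hull is not one of the $\mathbf{p}_k$'' rather than ``not closest to a vertex,'' since a non-vertex $\mathbf{p}_k$ can also give equality.
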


\begin{proof}
At any fixed spatial location \((u,v)\), the single-codebook case is recovered when the fusion weights degenerate to a one-hot vector. Specifically, if for some branch index \(j\) we set \(\alphabf_j(u,v)=1\) and \(\alphabf_k(u,v)=0\) for all \(k\neq j\), then \EquationLink{eq:adaptive_fused_prior_family} reduces to \(\mathbf{p}_A(u,v)=\mathbf{p}_j(u,v)\). Hence every single-codebook prior feature at location \((u,v)\) is a feasible member of the adaptive fused-prior family indexed by \(\Delta^{K-1}\). Taking the minimum approximation error over a larger feasible set cannot yield a worse value than taking the minimum over the subset of one-hot choices, which establishes \EqLink{eq:fused_family_dominance}.
\end{proof}

\noindent\textit{Interpretation:} \EquationLink{eq:fused_family_dominance} is a set-inclusion statement. It does not claim that optimization will always find the globally best fused prior, but it does show that the fused formulation contains all single-branch choices and becomes a richer candidate family when the branch features are non-degenerate. Combined with \EqLink{eq:prior_bound}, this supports the design of AFP-GIC: the fused-prior family enlarges the feasible set beyond one-hot single-branch choices, so a closer approximation to an ideal prior may be representable; when the decoder-side prior is better aligned, the reconstruction-error bound becomes tighter.

\section{Experiments}
\label{sec:experiments}

\begin{figure*}[!t]
\centering
\includegraphics[width=0.95\textwidth]{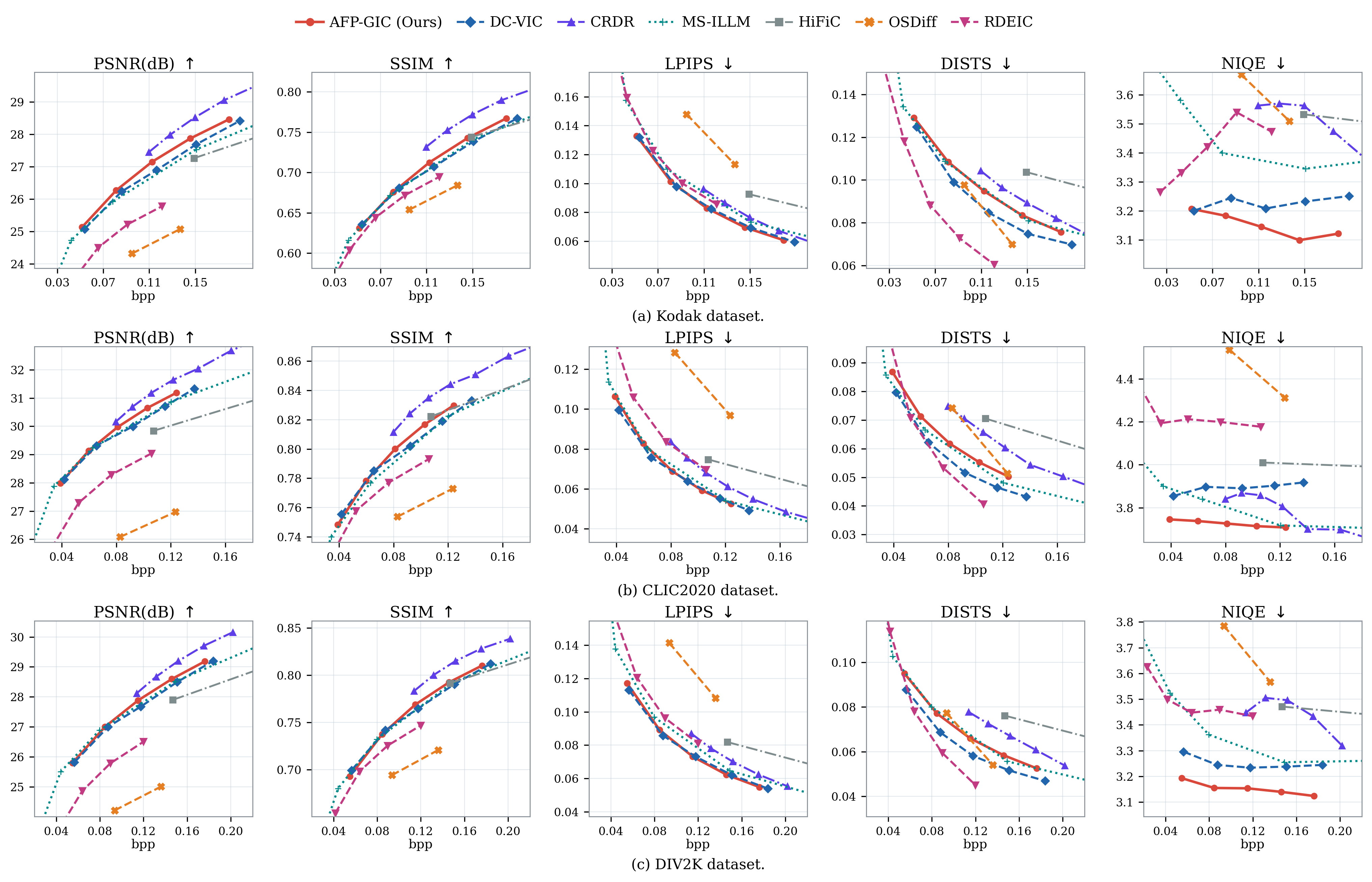}
\caption{Quantitative comparison with learned generative codecs on (a) Kodak, (b) CLIC2020, and (c) DIV2K. The reported metrics are PSNR, SSIM, LPIPS, DISTS, and NIQE. ``\(\uparrow\)'' indicates higher is better; ``\(\downarrow\)'' indicates lower is better.}
\label{fig:quantitative_main}
\end{figure*}

\subsection{Experimental Setup}
\label{sec:experimental_setup}

\noindent\textbf{1) Datasets:}
AFP-GIC is trained on random \(256\times256\) crops sampled from about 1.13 million images in Open Images \cite{kuznetsova2020openimages}. For the validation-time beta-selection in \StageIILink{}, we use a 2000-image subset sampled from the Open Images validation split under the same \(256\times256\) crop setting. Evaluation is conducted on 24 Kodak images \cite{kodakdataset}, 428 CLIC2020 test images \cite{clic2020}, and 100 DIV2K validation images \cite{agustsson2017div2k}.

\vspace{0.5em}
\noindent\textbf{2) Baselines and Comparison Protocol:}
We compare AFP-GIC with two conventional codecs, VVC Intra \cite{bross2021vvc} and BPG \cite{bellard2014bpg}, and six learned generative codecs: DC-VIC \cite{iwai2024dcvic}, CRDR \cite{iwai2024crdr}, MS-ILLM \cite{muckley2023illm}, HiFiC \cite{mentzer2020hific}, OSDiff \cite{jia2025osdiff}, and RDEIC \cite{li2025rdeic}. We use their officially released models, configurations, and code. AFP-GIC is evaluated at the five operating points selected in Section~\ref{sec:training_strategy}. Each point is paired with the closest released baseline bitrate on the same dataset; no baseline is retrained or manually reselected.

For the VVC Intra anchor, we use Fraunhofer's VVC encoder-decoder implementation, with \texttt{vvencapp} (v1.15.0-dev) \cite{VVenC} for encoding and \texttt{vvdecapp} (v3.2.0-dev) \cite{VVdeC} for decoding. In the VVenC version used in our evaluation, YUV 4:2:0 is the only supported input format. For the BPG reference codec, we use the encoder-decoder from the official release \cite{bellard2014bpg} (version 0.9.8) with YUV 4:4:4 chroma sampling, following the default setting used in our evaluation. All decoded images are converted to RGB for metric evaluation.

\vspace{0.5em}
\noindent\textbf{3) Evaluation Metrics:}
We report bits per pixel (bpp), PSNR \cite{wang2009mse}, single-scale structural similarity index measure (SSIM) \cite{wang2004ssim}, LPIPS \cite{zhang2018lpips}, Deep Image Structure and Texture Similarity (DISTS) \cite{ding2022dists}, and Natural Image Quality Evaluator (NIQE) \cite{mittal2013niqe}. PSNR and SSIM measure reference fidelity; LPIPS and DISTS measure full-reference perceptual similarity; NIQE measures no-reference naturalness. Because reference similarity may not fully reflect realism under generative completion \cite{blau2018perceptiondistortion,blau2019rdp}, we interpret these metrics jointly. FID \cite{heusel2017fid} is used for beta-pair selection and reported in Appendix~\ref{app:fid_results}, rather than as a primary metric, because it compares dataset-level distributions and depends on protocol and sample size. The supplementary material provides a taxonomy of the evaluated image-quality metrics, including their strengths and limitations (Section~II and Table~S5), together with SNR and MS-SSIM \cite{wang2003msssim} rate curves, point-wise values, and Bj{\o}ntegaard delta (BD) summaries \cite{bjontegaard2001bdpsnr} at the operating points used in Fig.~\ref{fig:quantitative_main}. BD metrics summarize average curve differences over each method pair's common low-bitrate interval; the same interpolation and integration are used for the five metrics reported in Table~\ref{tab:bd_vs_msillm}.

\vspace{0.5em}
\noindent\textbf{4) Implementation Details:}
Unless otherwise stated, the AdaCode prior extractor and decoder remain frozen throughout training. The trainable components are the compression backbone, prior feature adapter, prior estimator, SFT extractor, and the conditional PatchGAN discriminator. The training batch size is 6. During the initial non-adversarial period of \StageILink{}, which lasts \(1.0\) million iterations, the generator uses Adam \cite{kingma2015adam} with learning rate \(1\times 10^{-4}\), while the entropy-model auxiliary parameters use Adam with learning rate \(1\times 10^{-3}\); gradient clipping is fixed at 1.0. Adversarial refinement is subsequently introduced in two additional phases of 500,000 iterations each: one before \StageIILink{} and the other during \StageIIILink{} selected-pair fine-tuning. In both phases, the generator and discriminator use Adam with learning rate \(1\times 10^{-4}\). For the main AFP-GIC model, the non-adversarial period uses \(\lambda_R = 0.5\), \(\lambda_D = 50\), \(\lambda_P = 1.0\), and \(\lambda_{\mathrm{prior}} = 0.006\). During adversarial refinement, the compression encoder and entropy models are fixed and the rate term is omitted; \(\lambda_D\), \(\lambda_P\), \(\lambda_{\mathrm{adv}}\), and \(\lambda_{\mathrm{prior}}\) are set to \(50\), \(1.0\), \(0.01\), and \(1.0\), respectively.

The dual-control variables are sampled over \(\beta_{\mathrm{rate}}\in[0,3.0]\) and \(\beta_{\mathrm{prior}}\in[0,3.5]\) under the exponential weighting policy in Section~\ref{sec:training_strategy}. In \StageIILink{}, we use the patch-based Inception-v3 FID in \(J_{\mathrm{sel}}=\alpha\cdot\mathrm{PSNR}-\mathrm{FID}\), with \(\alpha=2\). The final reported \StageIIILink{} model adopts the selected pair set \((1.921,2.25)\), \((1.312,3.5)\), \((0.844,3.25)\), \((0.422,0.5)\), and \((0.091,2.25)\), written here in the order \((\beta_{\mathrm{rate}},\beta_{\mathrm{prior}})\) and corresponding approximately to operating points at \(0.05\), \(0.075\), \(0.10\), \(0.125\), and \(0.15\) bpp. Final AFP-GIC results are reported at these selected operating points.

We assess \StageIIILink{} convergence at the approximately 0.05-bpp operating point over the final 100K in-loop validation iterations (11 checkpoints, 10K apart). PSNR is \(25.046\pm0.009\) dB (population standard deviation; range \(25.032\)--\(25.062\) dB), and the bitrate standard deviation is \(4.9\times10^{-7}\) bpp. This narrow variation indicates a stable final plateau and limited sensitivity to checkpoint choice. These within-run values come from the training-time validation pipeline; they do not estimate independent-seed variability or replace the offline results in Fig.~\ref{fig:quantitative_main}.

\subsection{Quantitative Comparison with Representative Codecs}
\label{sec:quantitative_comparison}

The evaluation therefore prioritizes perceptual naturalness across multiple very-low-bitrate operating points using one controllable model, followed by decoder-side efficiency and model footprint; PSNR and structural similarity are retained as fidelity safeguards rather than treated as the sole objectives.

\begin{figure}[!t]
\centering
\includegraphics[width=0.94\columnwidth]{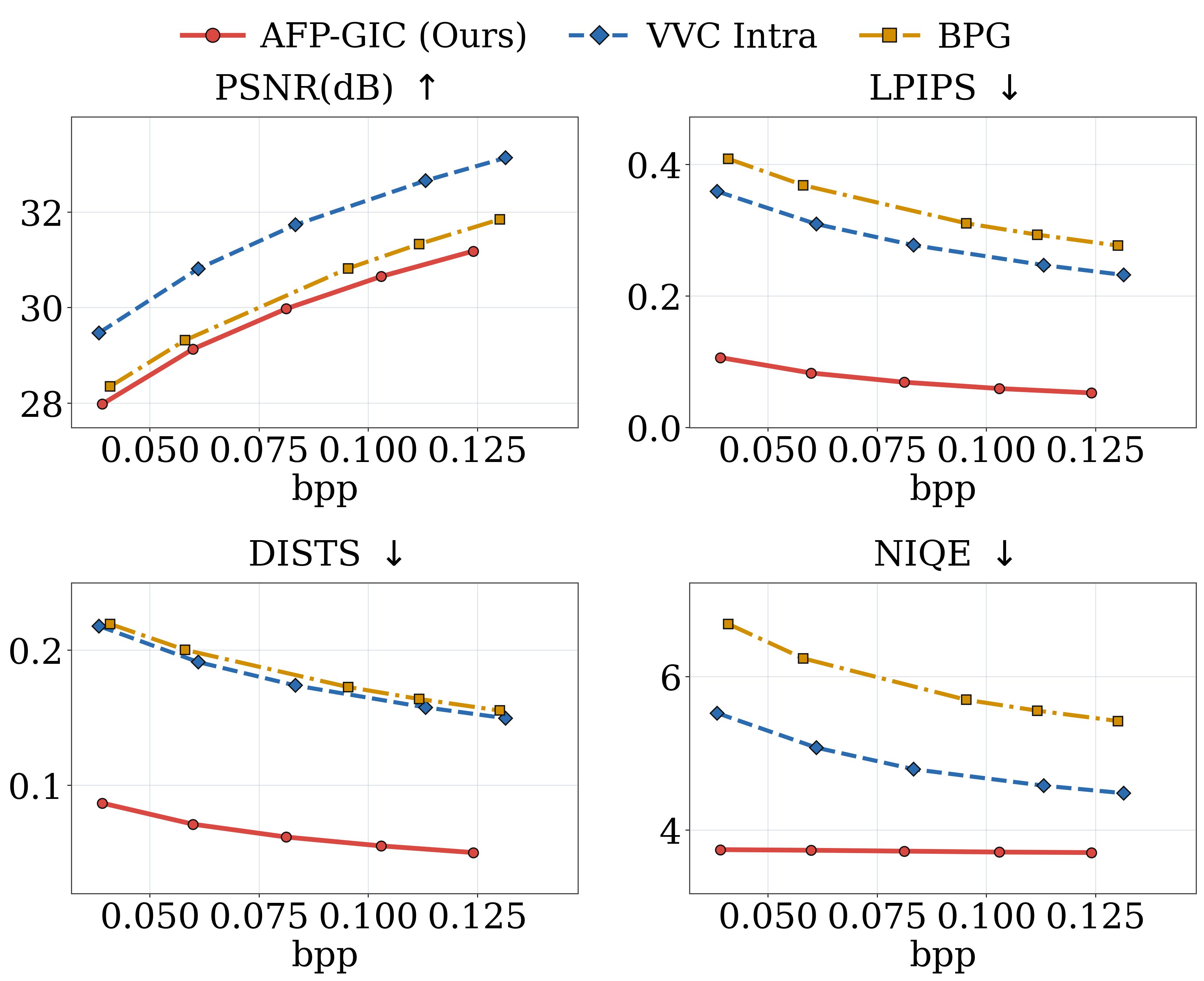}
\caption{Quantitative comparison of AFP-GIC, VVC Intra, and BPG on CLIC2020. ``\(\uparrow\)'' indicates higher is better; ``\(\downarrow\)'' indicates lower is better.}
\label{fig:traditional_clic2020}
\end{figure}

We first compare AFP-GIC with representative learned generative codecs, including DC-VIC \cite{iwai2024dcvic}, CRDR \cite{iwai2024crdr}, MS-ILLM \cite{muckley2023illm}, HiFiC \cite{mentzer2020hific}, OSDiff \cite{jia2025osdiff}, and RDEIC \cite{li2025rdeic}. Fig.~\ref{fig:quantitative_main} summarizes the results on Kodak, CLIC2020, and DIV2K using PSNR, SSIM, LPIPS, DISTS, and NIQE at the closest available released operating points, while Table~\ref{tab:bd_vs_msillm} reports the corresponding Bj{\o}ntegaard-delta summaries relative to MS-ILLM. Appendices~\ref{app:numerical_results} and~\ref{app:control_gic} provide the underlying values and a partial Control-GIC \cite{li2025controlgic} comparison. Table~\ref{tab:bd_vs_msillm} shows negative BD-NIQE relative to MS-ILLM on all three datasets. Relative to DC-VIC, AFP-GIC has higher single-scale BD-SSIM but slightly worse BD-DISTS, whereas BD-MS-SSIM and FID favor DC-VIC. It trails CRDR in BD-PSNR but remains competitive in BD-LPIPS; its clearest perceptual advantage lies in NIQE and very-low-rate visual results. Fig.~\ref{fig:traditional_clic2020} shows a similar contrast against VVC Intra and BPG on CLIC2020: the conventional codecs remain strong under distortion-oriented evaluation, whereas AFP-GIC is often more favorable in the reported low-bitrate perceptual trends, especially NIQE, and in the corresponding visual comparisons.

\begin{table}[!t]
\centering
\caption{Bj{\o}ntegaard delta metrics \cite{bjontegaard2001bdpsnr} relative to MS-ILLM. ``\(\uparrow\)'' indicates higher is better; ``\(\downarrow\)'' indicates lower is better.}
\label{tab:bd_vs_msillm}
\footnotesize
\setlength{\tabcolsep}{1.5pt}
\renewcommand{\arraystretch}{0.92}
\begin{tabular*}{\columnwidth}{@{\extracolsep{\fill}}lccccc@{}}
\hline
\rule{0pt}{1.8ex}Method & PSNR $\uparrow$ & SSIM $\uparrow$ & LPIPS $\downarrow$ & DISTS $\downarrow$ & NIQE $\downarrow$ \\
\hline
\multicolumn{6}{l}{\rule{0pt}{1.8ex}\textit{Kodak}} \\
\rule{0pt}{1.8ex}MS-ILLM & 0.0000 & 0.00000 & 0.00000 & 0.00000 & 0.00000 \\
\rule{0pt}{1.8ex}DC-VIC & 0.0659 & -0.00485 & -0.00504 & -0.00449 & -0.20078 \\
\rule{0pt}{1.8ex}CRDR & 0.9689 & 0.02719 & 0.00477 & 0.00634 & 0.09218 \\
\rule{0pt}{1.8ex}HiFiC & -0.3895 & -0.00499 & 0.02199 & 0.02144 & 0.09426 \\
\rule{0pt}{1.8ex}OSDiff & -2.1296 & -0.04516 & 0.04355 & -0.00859 & 0.15246 \\
\rule{0pt}{1.8ex}RDEIC & -1.2087 & -0.01495 & -0.00008 & -0.02065 & -0.04009 \\
\rule{0pt}{1.8ex}AFP-GIC (Ours) & 0.2850 & -0.00081 & -0.00614 & 0.00213 & -0.26891 \\
\hline
\multicolumn{6}{l}{\rule{0pt}{1.8ex}\textit{CLIC2020}} \\
\rule{0pt}{1.8ex}MS-ILLM & 0.0000 & 0.00000 & 0.00000 & 0.00000 & 0.00000 \\
\rule{0pt}{1.8ex}DC-VIC & -0.0634 & 0.00143 & -0.00047 & -0.00201 & 0.10165 \\
\rule{0pt}{1.8ex}CRDR & 0.7225 & 0.01947 & 0.00891 & 0.01193 & -0.01288 \\
\rule{0pt}{1.8ex}HiFiC & -0.9147 & 0.00241 & 0.01848 & 0.01911 & 0.21617 \\
\rule{0pt}{1.8ex}OSDiff & -3.8811 & -0.04727 & 0.05370 & 0.01046 & 0.63256 \\
\rule{0pt}{1.8ex}RDEIC & -1.8024 & -0.01370 & 0.02292 & 0.00807 & 0.41444 \\
\rule{0pt}{1.8ex}AFP-GIC (Ours) & 0.0632 & 0.00307 & 0.00091 & 0.00393 & -0.06072 \\
\hline
\multicolumn{6}{l}{\rule{0pt}{1.8ex}\textit{DIV2K}} \\
\rule{0pt}{1.8ex}MS-ILLM & 0.0000 & 0.00000 & 0.00000 & 0.00000 & 0.00000 \\
\rule{0pt}{1.8ex}DC-VIC & -0.1275 & -0.00291 & -0.00366 & -0.00569 & -0.10367 \\
\rule{0pt}{1.8ex}CRDR & 0.5713 & 0.01909 & 0.00796 & 0.01033 & 0.13684 \\
\rule{0pt}{1.8ex}HiFiC & -0.7207 & -0.00461 & 0.01935 & 0.01959 & 0.14926 \\
\rule{0pt}{1.8ex}OSDiff & -3.1581 & -0.05951 & 0.04801 & -0.00016 & 0.33162 \\
\rule{0pt}{1.8ex}RDEIC & -1.6605 & -0.02067 & 0.01185 & -0.00173 & 0.07644 \\
\rule{0pt}{1.8ex}AFP-GIC (Ours) & 0.0113 & -0.00149 & -0.00347 & 0.00084 & -0.19965 \\
\hline
\end{tabular*}
\end{table}

\subsection{Qualitative Comparison}
\label{sec:visual_comparison}

\begin{figure*}[!t]
{\centering
\includegraphics[width=0.76\textwidth]{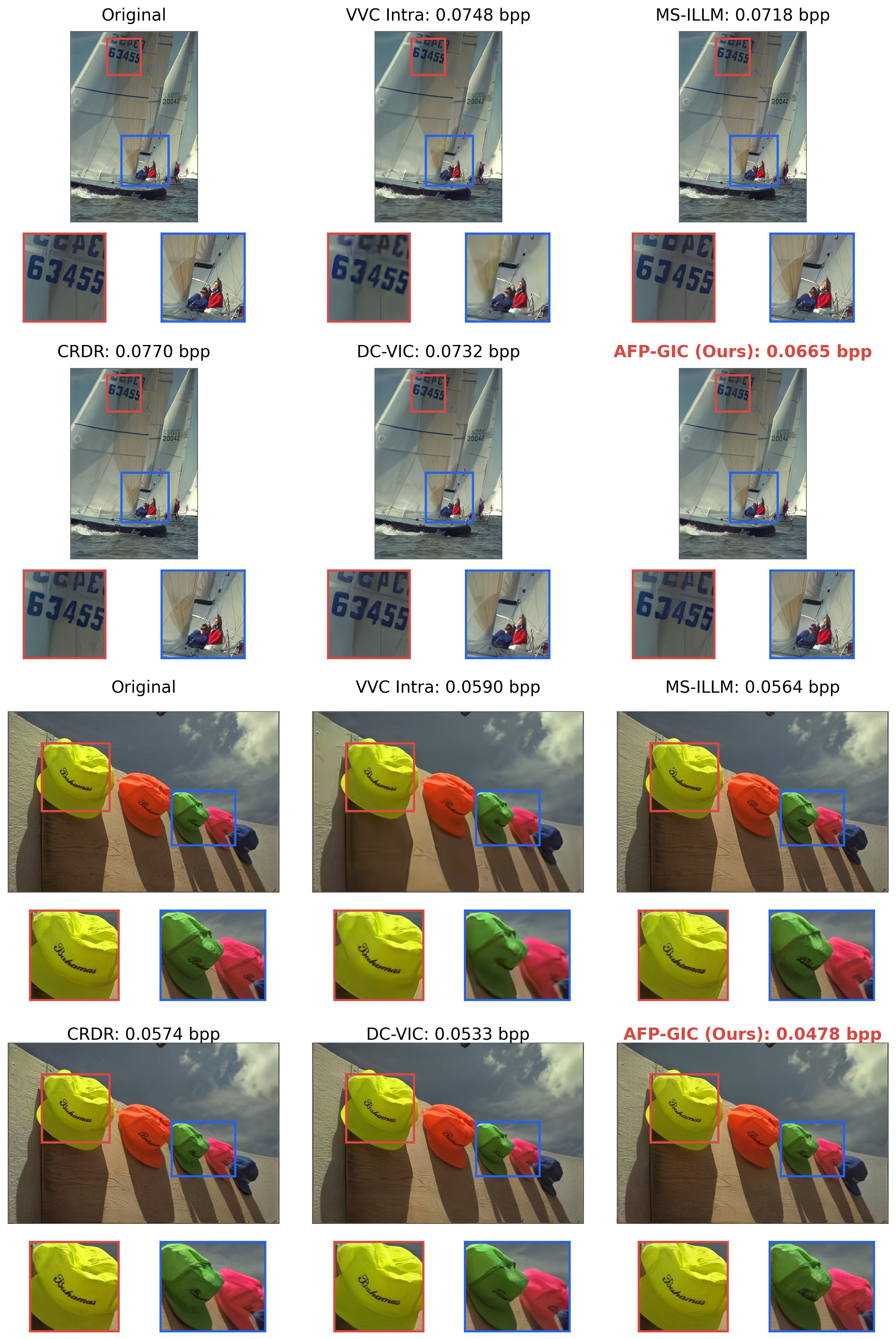}\par}
\vspace{2pt}
\refstepcounter{figure}\label{fig:kodim10_visual_trial}
\noindent\makebox[\textwidth][l]{\parbox[t]{\textwidth}{\raggedright{\color{accessblue}\figcapheadfont FIGURE \thefigure. \ }\figcapfont Qualitative comparison among AFP-GIC and representative learned generative codecs on Kodak in the low-bitrate regime. Baselines are shown at their closest available released bitrates.\par}}
\end{figure*}

\begin{figure*}[!t]
{\centering
\includegraphics[width=0.98\textwidth]{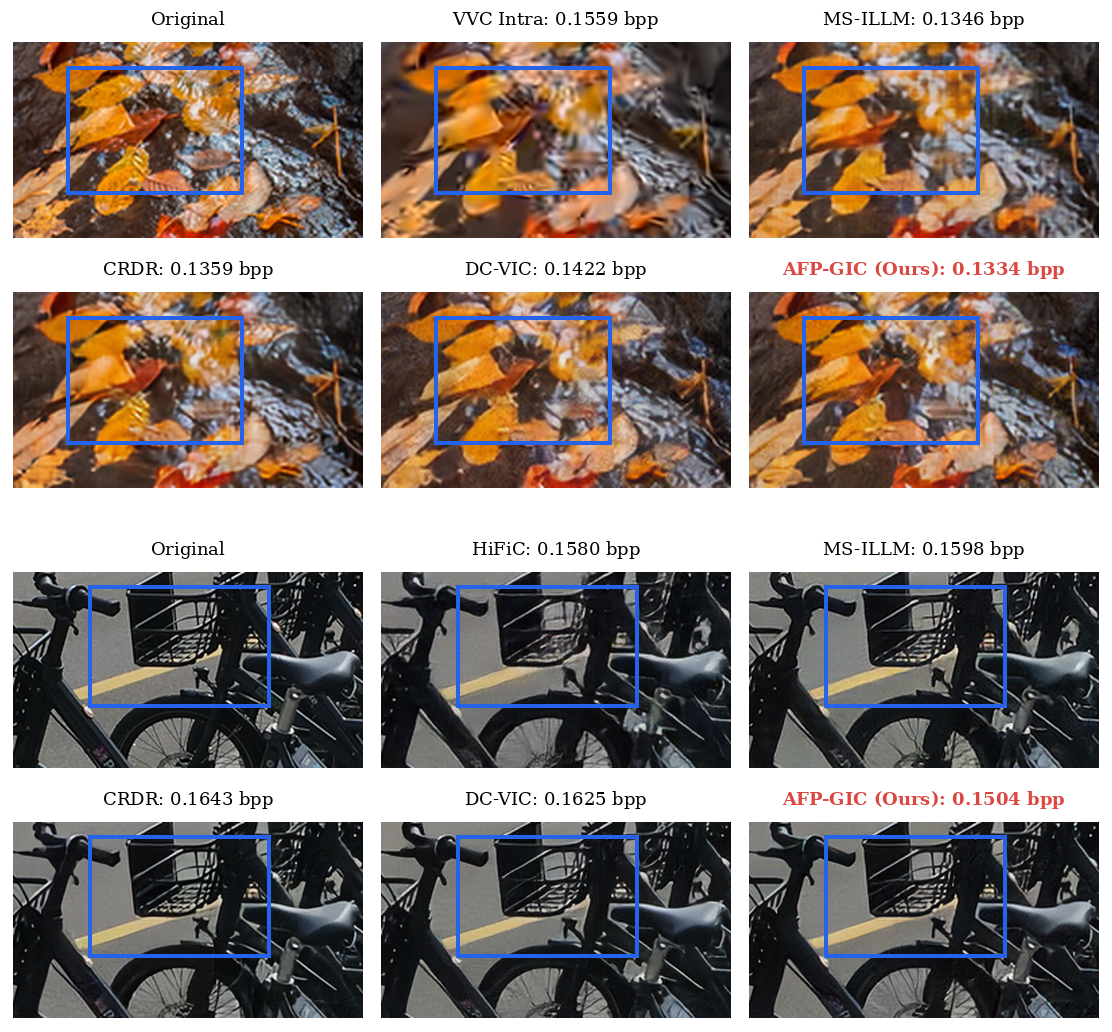}\par}
\vspace{2pt}
\refstepcounter{figure}\label{fig:clic2020_visual_trial}
\noindent\makebox[\textwidth][l]{\parbox[t]{\textwidth}{\raggedright{\color{accessblue}\figcapheadfont FIGURE \thefigure. \ }\figcapfont Qualitative comparison among AFP-GIC and representative codecs on CLIC2020 at close low bitrates. Baselines are shown at their closest available released bitrates.\par}}
\end{figure*}

\begin{figure*}[!t]
{\centering
\includegraphics[width=\textwidth]{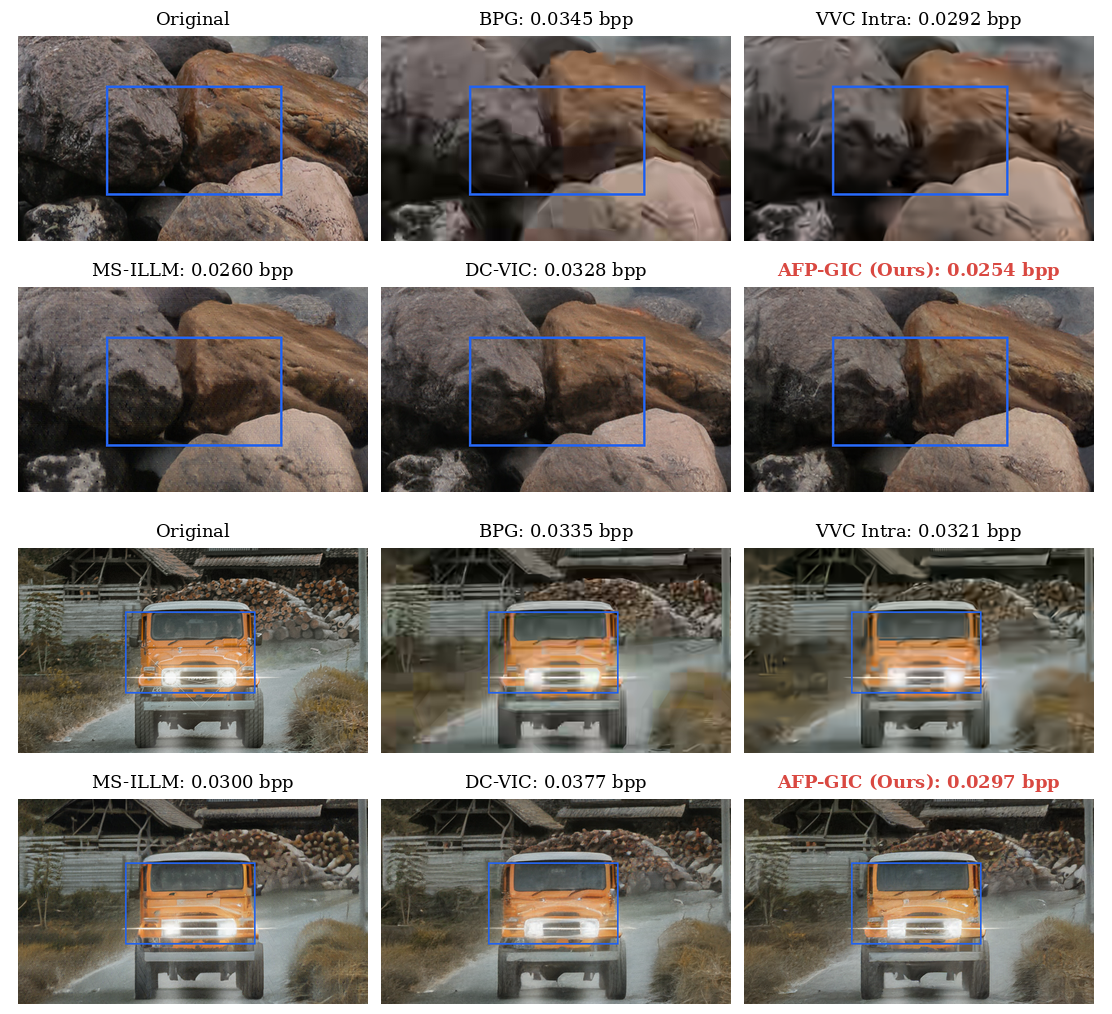}\par}
\vspace{2pt}
\refstepcounter{figure}\label{fig:clic2020_visual_ultralow}
\noindent\makebox[\textwidth][l]{\parbox[t]{\textwidth}{\raggedright{\color{accessblue}\figcapheadfont FIGURE \thefigure. \ }\figcapfont Qualitative comparison among AFP-GIC and representative codecs on CLIC2020 at extreme low bitrates. Baselines are shown at their closest available released bitrates.\par}}
\end{figure*}

Figures~\ref{fig:kodim10_visual_trial}--\ref{fig:diffusion_visual_kodim21} show representative Kodak and CLIC2020 comparisons. Baselines use their closest released bitrates, so some operate slightly above AFP-GIC. Cropped views are used for legibility and should be read with the global trends in Fig.~\ref{fig:quantitative_main}.

\begin{figure}[!t]
\centering
\includegraphics[width=\columnwidth]{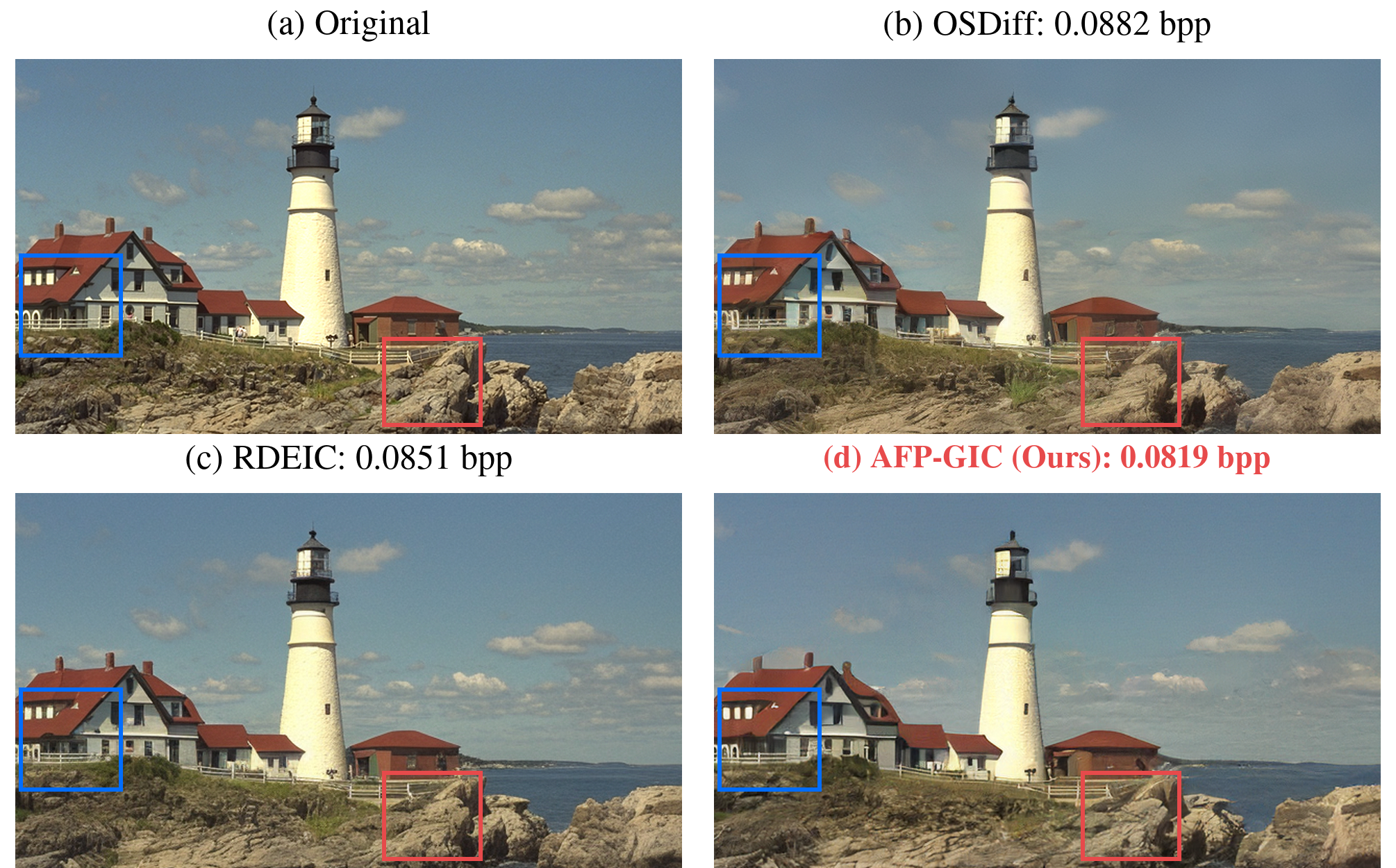}
\caption{Qualitative comparison of a Kodak image crop from full-image reconstructions by AFP-GIC and recent diffusion-based codecs. Baselines use their closest released bitrates.}
\label{fig:diffusion_visual_kodim21}
\end{figure}

At these low bitrates, all methods trade perceptual sharpness against reference fidelity. AFP-GIC uses the lowest bitrate in both Kodak examples, yet keeps the ``63455'' digits and nearby boat structures clearer in the first and the ``Bahamas'' embroidery and separation among adjacent hats clearer in the second. These examples illustrate how adaptive fused-prior transfer can preserve semantically important local structure under severe rate constraints.

Figures~\ref{fig:clic2020_visual_trial} and \ref{fig:clic2020_visual_ultralow} extend the comparison to CLIC2020, including rates below 0.035 bpp. In Fig.~\ref{fig:clic2020_visual_trial}, AFP-GIC retains sharper leaf edges and a more coherent bicycle basket than the softer generative baselines and blocking-prone VVC Intra. Fig.~\ref{fig:clic2020_visual_ultralow} shows clearer rock boundaries and more legible truck headlights, while the conventional codecs exhibit blocking and MS-ILLM shows noise-like artifacts.

In Fig.~\ref{fig:diffusion_visual_kodim21}, despite operating at the lowest bpp, AFP-GIC preserves finer rock and railing textures and reproduces the house colors with greater fidelity to the original than the recent diffusion-based codecs OSDiff and RDEIC.

\begin{table}[H]
\centering
\caption{Inference parameter counts and GPU encoder/decoder runtimes on 100 DIV2K patches cropped to \(256\times256\) under a unified RTX 4090 benchmark.}
\label{tab:runtime_learned_gpu}
\footnotesize
\setlength{\tabcolsep}{4pt}
\renewcommand{\arraystretch}{0.94}
\resizebox{\columnwidth}{!}{
\begin{tabular}{lccc}
\hline
\shortstack[l]{\rule{0pt}{2.4ex}Method} & \shortstack[c]{Inference parameter count} & \shortstack[c]{\strut Enc. time\\[-0.6ex]{}[ms]} & \shortstack[c]{\strut Dec. time\\[-0.6ex]{}[ms]}\\
\hline
\shortstack[l]{\rule{0pt}{2.4ex}MS-ILLM} & 181.5M & 16.37 & 20.38\\
\shortstack[l]{\rule{0pt}{2.4ex}CRDR} & 127.7M & 168.77 & 199.62\\
\shortstack[l]{\rule{0pt}{2.4ex}HiFiC} & 148.5M & 110.02 & 220.07\\
\shortstack[l]{\rule{0pt}{2.4ex}OSDiff} & 1362.6M & 67.99 & 147.70\\
\shortstack[l]{\rule{0pt}{2.4ex}RDEIC} & 1380.3M & 76.20 & 204.25\\
\shortstack[l]{\rule{0pt}{2.4ex}DC-VIC} & 151.7M & 61.61 & 98.27\\
\shortstack[l]{\rule{0pt}{2.4ex}\textbf{AFP-GIC (Ours)}} & 120.6M & 81.34 & 80.47\\
\hline
\end{tabular}
}
\end{table}

\subsection{Complexity and Runtime Analysis}
\label{sec:complexity}
\label{sec:runtime_analysis}
We compare inference parameter count and encoder/decoder runtime under a unified RTX 4090 setup (Table~\ref{tab:runtime_learned_gpu}). Counts include all frozen and trainable modules required at inference but exclude training-only networks. Decoder cost is especially relevant to write-once, read-many deployment.

Within this common benchmark, AFP-GIC has the smallest inference model (\textbf{120.6M}). Relative to DC-VIC \cite{iwai2024dcvic} (151.7M), AFP-GIC reduces the inference parameter count by \textbf{20.5\%} (31.1M) and decoder time by \textbf{18.1\%} (80.47 vs.\ 98.27 ms). It is also over ten times smaller and decodes faster than OSDiff and RDEIC, although their encoders are slightly faster. MS-ILLM decodes fastest but uses separate bitrate-specific checkpoints, whereas one AFP-GIC model covers all five rates; Table~\ref{tab:runtime_learned_gpu} reports per-checkpoint parameters. Appendix~\ref{app:quality_complexity} combines resource use with BD quality.

Under the same RTX 4090 training benchmark (six $256\times256$ crops), AFP-GIC reduces median \StageIIILink{} step time by \textbf{33.3\%} (239.40 vs.\ 359.08 ms) and peak memory by \textbf{39.0\%} (8.75 vs.\ 14.34 GiB) relative to DC-VIC. Because the baselines use official checkpoints, the inference results are system-level comparisons across different training data, budgets, and pretraining procedures. AFP-GIC's training measurements exclude AdaCode pretraining.

\subsection{Ablation Study}
\label{sec:ablation_study}

\begin{table}[!t]
\centering
\caption{Bj{\o}ntegaard delta metrics \cite{bjontegaard2001bdpsnr} relative to AFP-GIC for the module-level ablation. ``\(\uparrow\)'' indicates higher is better; ``\(\downarrow\)'' indicates lower is better.}
\label{tab:module_ablation_bd}
\footnotesize
\setlength{\tabcolsep}{1.5pt}
\renewcommand{\arraystretch}{0.92}
\begin{tabular*}{\columnwidth}{@{\extracolsep{\fill}}lccccc@{}}
\hline
\rule{0pt}{1.8ex}Variant & PSNR $\uparrow$ & SSIM $\uparrow$ & LPIPS $\downarrow$ & DISTS $\downarrow$ & NIQE $\downarrow$ \\
\hline
\multicolumn{6}{l}{\rule{0pt}{1.8ex}\textit{Kodak}} \\
\rule{0pt}{1.8ex}AFP-GIC & 0.0000 & 0.00000 & 0.00000 & 0.00000 & 0.00000 \\
\rule{0pt}{1.8ex}w/o encoder prior & -0.0441 & -0.00480 & -0.00002 & 0.00131 & -0.09819 \\
\rule{0pt}{1.8ex}w/o SFT modulation & -4.5970 & -0.15781 & 0.20332 & 0.12380 & 0.32139 \\
\rule{0pt}{1.8ex}w/o prior consistency & -0.0111 & -0.00067 & -0.00278 & -0.00241 & -0.10173 \\
\hline
\multicolumn{6}{l}{\rule{0pt}{1.8ex}\textit{CLIC2020}} \\
\rule{0pt}{1.8ex}AFP-GIC & 0.0000 & 0.00000 & 0.00000 & 0.00000 & 0.00000 \\
\rule{0pt}{1.8ex}w/o encoder prior & -0.1387 & -0.00537 & 0.00093 & 0.00092 & -0.03155 \\
\rule{0pt}{1.8ex}w/o SFT modulation & -6.3202 & -0.17752 & 0.19990 & 0.12100 & -0.48406 \\
\rule{0pt}{1.8ex}w/o prior consistency & 0.0818 & 0.00090 & -0.00520 & -0.00119 & -0.05604 \\
\hline
\multicolumn{6}{l}{\rule{0pt}{1.8ex}\textit{DIV2K}} \\
\rule{0pt}{1.8ex}AFP-GIC & 0.0000 & 0.00000 & 0.00000 & 0.00000 & 0.00000 \\
\rule{0pt}{1.8ex}w/o encoder prior & -0.1227 & -0.00562 & 0.00217 & 0.00121 & -0.03358 \\
\rule{0pt}{1.8ex}w/o SFT modulation & -5.0790 & -0.16649 & 0.17416 & 0.10836 & -0.07105 \\
\rule{0pt}{1.8ex}w/o prior consistency & 0.0339 & 0.00002 & -0.00364 & -0.00156 & -0.05025 \\
\hline
\end{tabular*}
\end{table}

\noindent\textbf{Module-Level Contributions:} Table~\ref{tab:module_ablation_bd} reports fully retrained variants under the same three-stage procedure and training budget, with AFP-GIC as the BD reference. Since the Prior Feature Adapter performs the spatial and channel alignment needed to inject the adaptive fused prior into the compression encoder, the ``w/o encoder prior'' variant removes the complete adapter-mediated encoder-side prior path. Removing this path yields BD-PSNR losses of 0.0441--0.1387 dB and BD-SSIM losses of 0.00480--0.00562 across the three datasets, while the perceptual changes are smaller and mixed. The ``w/o SFT modulation'' variant retains the predicted fused prior supplied to the frozen AdaCode decoder and removes only SFT modulation. It causes the largest degradation in PSNR, SSIM, LPIPS, and DISTS on all three datasets, although its effect on NIQE is dataset dependent.

\begin{table}[!t]
\caption{\StageIILink{} analysis of the PSNR response to \(\beta_{\mathrm{prior}}\) at fixed \(\beta_{\mathrm{rate}}\), using \StageILink{} checkpoints on the 2000-image validation set.}
\label{tab:prior_consistency_control_response}
\centering
\footnotesize
\setlength{\tabcolsep}{1.8pt}
\renewcommand{\arraystretch}{1.00}
\begin{tabular*}{\columnwidth}{@{\extracolsep{\fill}}llcccccc@{}}
\hline
Model & Statistic & \multicolumn{5}{c}{Target bitrate (bpp)} & Mean \\
\cline{3-7}
\rule{0pt}{1.8ex} & & 0.050 & 0.075 & 0.100 & 0.125 & 0.150 & \\
\hline
\rule{0pt}{1.8ex}Fixed control & \(\beta_{rate}\) & 1.921 & 1.312 & 0.797 & 0.422 & 0.141 & -- \\
\hline
\rule{0pt}{1.8ex}AFP-GIC & \(\beta_{prior}^{*}\) & 3.5 & 3.5 & 3.0 & 3.0 & 3.0 & -- \\
 & \(\Delta_{\max}\) PSNR (dB) & 0.044 & 0.061 & 0.098 & 0.058 & 0.056 & 0.063 \\
\shortstack[l]{w/o prior\\consistency} & \(\beta_{prior}^{*}\) & 3.5 & 2.0 & 2.0 & 2.0 & 2.0 & -- \\
 & \(\Delta_{\max}\) PSNR (dB) & 0.007 & 0.015 & 0.014 & 0.015 & 0.014 & 0.013 \\
\hline
\end{tabular*}
\end{table}

By contrast, the ``w/o prior consistency'' variant in Table~\ref{tab:module_ablation_bd} omits \(\mathcal{L}_{\mathrm{prior}}\). Its BD-PSNR and BD-SSIM remain close to AFP-GIC, while all three perceptual BD metrics improve. However, its contribution is instead reflected in preserving a measurable response to the second control variable, \(\beta_{\mathrm{prior}}\), as analyzed below. Table~\ref{tab:prior_consistency_control_response} evaluates \StageILink{} checkpoints during \StageIILink{} validation, before selected-pair fine-tuning, so its PSNR-maximizing controls differ from the final \StageIIILink{} operating pairs. At each fixed \(\beta_{\mathrm{rate}}\), \(\beta_{\mathrm{prior}}^{*}\) maximizes PSNR over the eight values \(\{0,0.5,\ldots,3.5\}\), and \(\Delta_{\max}\mathrm{PSNR}=\max_{\beta_{\mathrm{prior}}}[\mathrm{PSNR}(\beta_{\mathrm{prior}})-\mathrm{PSNR}(0)]\). Omitting \(\mathcal{L}_{\mathrm{prior}}\) reduces the mean \(\Delta_{\max}\mathrm{PSNR}\) from 0.0634 to 0.0132~dB, a 79.1\% reduction, and changes the mean Pearson correlation between \(\beta_{\mathrm{prior}}\) and PSNR across the five rates from \(+0.731\) to \(-0.221\). Although modest in absolute magnitude, this reduction indicates that \(\mathcal{L}_{\mathrm{prior}}\) preserves a measurable prior-oriented response rather than improving average quality at the evaluated operating points. Because \(\beta_{\mathrm{prior}}\) also changes bitrate slightly, this diagnostic is not a rate-matched RD gain. Appendix~\ref{app:prior_estimator_substitution} further evaluates the Prior Estimator through controlled substitutions of its predicted fused prior.

\par\vspace{0.5em}
\begin{figure*}[!t]
\centering
\includegraphics[width=0.82\textwidth]{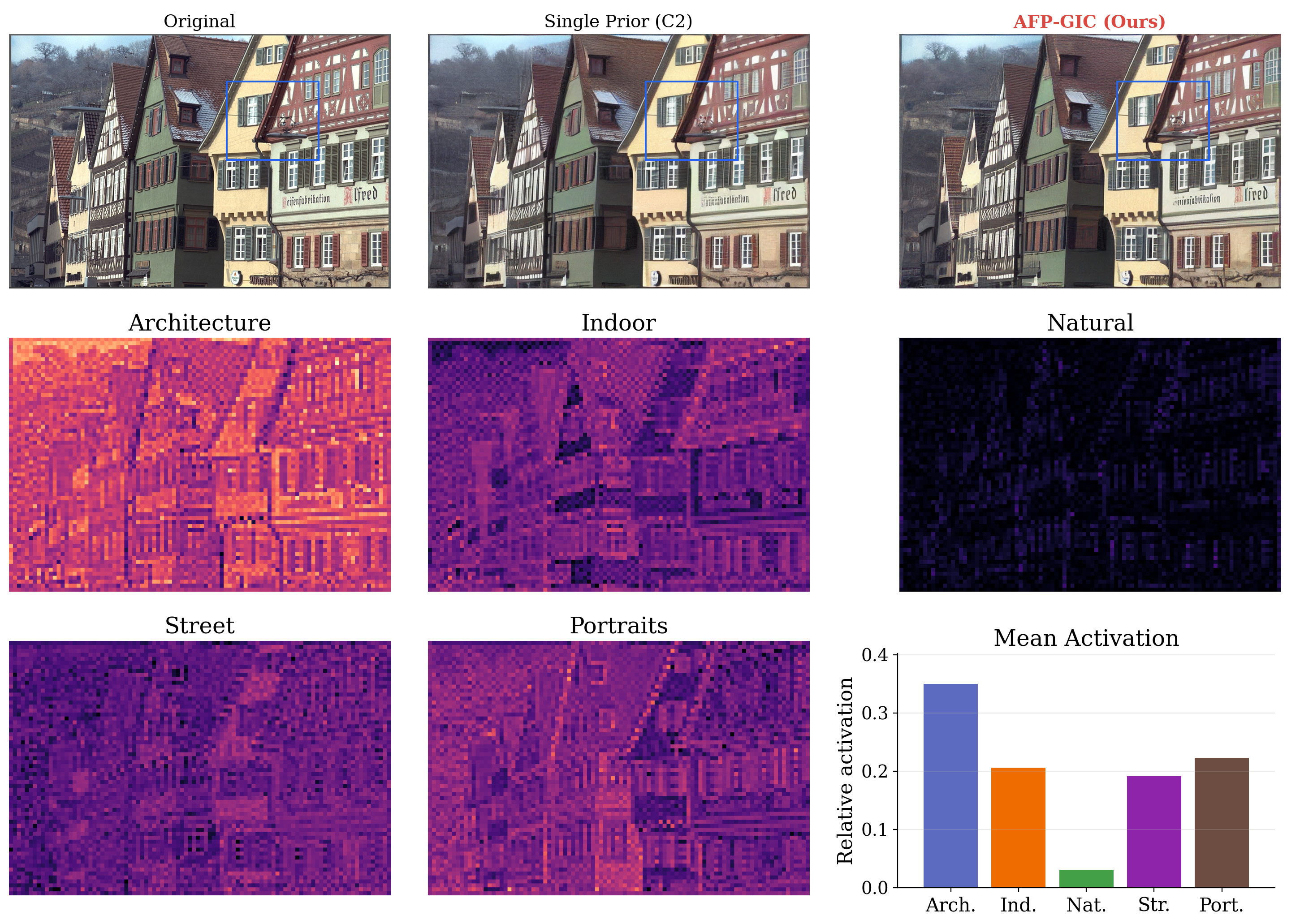}
\caption{Comparison of the representative single-prior baseline C2 and AFP-GIC at close bitrates on Kodak. Lower rows show AFP-GIC's spatial prior responses and the mean normalized activation of each prior component.}
\label{fig:adacode_usage_panel}
\end{figure*}

\noindent\textbf{Spatial Adaptation of the Transferred Adaptive Fused Prior:} Among the single-prior branches listed in Table~\ref{tab:adacode_groups}, C2 is used as the representative single-prior baseline because, in our preliminary evaluation, it gives the highest reconstruction PSNR among the pretrained AdaCode single branches on Kodak. The first row of Fig.~\ref{fig:adacode_usage_panel} shows that AFP-GIC reconstructs clearer local detail than C2 at comparable bitrates around 0.10 bpp. In the blue-boxed building region, facade texture and fine structural patterns are rendered more clearly by AFP-GIC, suggesting that the adaptive fused prior is better matched to this image than the fixed C2 branch.

The second and third rows of Fig.~\ref{fig:adacode_usage_panel} provide qualitative context for this behavior. The activation maps show that the transferred adaptive fused prior is used non-uniformly across the image plane: regions with different structural and textural content activate different prior responses. The mean-activation summary further shows that the architecture component has the largest average normalized activation, whereas the natural component has the smallest. This indicates that, for this building-heavy example, AFP-GIC assigns the largest average activation to architecture-related prior cues while still combining multiple prior families in a spatially adaptive manner rather than collapsing to a fixed single-prior response.

\par\vspace{0.5em}
\noindent\textbf{Adaptive Fused Prior vs.\ Single Prior:} To isolate the effect of adaptive fused-prior transfer, we compare AFP-GIC with the representative single-prior baseline C2 following the same three-stage procedure. As summarized in Table~\ref{tab:c2_cross_dataset}, AFP-GIC achieves higher PSNR and lower LPIPS on all three datasets while operating at a lower bitrate in each case. These results support the advantage of adaptive fused-prior transfer over the representative fixed single-prior design evaluated here.

\begin{table}[!t]
\caption{Comparison of AFP-GIC and the representative single-prior baseline C2 at closely matched low bitrates across Kodak, CLIC2020, and DIV2K.}
\label{tab:c2_cross_dataset}
\centering
\footnotesize
\setlength{\tabcolsep}{4pt}
\renewcommand{\arraystretch}{1.00}
\begin{tabular*}{\columnwidth}{@{\extracolsep{\fill}}lcccc@{}}
\hline
\shortstack[l]{\rule{0pt}{2.4ex}Dataset} & C2 bpp & Ours bpp & PSNR gain (dB) & LPIPS reduction \\
\hline
\shortstack[l]{\rule{0pt}{2.4ex}Kodak} & 0.0517 & 0.0516 & +0.55 & 0.020 \\
\shortstack[l]{\rule{0pt}{2.4ex}CLIC2020} & 0.0424 & 0.0391 & +0.71 & 0.017 \\
\shortstack[l]{\rule{0pt}{2.4ex}DIV2K} & 0.0596 & 0.0549 & +0.62 & 0.017 \\
\hline
\end{tabular*}
\end{table}

\par\vspace{0.5em}
\begin{figure*}[!t]
\centering
\includegraphics[width=0.75\textwidth]{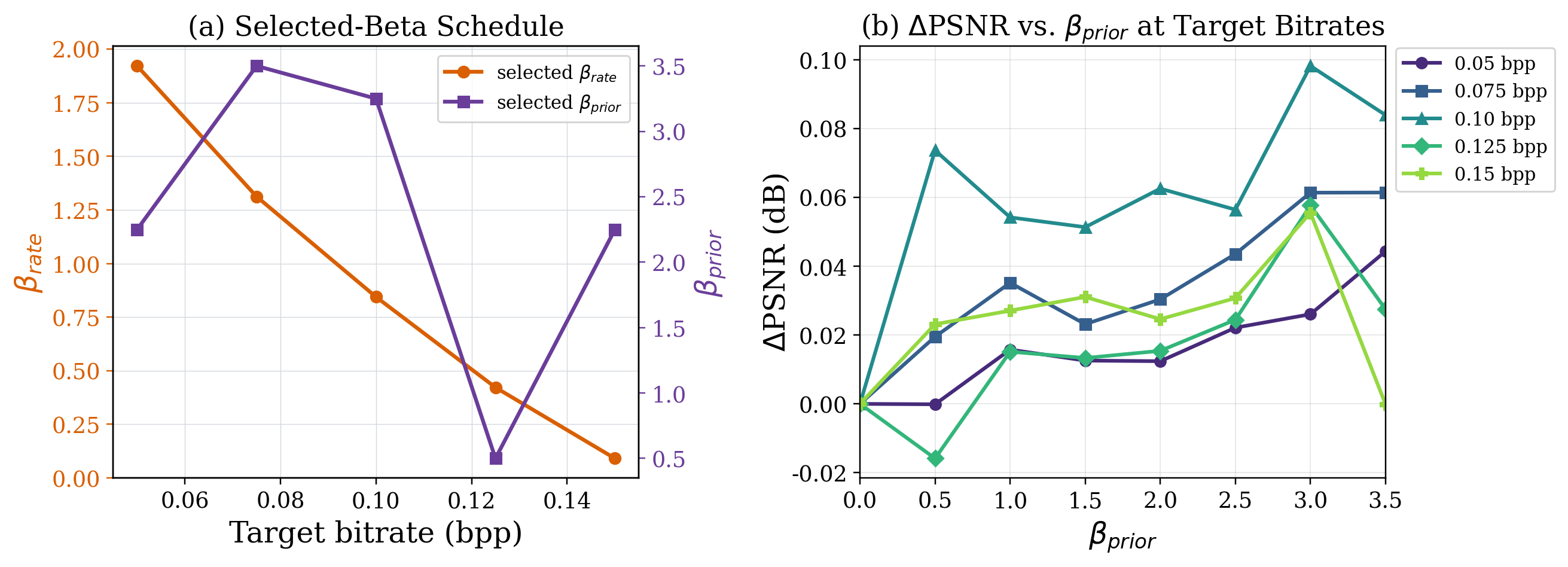}
\caption{Effects of \(\beta_{rate}\) and \(\beta_{prior}\) in AFP-GIC. (a) The selected \(\beta_{rate}\) and \(\beta_{prior}\) values across the five target bitrates. (b) With \(\beta_{rate}\) fixed for each target bitrate, PSNR change relative to \(\beta_{prior}=0\) is plotted at the eight uniformly spaced values \(\{0,0.5,\ldots,3.5\}\).}
\label{fig:dual_control_compact}
\end{figure*}

\noindent\textbf{Effects of \(\beta_{rate}\) and \(\beta_{prior}\):} To understand the effects of \(\beta_{rate}\) and \(\beta_{prior}\), we analyze the beta-pair selection procedure in Section~\ref{sec:training_strategy}. Fig.~\ref{fig:dual_control_compact}(a) plots the selected values of \(\beta_{rate}\) and \(\beta_{prior}\) for the five target bitrates. The selected \(\beta_{rate}\) shows a clear decreasing trend as the target bitrate increases, which matches its role in scaling the rate-related term through \(w_r\) in \EqLink{eq:beta_weights} and \EqLink{eq:base_generator_loss}. In contrast, the selected \(\beta_{prior}\) shows a more flexible selection pattern across the five target bitrates. Fig.~\ref{fig:dual_control_compact}(b) then fixes \(\beta_{rate}\) at the selected value for each of the five target bitrates and evaluates \(\beta_{prior}\) at the eight uniformly spaced values \(\{0,0.5,\ldots,3.5\}\) on the Open Images validation set. The resulting curves show that increasing \(\beta_{prior}\) generally improves PSNR, although local fluctuations remain across the five target bitrate regimes. The two plots in Fig.~\ref{fig:dual_control_compact} indicate that \(\beta_{rate}\) primarily controls bitrate, whereas \(\beta_{prior}\) affects PSNR behavior beyond the primary bitrate-control role of \(\beta_{rate}\).

\begin{table}[!t]
\caption{Average header overhead (\%) in the actual bitstream at the five selected target operating points (approximately 0.05, 0.075, 0.10, 0.125, and 0.15 bpp).}
\label{tab:header_overhead}
\centering
\footnotesize
\setlength{\tabcolsep}{3.5pt}
\renewcommand{\arraystretch}{1.00}
\begin{tabular*}{\columnwidth}{@{\extracolsep{\fill}}lccccc@{}}
\hline
\shortstack[l]{\rule{0pt}{2.4ex}Dataset} & \shortstack[c]{\rule{0pt}{2.4ex}Target\\0.05} & \shortstack[c]{\rule{0pt}{2.4ex}Target\\0.075} & \shortstack[c]{\rule{0pt}{2.4ex}Target\\0.10} & \shortstack[c]{\rule{0pt}{2.4ex}Target\\0.125} & \shortstack[c]{\rule{0pt}{2.4ex}Target\\0.15} \\
\hline
\shortstack[l]{\rule{0pt}{2.4ex}Kodak} & 0.268 & 0.173 & 0.127 & 0.099 & 0.080 \\
\shortstack[l]{\rule{0pt}{2.4ex}CLIC2020} & 0.059 & 0.041 & 0.031 & 0.025 & 0.021 \\
\shortstack[l]{\rule{0pt}{2.4ex}DIV2K} & 0.038 & 0.026 & 0.020 & 0.016 & 0.013 \\
\hline
\end{tabular*}
\end{table}

\par\vspace{0.5em}
\noindent\textbf{Effect of Header Overhead on Actual Bitrate:} The transmitted header has a fixed size of 6 bytes (48 bits), consisting of 4 bytes for the image size, 1 byte for the maximum absolute value of the quantized latent \(\hat{\mathbf{y}}\), and 1 byte for the operating-point index. Table~\ref{tab:header_overhead} reports the image-wise header fraction averaged over each dataset at the five selected target operating points; the calculation is detailed in Section~III of the Supplementary Material. The overhead remains very small across all settings: even on Kodak at the lowest target operating point it averages only 0.268\%, and it is substantially smaller on CLIC2020 and DIV2K.

\begin{table}[!t]
\caption{Sensitivity of beta-pair selection to \(\alpha\) in \EqLink{eq:beta_score}. The main model uses \(\alpha=2\); only settings that differ from it are listed.}
\label{tab:alpha_selection_regimes}
\centering
\small
\setlength{\tabcolsep}{4pt}
\renewcommand{\arraystretch}{1.22}
\begin{tabular}{p{0.26\columnwidth}|p{0.66\columnwidth}}
\hline
\centering\(\alpha\) values
&
\centering Changed operating points relative to \(\alpha=2\)
\tabularnewline
\hline
\raggedright \(\{0.01,0.1,1,2\}\)
&
\raggedright No change.
\tabularnewline
\raggedright \(\{3,\ldots,9\}\)
&
\raggedright Only the \(0.10\) bpp operating point changes:\\
\(0.10\ \text{bpp}\rightarrow(0.797,0.25)\).
\tabularnewline
\raggedright \(\{10,20\}\)
&
\raggedright Three bitrates change:\\
\(0.05\ \text{bpp}\rightarrow(1.921,2.5)\),\\
\(0.10\ \text{bpp}\rightarrow(0.797,0.25)\), and\\
\(0.15\ \text{bpp}\rightarrow(0.141,0.75)\).
\tabularnewline
\hline
\end{tabular}
\end{table}

\par\vspace{0.5em}
\noindent\textbf{Sensitivity to the Beta-Selection Coefficient \(\alpha\):} Table~\ref{tab:alpha_selection_regimes} summarizes how the validation-time coefficient \(\alpha\) in \EqLink{eq:beta_score} affects the selected operating points and their beta pairs. We include small \(\alpha\) values below \(2\), for which the scaled PSNR term contributes less and FID has greater relative influence, and extend the sweep above \(2\), where PSNR increasingly dominates and FID has less relative influence. The selection pattern is stable over much of the sweep. Because \EqLink{eq:beta_score} ranks candidate beta pairs using both PSNR-based reference fidelity and FID-based realism screening, we use \(\alpha=2\) as the main setting within the stable tested range \(\{0.01,0.1,1,2\}\). Consistent with this weighting shift, the intermediate range \(\{3,\ldots,9\}\) changes only one operating point, namely the \(0.10\) bpp operating point, whereas \(\alpha\in\{10,20\}\) changes two additional operating points, namely \(0.05\) and \(0.15\) bpp. The \(0.075\) and \(0.125\) bpp points remain unchanged throughout the sweep. Under the \StageIIILink{} in-loop Kodak evaluation used for this diagnostic, the \(\alpha=10\) follow-up changes the estimated bitrate by only \(+0.0002\) bpp and PSNR by \(-0.03\) dB relative to \(\alpha=2\), even though the validation-time selected pairs are not identical. Overall, the ablation shows that \(\alpha\) affects the validation-time selection rule, while the tested follow-up setting shows only small changes in the final Kodak bitrate and PSNR.

\begin{table}[!t]
\caption{Local decoder-sensitivity and prior-alignment checks on Kodak using random perturbations and interpolation toward the ground-truth fused prior.}
\label{tab:local_decoder_sanity}
\centering
\footnotesize
\setlength{\tabcolsep}{4pt}
\renewcommand{\arraystretch}{1.00}
\begin{tabular*}{\columnwidth}{@{\extracolsep{\fill}}lccc@{}}
\hline
\multicolumn{4}{l}{\rule{0pt}{2.4ex}\textit{Random perturbation sensitivity}} \\
\shortstack[c]{Prior perturb.\\(\%)} & Mean \(Q_{\rho}\) & Max. \(Q_{\rho}\) & \shortstack[c]{Mean relative\\output change (\%)} \\
\hline
0.25 & 0.335 & 0.636 & 0.075 \\
0.50 & 0.301 & 0.590 & 0.135 \\
1.00 & 0.291 & 0.568 & 0.261 \\
2.00 & 0.286 & 0.559 & 0.514 \\
\hline
\multicolumn{4}{l}{\rule{0pt}{2.4ex}\textit{Interpolation toward the ground-truth fused prior}} \\
\(t\) & Relative prior error & Mean PSNR (dB) & \(\Delta\)PSNR (dB) \\
\hline
0.00 & 1.00 & 26.994 & 0.000 \\
0.25 & 0.75 & 27.214 & +0.220 \\
0.50 & 0.50 & 27.318 & +0.324 \\
0.75 & 0.25 & 27.359 & +0.365 \\
1.00 & 0.00 & 27.363 & +0.369 \\
\hline
\end{tabular*}
\end{table}

\noindent\textbf{Local Decoder Sensitivity and Prior Alignment:} We conduct two controlled tests corresponding to the local-sensitivity premise and the prior-alignment implication of Proposition~\ref{prop:prior_alignment_bound}. First, we evaluate the decoder response to small perturbations of the predicted fused prior. For each of the 24 Kodak images and five operating points, we keep the quantized latent \(\hat{\mathbf{y}}\), the selected control pair, the SFT features, and all model parameters fixed, and perturb only \(\hat{\mathbf{p}}\). Specifically, we independently sample five Gaussian tensors \(\mathbf{g}\) with the same shape as \(\hat{\mathbf{p}}\), normalize each as \(\mathbf{u}=\mathbf{g}/\|\mathbf{g}\|_2\), and construct \(\boldsymbol{\delta}=\rho\|\hat{\mathbf{p}}\|_2\mathbf{u}\), where \(\rho\in\{0.25\%,0.50\%,1.00\%,2.00\%\}\). This produces \(24\times5\times5=600\) perturbation trials per magnitude and 2,400 perturbed decodes in total. Let \(f(\hat{\mathbf{y}},\hat{\mathbf{p}})\) denote the decoder output with the non-prior inputs held fixed. For each trial, we compute the finite-difference quotient
\begin{equation}
Q_{\rho}=
\frac{\|f(\hat{\mathbf{y}},\hat{\mathbf{p}}+\boldsymbol{\delta})-
f(\hat{\mathbf{y}},\hat{\mathbf{p}})\|_2}
{\|\boldsymbol{\delta}\|_2}
\label{eq:decoder_fdq}
\end{equation}
and the relative output change
\begin{equation}
R_{\rho}=
\frac{\|f(\hat{\mathbf{y}},\hat{\mathbf{p}}+\boldsymbol{\delta})-
f(\hat{\mathbf{y}},\hat{\mathbf{p}})\|_2}
{\|f(\hat{\mathbf{y}},\hat{\mathbf{p}})\|_2}\times100\%.
\label{eq:relative_output_change}
\end{equation}
Table~\ref{tab:local_decoder_sanity} reports the mean and maximum \(Q_{\rho}\) and the mean \(R_{\rho}\) over the 600 trials at each magnitude. As \(\rho\) increases from 0.25\% to 2.00\%, the mean output change increases from 0.075\% to 0.514\%, while the mean \(Q_{\rho}\) remains within 0.286--0.335 and its maximum does not exceed 0.636. The approximately proportional output response and the absence of a sharp increase in \(Q_{\rho}\) at smaller perturbations are consistent with stable local sensitivity around the predicted priors encountered by the model. This finite-sample test does not establish global Lipschitz continuity.

Second, we examine whether reducing the mismatch between the predicted fused prior \(\hat{\mathbf{p}}\) and the ground-truth image-adaptive fused prior \(\mathbf{p}\) improves reconstruction. We replace \(\hat{\mathbf{p}}\) with \(\mathbf{p}_{t}=(1-t)\hat{\mathbf{p}}+t\mathbf{p}\) for \(t\in\{0,0.25,0.50,0.75,1\}\), while again fixing \(\hat{\mathbf{y}}\), the control pair, the SFT features, and all model parameters. Because \(\|\mathbf{p}_{t}-\mathbf{p}\|_2/\|\hat{\mathbf{p}}-\mathbf{p}\|_2=1-t\), the interpolation progressively reduces the relative prior error from 1 to 0 without changing the bitstream or the other decoder inputs. Mean PSNR increases monotonically from 26.994 dB at \(t=0\) to 27.363 dB at \(t=1\), and the endpoint improvement is positive for all 120 image--operating-point combinations. Here, \(\mathbf{p}\) is the ground-truth image-adaptive fused prior extracted from the input image by the frozen AdaCode prior extractor and used to supervise \(\hat{\mathbf{p}}\). It is unavailable during practical decoding and is not assumed to equal the unobservable ideal prior \(\mathbf{p}^{\star}\) in Proposition~\ref{prop:prior_alignment_bound}. The interpolation therefore tests the effect of reducing \(\|\hat{\mathbf{p}}-\mathbf{p}\|_2\), rather than directly verifying alignment with \(\mathbf{p}^{\star}\).

\par\vspace{0.4em}
\noindent\textbf{Impact of Adversarial Supervision:} Fig.~\ref{fig:ablation_wo_discriminator} illustrates the visual contribution of the PatchGAN discriminator at 0.0833 bpp for a representative Kodak sample. Without adversarial supervision, the reconstruction exhibits noticeable noise-like artifacts and less coherent texture patterns on structured regions such as the grass. In contrast, the full AFP-GIC model produces cleaner local structures and more coherent high-frequency textures, suggesting that adversarial supervision suppresses local artifacts in this example.

\begin{figure}[H]
\vspace{-6pt}
\centering
\includegraphics[width=\columnwidth]{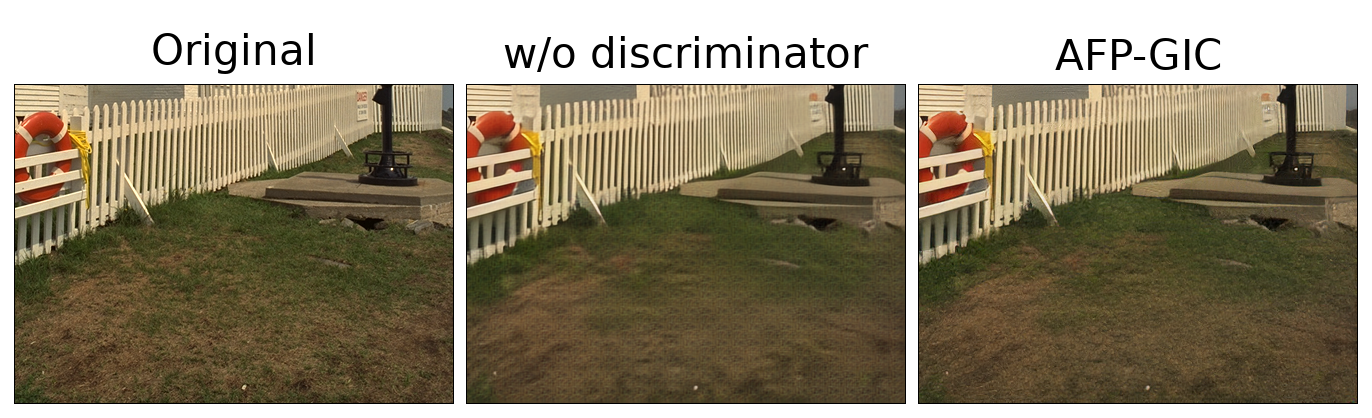}
\refstepcounter{figure}\label{fig:ablation_wo_discriminator}
\par\vspace{1pt}
{\raggedright{\color{accessblue}\figcapheadfont FIGURE \thefigure. }\figcapfont Visual effect of adversarial supervision at 0.0833 bpp.\par}
\vspace{-4pt}
\end{figure}

\section{Discussion and Future Work}
\label{sec:discussion_future}

At very low bitrates, the compressed latent cannot fully specify local structure, so plausible reconstruction depends strongly on the transferred prior. AFP-GIC adopts an image-adaptive fused prior from AdaCode rather than a fixed single-codebook prior. This image-adaptive guidance contributes to AFP-GIC's balance between reference fidelity and perceptual naturalness. These results reflect the rate-distortion-perception trade-off: achieving superior no-reference naturalness (NIQE) and lower decoding latency involves a small trade-off in full-reference metrics like DISTS, showing that gains across all perceptual criteria are rarely uniform.

\begin{figure}[H]
\centering
\includegraphics[width=\columnwidth]{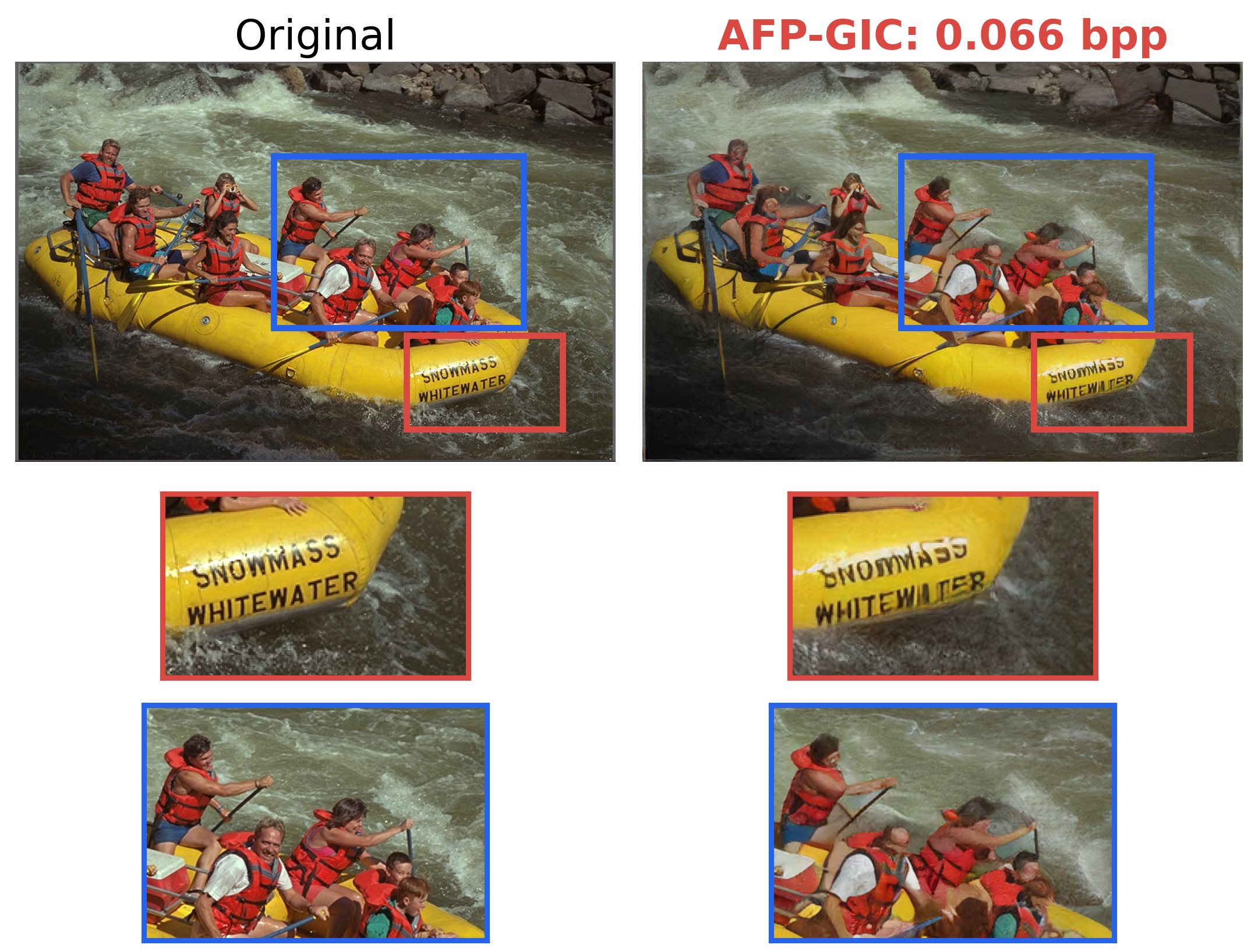}
\caption{Representative failure case of AFP-GIC at 0.066 bpp on Kodak. Although the global scene is preserved, the enlarged regions show malformed text and inaccurate fine structures.}
\label{fig:afpgic_failure_case}
\end{figure}

Prior guidance cannot recover every detail when the bit budget is extremely small. In the representative lowest-rate case in Fig.~\ref{fig:afpgic_failure_case}, AFP-GIC preserves the global scene and color layout but distorts the raft text, faces, hands, and paddles. These errors become less visible at higher operating points, although symbol-level and fine geometric details remain difficult when little direct information is available.

Controllability and decoder efficiency also matter in deployment. A single trained AFP-GIC model supports all reported operating points and adapts to changing bitrate requirements without maintaining separate bitrate-specific models. Its compact decoder is useful in write-once, read-many applications, where encoded images may be decoded repeatedly. The practical value of the framework therefore extends beyond low-rate naturalness to flexible rate control and efficient repeated decoding.

The present study also has several limitations. The experiments are conducted with a fixed training dataset and training budget, so the scaling behavior of AFP-GIC with larger datasets or longer training schedules remains to be examined. Since the transferred prior is obtained from a frozen AdaCode model, the final reconstruction behavior also depends on the representation capacity and inductive bias of that external prior. In addition, the loss terms remain coupled in the current framework; more fine-grained ablations could further separate their individual contributions. The evaluation follows the common objective-metric and visual-comparison protocol used in recent perceptual compression work; subjective preference studies would be complementary, but are outside the scope of the present study.

AFP-GIC is designed for severe information bottlenecks, and its behavior at higher bitrates remains an open question. When more bits are available, fine details can be transmitted more directly, and reconstruction may need to rely less on synthesis from an external prior. Future work can therefore study the transition from prior-driven reconstruction at very low rates to fidelity-oriented coding at higher rates. Other directions include incorporating stronger foundation models as external priors, improving decoder-side prior estimation, and developing more data-efficient control-selection strategies.

\section{Conclusion}
\label{sec:conclusion}

This paper introduced AFP-GIC, a controllable generative compression framework that transfers an adaptive fused prior from a frozen AdaCode model to address the limitations imposed by a fixed single-codebook prior at very low bitrates. AFP-GIC combines encoder-side prior guidance with decoder-side prior prediction, thereby handling the encoder-decoder prior asymmetry without transmitting the fused prior itself. The evidence supports adaptive fused-prior transfer as a practical direction for controllable low-bitrate perceptual compression.

\raggedbottom
\appendices
\section{Numerical Values Underlying the Quantitative Comparison}
\label{app:numerical_results}

Tables~\ref{tab:exact_points_kodak}--\ref{tab:exact_points_div2k} report the numerical values used in Fig.~\ref{fig:quantitative_main}. The number of operating points differs across methods because we retain the available evaluated points that cover or bracket the bitrate range of AFP-GIC; no synthetic bitrate points are introduced.

\begin{table}[H]
\caption{Numerical values for all operating points used for Kodak in Fig.~\ref{fig:quantitative_main}.}
\label{tab:exact_points_kodak}
\centering
\scriptsize
\setlength{\tabcolsep}{1.6pt}
\renewcommand{\arraystretch}{0.88}
\resizebox{\columnwidth}{!}{
\begin{tabular}{lcccccc}
\hline
\rule{0pt}{1.8ex}Method & bpp & PSNR $\uparrow$ & SSIM $\uparrow$ & LPIPS $\downarrow$ & DISTS $\downarrow$ & NIQE $\downarrow$ \\
\hline
\multirow{5}{*}{AFP-GIC (Ours)} & 0.0516 & 25.133 & 0.63108 & 0.13285 & 0.12913 & 3.207 \\
 & 0.0812 & 26.268 & 0.67602 & 0.10142 & 0.10845 & 3.184 \\
 & 0.1124 & 27.149 & 0.71247 & 0.08298 & 0.09482 & 3.146 \\
 & 0.1456 & 27.870 & 0.74269 & 0.06964 & 0.08350 & 3.100 \\
 & 0.1794 & 28.463 & 0.76722 & 0.06077 & 0.07567 & 3.122 \\
\hline
\multirow{5}{*}{DC-VIC} & 0.0537 & 25.077 & 0.63559 & 0.13200 & 0.12475 & 3.201 \\
 & 0.0860 & 26.234 & 0.68069 & 0.09773 & 0.09903 & 3.245 \\
 & 0.1164 & 26.892 & 0.70713 & 0.08245 & 0.08474 & 3.209 \\
 & 0.1508 & 27.688 & 0.73850 & 0.06941 & 0.07479 & 3.233 \\
 & 0.1888 & 28.415 & 0.76714 & 0.05957 & 0.06985 & 3.251 \\
\hline
\multirow{5}{*}{CRDR} & 0.1095 & 27.444 & 0.73164 & 0.09630 & 0.10431 & 3.563 \\
 & 0.1280 & 27.983 & 0.75246 & 0.08663 & 0.09637 & 3.570 \\
 & 0.1497 & 28.520 & 0.77175 & 0.07654 & 0.08931 & 3.562 \\
 & 0.1750 & 29.052 & 0.78966 & 0.06728 & 0.08214 & 3.474 \\
 & 0.2043 & 29.522 & 0.80446 & 0.05914 & 0.07390 & 3.376 \\
\hline
\multirow{8}{*}{MS-ILLM} & 0.0039 & 20.541 & 0.49148 & 0.49866 & 0.32674 & 4.567 \\
 & 0.0071 & 21.680 & 0.50437 & 0.39791 & 0.26217 & 3.772 \\
 & 0.0420 & 24.730 & 0.61612 & 0.15759 & 0.13435 & 3.581 \\
 & 0.0783 & 25.922 & 0.67174 & 0.10990 & 0.10881 & 3.400 \\
 & 0.1508 & 27.532 & 0.74146 & 0.07322 & 0.08101 & 3.346 \\
 & 0.2935 & 29.634 & 0.82115 & 0.04508 & 0.06152 & 3.414 \\
 & 0.4300 & 31.034 & 0.86047 & 0.03203 & 0.04871 & 3.147 \\
 & 0.7068 & 33.212 & 0.90694 & 0.01804 & 0.03347 & 3.029 \\
\hline
\multirow{2}{*}{HiFiC} & 0.1490 & 27.264 & 0.74413 & 0.09263 & 0.10373 & 3.533 \\
 & 0.3033 & 29.098 & 0.80909 & 0.06301 & 0.08173 & 3.459 \\
\hline
\multirow{2}{*}{OSDiff} & 0.0950 & 24.317 & 0.65391 & 0.14785 & 0.09766 & 3.669 \\
 & 0.1369 & 25.073 & 0.68424 & 0.11322 & 0.06999 & 3.510 \\
\hline
\multirow{5}{*}{RDEIC} & 0.0245 & 22.372 & 0.55678 & 0.21948 & 0.15458 & 3.265 \\
 & 0.0429 & 23.455 & 0.60406 & 0.15958 & 0.11828 & 3.332 \\
 & 0.0655 & 24.500 & 0.64422 & 0.12272 & 0.08829 & 3.421 \\
 & 0.0910 & 25.215 & 0.67170 & 0.10017 & 0.07286 & 3.540 \\
 & 0.1211 & 25.779 & 0.69474 & 0.08564 & 0.06054 & 3.473 \\
\hline
\end{tabular}
}
\end{table}

\begin{table}[H]
\caption{Numerical values for all operating points used for CLIC2020 in Fig.~\ref{fig:quantitative_main}.}
\label{tab:exact_points_clic}
\centering
\scriptsize
\setlength{\tabcolsep}{1.6pt}
\renewcommand{\arraystretch}{0.88}
\resizebox{\columnwidth}{!}{
\begin{tabular}{lcccccc}
\hline
\rule{0pt}{1.8ex}Method & bpp & PSNR $\uparrow$ & SSIM $\uparrow$ & LPIPS $\downarrow$ & DISTS $\downarrow$ & NIQE $\downarrow$ \\
\hline
\multirow{5}{*}{AFP-GIC (Ours)} & 0.0391 & 27.986 & 0.74826 & 0.10629 & 0.08683 & 3.745 \\
 & 0.0599 & 29.136 & 0.77824 & 0.08284 & 0.07125 & 3.738 \\
 & 0.0812 & 29.978 & 0.80001 & 0.06885 & 0.06175 & 3.726 \\
 & 0.1029 & 30.652 & 0.81680 & 0.05923 & 0.05520 & 3.714 \\
 & 0.1240 & 31.184 & 0.82966 & 0.05264 & 0.05027 & 3.708 \\
\hline
\multirow{5}{*}{DC-VIC} & 0.0417 & 28.123 & 0.75538 & 0.09951 & 0.07970 & 3.854 \\
 & 0.0655 & 29.316 & 0.78508 & 0.07582 & 0.06217 & 3.897 \\
 & 0.0922 & 29.996 & 0.80208 & 0.06394 & 0.05162 & 3.890 \\
 & 0.1158 & 30.712 & 0.81892 & 0.05533 & 0.04644 & 3.903 \\
 & 0.1372 & 31.328 & 0.83302 & 0.04925 & 0.04326 & 3.918 \\
\hline
\multirow{7}{*}{CRDR} & 0.0797 & 30.164 & 0.81134 & 0.08370 & 0.07482 & 3.840 \\
 & 0.0918 & 30.686 & 0.82396 & 0.07561 & 0.07053 & 3.868 \\
 & 0.1056 & 31.181 & 0.83481 & 0.06821 & 0.06574 & 3.858 \\
 & 0.1215 & 31.640 & 0.84408 & 0.06125 & 0.06036 & 3.806 \\
 & 0.1401 & 32.042 & 0.85078 & 0.05492 & 0.05436 & 3.700 \\
 & 0.1642 & 32.679 & 0.86330 & 0.04860 & 0.05027 & 3.697 \\
 & 0.1923 & 33.259 & 0.87346 & 0.04307 & 0.04534 & 3.642 \\
\hline
\multirow{8}{*}{MS-ILLM} & 0.0036 & 22.715 & 0.63373 & 0.43856 & 0.25872 & 4.760 \\
 & 0.0062 & 24.170 & 0.65690 & 0.32810 & 0.20203 & 4.112 \\
 & 0.0343 & 27.880 & 0.74014 & 0.11368 & 0.08574 & 3.900 \\
 & 0.0629 & 29.239 & 0.77682 & 0.08016 & 0.06657 & 3.840 \\
 & 0.1203 & 30.871 & 0.82254 & 0.05415 & 0.04808 & 3.717 \\
 & 0.2310 & 32.836 & 0.86947 & 0.03494 & 0.03516 & 3.697 \\
 & 0.3309 & 34.112 & 0.89444 & 0.02657 & 0.02883 & 3.692 \\
 & 0.5121 & 35.884 & 0.92125 & 0.01737 & 0.02006 & 3.604 \\
\hline
\multirow{2}{*}{HiFiC} & 0.1074 & 29.841 & 0.82213 & 0.07479 & 0.07052 & 4.010 \\
 & 0.2258 & 31.576 & 0.86326 & 0.05290 & 0.05324 & 3.982 \\
\hline
\multirow{2}{*}{OSDiff} & 0.0829 & 26.085 & 0.75381 & 0.12837 & 0.07426 & 4.536 \\
 & 0.1235 & 26.970 & 0.77299 & 0.09684 & 0.05138 & 4.313 \\
\hline
\multirow{5}{*}{RDEIC} & 0.0174 & 24.280 & 0.68599 & 0.21636 & 0.15262 & 4.366 \\
 & 0.0326 & 25.686 & 0.72395 & 0.14822 & 0.10625 & 4.194 \\
 & 0.0524 & 27.291 & 0.75779 & 0.10596 & 0.07078 & 4.212 \\
 & 0.0764 & 28.288 & 0.77704 & 0.08365 & 0.05327 & 4.199 \\
 & 0.1059 & 29.043 & 0.79313 & 0.06960 & 0.04061 & 4.177 \\
\hline
\end{tabular}
}
\end{table}

\begin{table}[H]
\caption{Numerical values for all operating points used for DIV2K in Fig.~\ref{fig:quantitative_main}.}
\label{tab:exact_points_div2k}
\centering
\scriptsize
\setlength{\tabcolsep}{1.6pt}
\renewcommand{\arraystretch}{0.88}
\resizebox{\columnwidth}{!}{
\begin{tabular}{lcccccc}
\hline
\rule{0pt}{1.8ex}Method & bpp & PSNR $\uparrow$ & SSIM $\uparrow$ & LPIPS $\downarrow$ & DISTS $\downarrow$ & NIQE $\downarrow$ \\
\hline
\multirow{5}{*}{AFP-GIC (Ours)} & 0.0549 & 25.793 & 0.69276 & 0.11726 & 0.09511 & 3.193 \\
 & 0.0846 & 26.999 & 0.73746 & 0.08919 & 0.07713 & 3.155 \\
 & 0.1151 & 27.883 & 0.76893 & 0.07312 & 0.06599 & 3.154 \\
 & 0.1459 & 28.596 & 0.79253 & 0.06230 & 0.05834 & 3.140 \\
 & 0.1760 & 29.181 & 0.81021 & 0.05486 & 0.05250 & 3.124 \\
\hline
\multirow{5}{*}{DC-VIC} & 0.0566 & 25.815 & 0.69922 & 0.11306 & 0.08778 & 3.296 \\
 & 0.0876 & 26.999 & 0.74162 & 0.08584 & 0.06869 & 3.244 \\
 & 0.1175 & 27.676 & 0.76465 & 0.07317 & 0.05813 & 3.234 \\
 & 0.1507 & 28.493 & 0.79053 & 0.06210 & 0.05156 & 3.238 \\
 & 0.1839 & 29.201 & 0.81209 & 0.05392 & 0.04684 & 3.244 \\
\hline
\multirow{5}{*}{CRDR} & 0.1137 & 28.118 & 0.78313 & 0.08699 & 0.07781 & 3.448 \\
 & 0.1315 & 28.672 & 0.80006 & 0.07815 & 0.07256 & 3.505 \\
 & 0.1517 & 29.193 & 0.81502 & 0.07012 & 0.06703 & 3.496 \\
 & 0.1751 & 29.700 & 0.82814 & 0.06242 & 0.06080 & 3.434 \\
 & 0.2017 & 30.155 & 0.83863 & 0.05536 & 0.05376 & 3.320 \\
\hline
\multirow{8}{*}{MS-ILLM} & 0.0043 & 20.604 & 0.52408 & 0.47883 & 0.28467 & 4.394 \\
 & 0.0081 & 21.932 & 0.55455 & 0.36388 & 0.22503 & 3.837 \\
 & 0.0441 & 25.501 & 0.67972 & 0.13795 & 0.10275 & 3.524 \\
 & 0.0798 & 26.889 & 0.73187 & 0.09683 & 0.08004 & 3.362 \\
 & 0.1489 & 28.510 & 0.79226 & 0.06465 & 0.05563 & 3.255 \\
 & 0.2797 & 30.549 & 0.85300 & 0.04036 & 0.04048 & 3.264 \\
 & 0.4089 & 31.919 & 0.88417 & 0.02885 & 0.03171 & 3.199 \\
 & 0.6566 & 33.908 & 0.91813 & 0.01703 & 0.02037 & 3.084 \\
\hline
\multirow{2}{*}{HiFiC} & 0.1468 & 27.899 & 0.79160 & 0.08189 & 0.07616 & 3.471 \\
 & 0.2930 & 29.791 & 0.84598 & 0.05618 & 0.05762 & 3.415 \\
\hline
\multirow{2}{*}{OSDiff} & 0.0935 & 24.213 & 0.69396 & 0.14146 & 0.07734 & 3.785 \\
 & 0.1360 & 25.006 & 0.72070 & 0.10832 & 0.05402 & 3.567 \\
\hline
\multirow{5}{*}{RDEIC} & 0.0231 & 22.137 & 0.59805 & 0.22599 & 0.15814 & 3.626 \\
 & 0.0417 & 23.531 & 0.65366 & 0.16163 & 0.11395 & 3.499 \\
 & 0.0639 & 24.859 & 0.69793 & 0.12067 & 0.07808 & 3.448 \\
 & 0.0896 & 25.781 & 0.72533 & 0.09633 & 0.05942 & 3.459 \\
 & 0.1199 & 26.502 & 0.74673 & 0.08105 & 0.04499 & 3.434 \\
\hline
\end{tabular}
}
\end{table}

\section{Supplementary FID Evaluation}
\label{app:fid_results}

FID \cite{heusel2017fid} is computed with \texttt{pytorch-fid} v0.3.0 from 2048-dimensional Inception-v3 features. Following HiFiC \cite{mentzer2020hific}, each image is partitioned into non-overlapping \(256\times256\) patches and a second patch grid shifted by 128 pixels in each spatial dimension. Because FID estimates differences between feature distributions and depends on sample count and the patch-extraction protocol, we report it on CLIC2020 (428 images) and DIV2K (100 images), but not Kodak (24 images). Fig.~\ref{fig:appendix_fid} reports directly evaluated FID points in the low-bitrate region relevant to AFP-GIC; Tables~\ref{tab:fid_values_clic} and~\ref{tab:fid_values_div2k} also include two reproduced Control-GIC points.

\begin{figure}[H]
\centering
\includegraphics[width=\columnwidth]{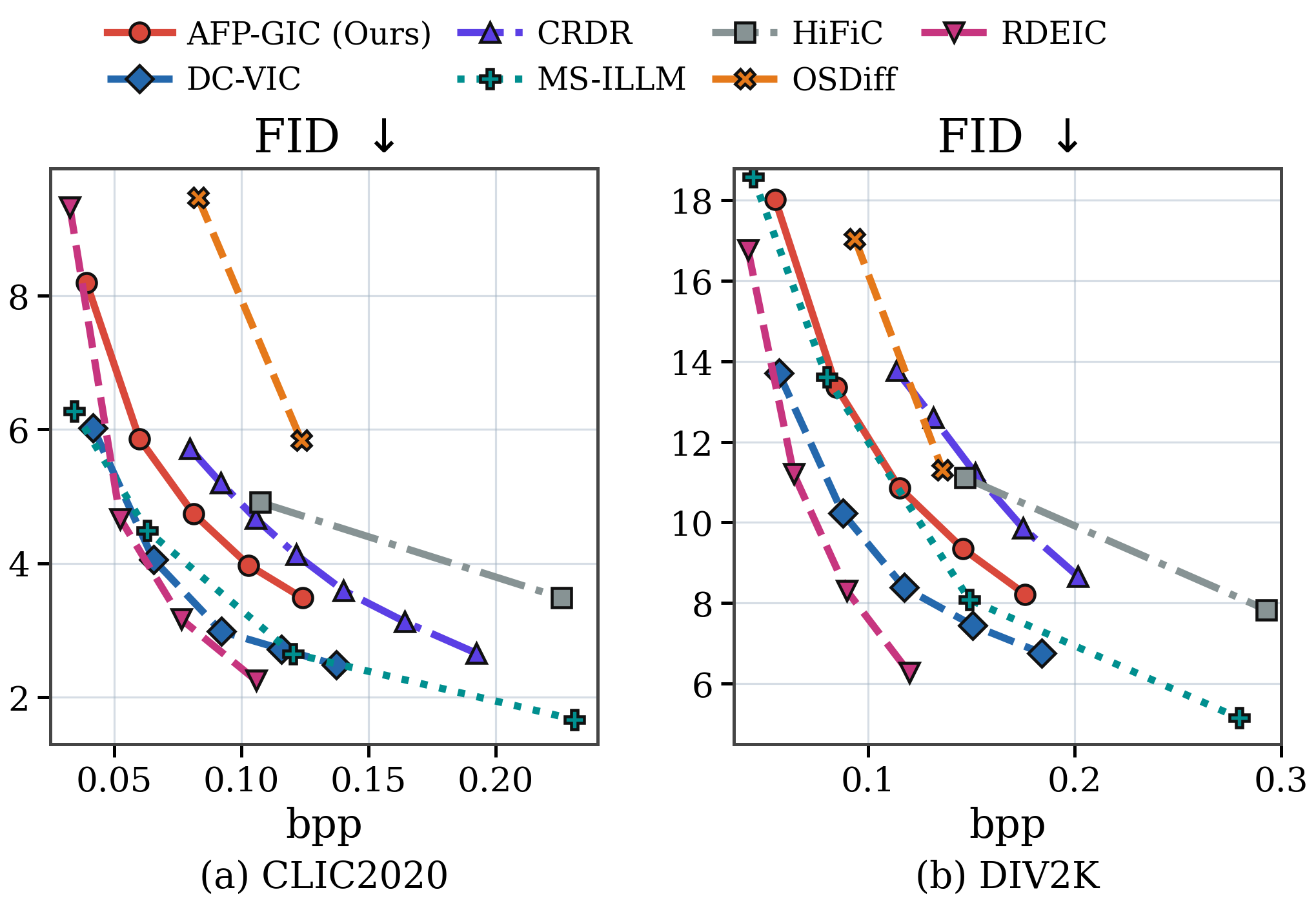}
\caption{FID--rate comparison on CLIC2020 and DIV2K for the learned methods in Fig.~\ref{fig:quantitative_main}. Lower is better.}
\label{fig:appendix_fid}
\end{figure}

\begin{table}[H]
\caption{FID values on CLIC2020. Within each row, the bpp and FID entries correspond from left to right.}
\label{tab:fid_values_clic}
\centering
\scriptsize
\setlength{\tabcolsep}{2pt}
\renewcommand{\arraystretch}{0.96}
\begin{tabular}{p{0.22\columnwidth}p{0.39\columnwidth}p{0.30\columnwidth}}
\hline
\rule{0pt}{1.8ex}Method & Evaluated bpp & FID \(\downarrow\) \\
\hline
AFP-GIC (Ours) & 0.0391, 0.0599, 0.0812, 0.1029, 0.1240 & 8.190, 5.858, 4.738, 3.970, 3.482 \\
DC-VIC & 0.0417, 0.0655, 0.0922, 0.1158, 0.1372 & 6.020, 4.051, 2.981, 2.717, 2.481 \\
CRDR & 0.0797, 0.0918, 0.1056, 0.1215, 0.1401, 0.1642, 0.1923 & 5.715, 5.204, 4.667, 4.127, 3.593, 3.126, 2.657 \\
MS-ILLM & 0.0343, 0.0629, 0.1203, 0.2310 & 6.275, 4.490, 2.649, 1.665 \\
HiFiC & 0.1074, 0.2258 & 4.907, 3.483 \\
OSDiff & 0.0829, 0.1235 & 9.463, 5.841 \\
RDEIC & 0.0326, 0.0524, 0.0764, 0.1059 & 9.315, 4.662, 3.162, 2.246 \\
Control-GIC & 0.0841, 0.1892 & 16.433, 3.789 \\
\hline
\end{tabular}
\end{table}

\begin{table}[H]
\caption{FID values on DIV2K. Within each row, the bpp and FID entries correspond from left to right.}
\label{tab:fid_values_div2k}
\centering
\scriptsize
\setlength{\tabcolsep}{2pt}
\renewcommand{\arraystretch}{0.96}
\begin{tabular}{p{0.22\columnwidth}p{0.39\columnwidth}p{0.30\columnwidth}}
\hline
\rule{0pt}{1.8ex}Method & Evaluated bpp & FID \(\downarrow\) \\
\hline
AFP-GIC (Ours) & 0.0549, 0.0846, 0.1151, 0.1459, 0.1760 & 18.015, 13.362, 10.861, 9.347, 8.220 \\
DC-VIC & 0.0566, 0.0876, 0.1175, 0.1507, 0.1839 & 13.712, 10.231, 8.382, 7.443, 6.761 \\
CRDR & 0.1137, 0.1315, 0.1517, 0.1751, 0.2017 & 13.770, 12.603, 11.233, 9.857, 8.661 \\
MS-ILLM & 0.0441, 0.0798, 0.1489, 0.2797 & 18.591, 13.614, 8.088, 5.143 \\
HiFiC & 0.1468, 0.2930 & 11.112, 7.831 \\
OSDiff & 0.0935, 0.1360 & 17.041, 11.308 \\
RDEIC & 0.0417, 0.0639, 0.0896, 0.1199 & 16.780, 11.202, 8.306, 6.265 \\
Control-GIC & 0.0868, 0.1961 & 41.658, 9.713 \\
\hline
\end{tabular}
\end{table}

\section{Partial Comparison with Control-GIC}
\label{app:control_gic}

Using the released checkpoint, we evaluate two granularity settings reported in \cite[Appendix A.3]{li2025controlgic}. The fine-, medium-, and coarse-scale weights are 0, 0.23, and 0.77 in the first setting, and 0.10, 0.67, and 0.23 in the second. Table~\ref{tab:control_gic_partial} reports actual bpp under the same metric pipeline as the other baselines. No BD metrics are computed from these two points.

\begin{table}[H]
\caption{Partial comparison with Control-GIC using two reproduced operating points.}
\label{tab:control_gic_partial}
\centering
\scriptsize
\setlength{\tabcolsep}{1.6pt}
\renewcommand{\arraystretch}{0.92}
\resizebox{\columnwidth}{!}{
\begin{tabular}{lcccccc}
\hline
\rule{0pt}{1.8ex}Dataset & bpp & PSNR $\uparrow$ & SSIM $\uparrow$ & LPIPS $\downarrow$ & DISTS $\downarrow$ & NIQE $\downarrow$ \\
\hline
Kodak & 0.0866 & 22.304 & 0.55771 & 0.19583 & 0.15858 & 3.572 \\
 & 0.1957 & 25.717 & 0.67888 & 0.07061 & 0.08397 & 2.980 \\
CLIC2020 & 0.0841 & 25.034 & 0.68966 & 0.13909 & 0.11038 & 3.696 \\
 & 0.1892 & 28.270 & 0.77376 & 0.05882 & 0.05450 & 3.562 \\
DIV2K & 0.0868 & 22.266 & 0.59302 & 0.18334 & 0.13748 & 3.316 \\
 & 0.1961 & 25.926 & 0.71425 & 0.07056 & 0.06194 & 3.042 \\
\hline
\end{tabular}
}
\end{table}

\section{Bitrate-Normalized Quality--Complexity Trade-Off}
\label{app:quality_complexity}

Fig.~\ref{fig:quality_complexity_tradeoff} plots mean BD-PSNR against inference parameter count and mean BD-NIQE against decoder latency, using the quality results in Table~\ref{tab:bd_vs_msillm} and the resource measurements in Table~\ref{tab:runtime_learned_gpu}. BD-PSNR and BD-NIQE are computed relative to MS-ILLM over each pairwise common bitrate interval and then averaged equally across Kodak, CLIC2020, and DIV2K. AFP-GIC has the smallest model (120.6M parameters), with mean BD-PSNR of +0.120 dB and mean BD-NIQE of -0.176. CRDR achieves higher mean BD-PSNR but has greater decoder latency and positive mean BD-NIQE; MS-ILLM decodes fastest but uses more parameters. The figure shows complementary quality and complexity dimensions rather than a scalar ranking.

\begin{figure}[H]
\centering
\includegraphics[width=\columnwidth]{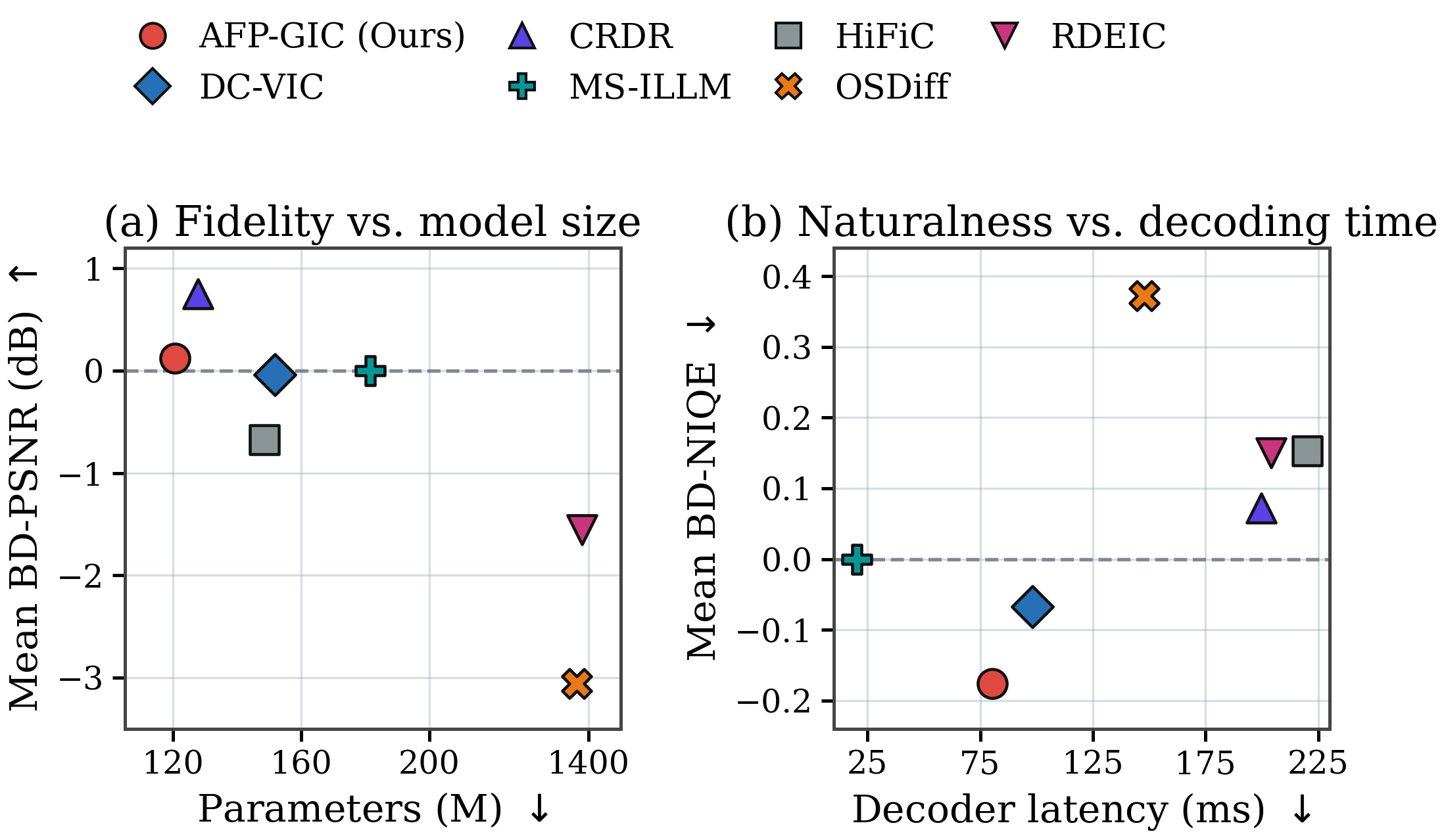}
\caption{Bitrate-normalized quality--complexity trade-off averaged over Kodak, CLIC2020, and DIV2K: (a) mean BD-PSNR versus inference parameter count; (b) mean BD-NIQE versus decoder latency. BD values use method-specific common bitrate intervals relative to MS-ILLM; the nonlinear parameter axis retains the actual tick values.}
\label{fig:quality_complexity_tradeoff}
\end{figure}

\section{Prior Estimator Substitution Diagnostic}
\label{app:prior_estimator_substitution}

To assess the Prior Estimator within the AFP-GIC architecture, we fix the bitstream, \(\hat{\mathbf{y}}\), controls, SFT features, decoder, and model parameters for all 24 Kodak images at each of the five operating points. We replace \(\hat{\mathbf{p}}\) with (i) zeros, (ii) its channel-wise spatial mean, \(\bar{\mathbf{p}}_{c,h,w}=(HW)^{-1}\sum_{h',w'}\hat{\mathbf{p}}_{c,h',w'}\), or (iii) the ground-truth fused prior \(\mathbf{p}\) as a non-deployable oracle. Table~\ref{tab:prior_estimator_substitution_bd} reports BD differences relative to AFP-GIC.

\begin{table}[H]
\centering
\caption{Bj{\o}ntegaard delta metrics \cite{bjontegaard2001bdpsnr} relative to AFP-GIC for prior substitution on Kodak. ``\(\uparrow\)'' indicates higher is better; ``\(\downarrow\)'' indicates lower is better.}
\label{tab:prior_estimator_substitution_bd}
\footnotesize
\setlength{\tabcolsep}{1.5pt}
\renewcommand{\arraystretch}{0.96}
\begin{tabular*}{\columnwidth}{@{\extracolsep{\fill}}lccccc@{}}
\hline
\rule{0pt}{1.8ex}Variant & PSNR $\uparrow$ & SSIM $\uparrow$ & LPIPS $\downarrow$ & DISTS $\downarrow$ & NIQE $\downarrow$ \\
\hline
AFP-GIC & 0.0000 & 0.00000 & 0.00000 & 0.00000 & 0.00000 \\
Zero prior & -1.6389 & -0.00364 & 0.02489 & 0.02288 & 0.20278 \\
Spatial-mean prior & -1.4684 & -0.00898 & 0.01907 & 0.01979 & 0.26289 \\
GT fused prior (oracle) & 0.3861 & 0.03558 & -0.00116 & -0.00084 & 0.03869 \\
\hline
\end{tabular*}
\end{table}

Zero- and spatial-mean-prior substitutions degrade all five metrics, indicating that the learned spatial prediction is useful. The GT fused-prior oracle improves the four full-reference metrics but slightly raises NIQE, suggesting that more accurate prior prediction can improve fidelity without uniformly improving every criterion.

\begin{figure}[H]
\centering
\includegraphics[width=\columnwidth]{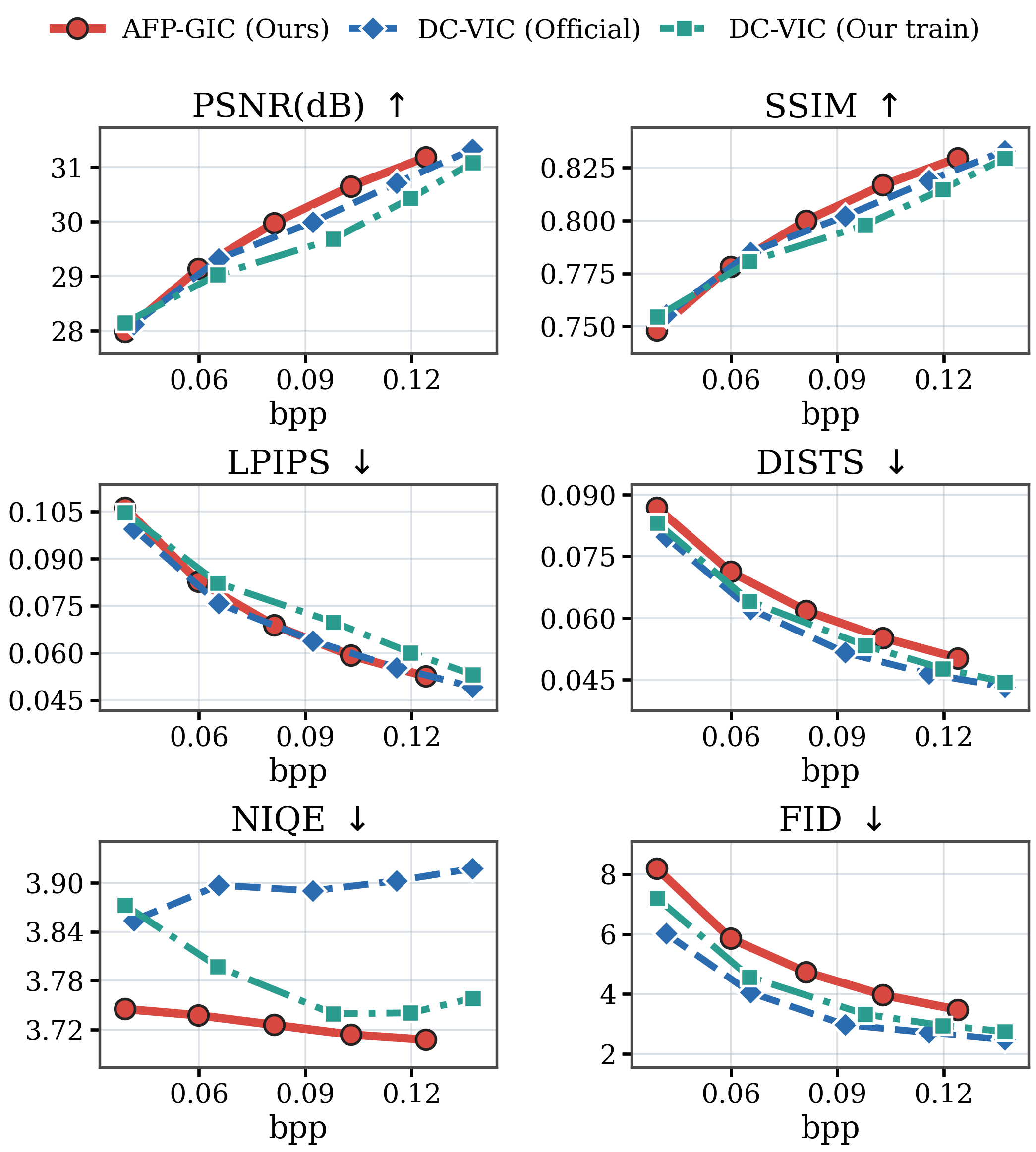}
\caption{Supplementary CLIC2020 comparison among the released DC-VIC checkpoint, our locally trained DC-VIC model, and AFP-GIC, including distribution-level FID.}
\label{fig:dcvic_training_pipeline_comparison}
\end{figure}

\section{Supplementary Comparison Under the DC-VIC Training Pipeline}
\label{app:dcvic_training_pipeline}

Fig.~\ref{fig:dcvic_training_pipeline_comparison} compares the released DC-VIC checkpoint, our DC-VIC reproduction, and AFP-GIC on CLIC2020. Our reproduction and AFP-GIC follow the same three-stage procedure based on the official DC-VIC training guidelines and use the same training-data protocol.

The released DC-VIC checkpoint outperforms our reproduction on all reported metrics except NIQE. AFP-GIC, trained with the same procedure and training data as our reproduction, nevertheless achieves higher PSNR over most of the common bitrate range, higher SSIM on average, and consistently lower NIQE than the released checkpoint, while remaining competitive in LPIPS. Overall, AFP-GIC shows favorable PSNR, SSIM, and NIQE trends against the stronger released model, while the released DC-VIC checkpoint remains stronger in DISTS and FID. Crucially, under matched local training conditions, AFP-GIC consistently outperforms our DC-VIC reproduction across most operating bitrates in PSNR, SSIM, LPIPS, and NIQE.

\bibliographystyle{IEEEtran}
\bibliography{references}

\begin{IEEEbiography}[{\includegraphics[width=1in,height=1.25in,clip,keepaspectratio]{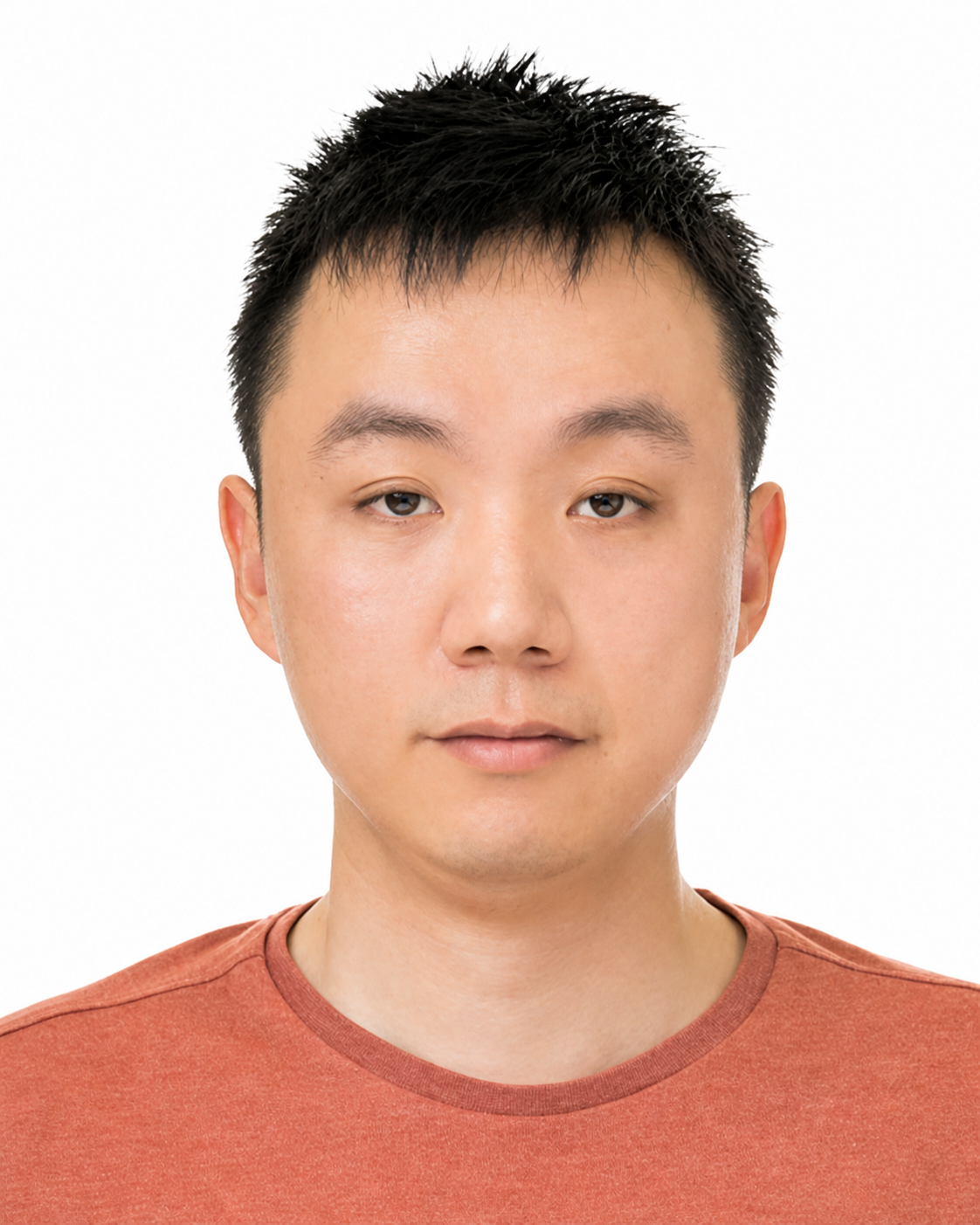}}]{Yifei Pei}
(Student Member, IEEE) received the M.S. degree in Computer Science and Engineering from Santa Clara University, Santa Clara, CA, USA, in 2020, where he is currently pursuing the Ph.D. degree in the same field. His research interests include image and video coding, learned compression, generative models, and compressive sensing. He is the first author of four IEEE conference papers, including three in IEEE ISCAS and one in IEEE VCIP, and the first named inventor on two granted U.S. patents related to low-bitrate image coding and video compressed sensing.
\end{IEEEbiography}

\begin{IEEEbiography}[{\includegraphics[width=1in,height=1.25in,clip,keepaspectratio]{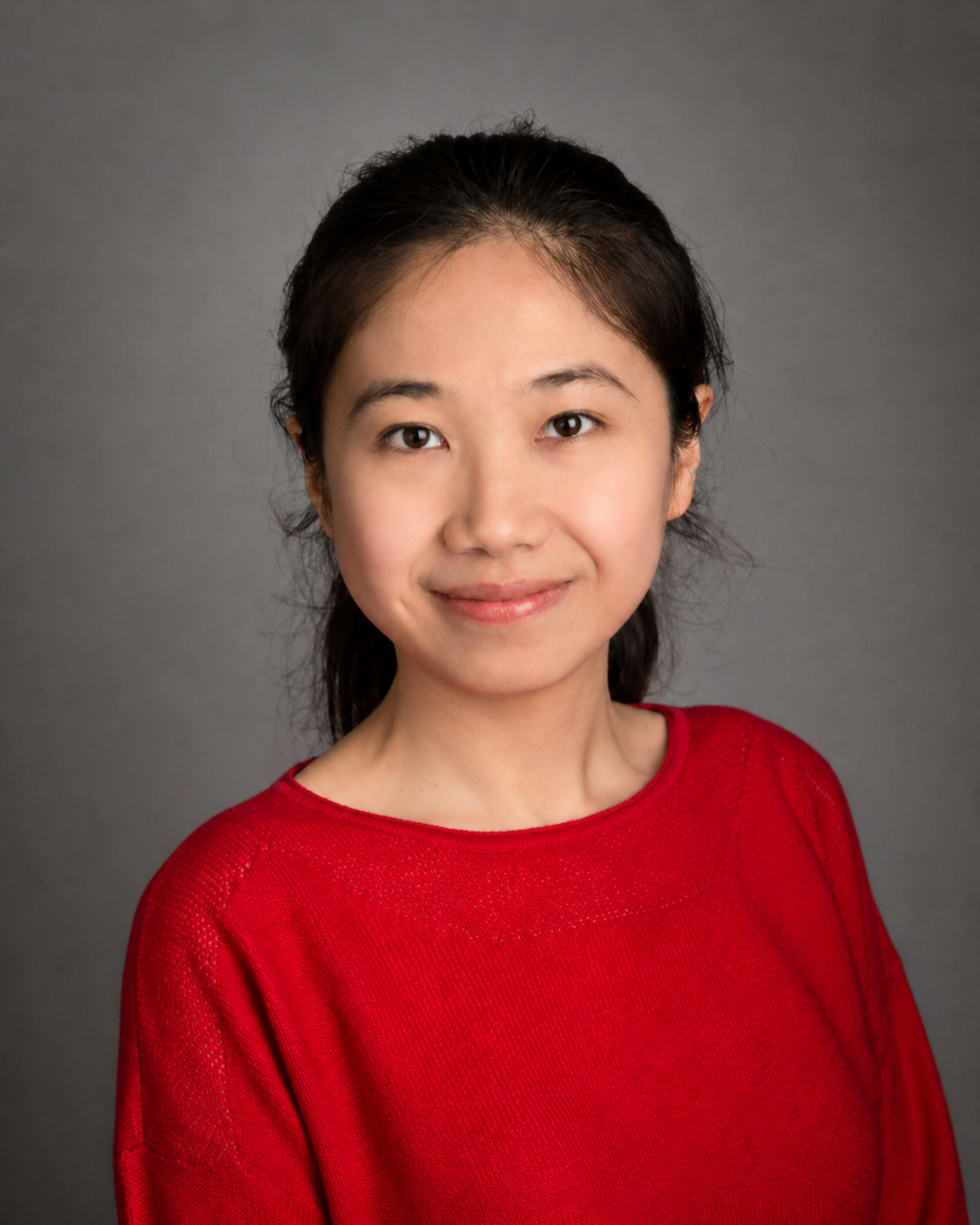}}]{Ying Liu}
(S'11--M'13) received the B.S. degree in communications engineering from Beijing University of Posts and Telecommunications, Beijing, China, in 2006, and the M.S. and Ph.D. degrees in electrical engineering from The State University of New York at Buffalo, Buffalo, NY, USA, in 2008 and 2012, respectively. She is currently an Associate Professor with the Department of Computer Science and Engineering, Santa Clara University, Santa Clara, CA, USA. Her main research interests include deep learning, image and video coding, coding for machines, point cloud coding, and multimodal large language models. Dr. Liu served as an Associate Editor for IEEE Transactions on Circuits and Systems for Video Technology from 2022 to 2025 and has served as an Associate Editor for Journal on Image and Video Processing since 2025.
\end{IEEEbiography}

\begin{IEEEbiography}[{\includegraphics[width=1in,height=1.25in,clip,keepaspectratio]{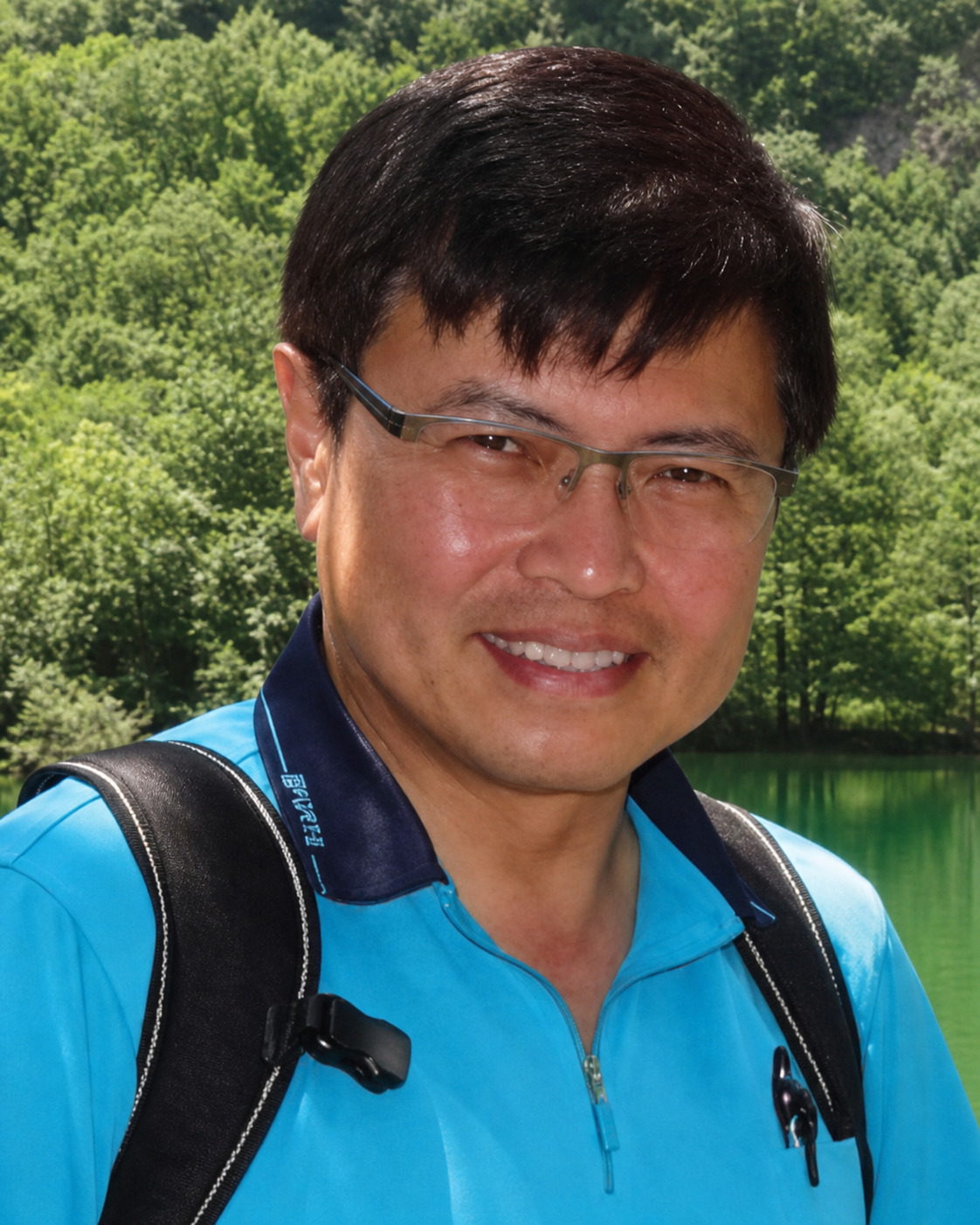}}]{Nam Ling}
(S'88--M'90--SM'99--F'08--LF'22) received the B.Eng. degree from the National University of Singapore, Singapore, in 1981, and the M.S. and Ph.D. degrees from the University of Louisiana at Lafayette, Lafayette, LA, USA, in 1985 and 1989, respectively. He is currently the Wilmot J. Nicholson Family Chair Professor with the Department of Computer Science and Engineering and the Associate Dean for Research for the School of Engineering at Santa Clara University, USA. He served as Department Chair from 2010 to 2023, Associate Dean for Graduate Studies/Research/Faculty Development from 2002 to 2010, and was the Sanfilippo Family Chair Professor from 2010 to 2020. He is/was also a Chair/Distinguished/Guest/Consulting Professor with several universities internationally. He has authored or coauthored over 350 publications and seven adopted standard contributions. He has been granted more than 20 U.S./European/PCT patents. He has delivered more than 120 invited colloquia worldwide. He is an IEEE Fellow due to his contributions to video coding algorithms and architectures. He is also an IET Fellow and an AAIA Fellow. He was named as an IEEE Distinguished Lecturer twice and was also an APSIPA Distinguished Lecturer. He was a recipient of the IEEE ICCE Best Paper Award (First Place) and the Umedia Best/Excellent Paper Award thrice. He received six awards from Santa Clara University, four at the university level and two at the school/college level. He was a Keynote Speaker for IEEE APCCAS, VCVP twice, JCPC, IEEE ICAST, IEEE ICIEA, IET FC and Umedia, IEEE Umedia, IEEE ICCIT, ICNLP/SSPS/CVPS, and Workshop at XUPT twice. He has served as the General Chair/Co-Chair for IEEE Hot Chips, VCVP twice, IEEE ICME, IEEE VCIP, IEEE SiPS, SocialSec, and Umedia six times. He has also served as the Technical Program (TPC) Co-Chair for IEEE ISCAS twice, APSIPA ASC, IEEE APCCAS, IEEE SiPS twice, DCV, and IEEE VCIP. He is currently a TPC Chair for IEEE ICME 2026. He was the Technical Committee Chair for IEEE CASCOM TC and IEEE TCMM, and the Chair for the APSIPA U.S. Chapter. He served as a Guest Editor or an Associate Editor for the IEEE Transactions on Circuits and Systems--I: Regular Papers, the IEEE Journal of Selected Topics in Signal Processing, the Journal of Signal Processing Systems (Springer), Multidimensional Systems and Signal Processing (Springer), and other journals.
\end{IEEEbiography}

\EOD
\end{document}


\twocolumn[
\begin{center}
{\LARGE Supplementary Material for \textit{Adaptive Fused Prior Transfer for Controllable Generative Image Compression}\par}
\vspace{3pt}
{\normalsize Yifei Pei, Ying Liu, and Nam Ling\par}
{\small Department of Computer Science and Engineering, Santa Clara University, Santa Clara, CA 95053 USA\par}
\end{center}
\vspace{2pt}
]
\thispagestyle{plain}
\pagestyle{plain}
\raggedbottom

\section{SNR and MS-SSIM Results}
The operating points, source images, and saved RGB reconstructions are identical to those used for Fig.~3 of the main manuscript. No model was retrained and no operating point was added. For each image, SNR is computed as
\begin{equation}
\mathrm{SNR}=10\log_{10}\!\left(\frac{\operatorname{mean}(x^2)}{\operatorname{mean}[(x-\hat{x})^2]}\right),
\end{equation}
Both means cover all RGB samples. MS-SSIM is computed on RGB images in $[0,1]$ using \texttt{pytorch-msssim} v0.2.1 with its default five-scale weights $(0.0448, 0.2856, 0.3001, 0.2363, 0.1333)$. Dataset-level values are arithmetic means of the per-image measurements.

Fig.~\ref{fig:supp_snr_msssim_rd} shows the corresponding rate--quality curves, Table~\ref{tab:supp_bd_snr_msssim} reports BD-SNR and BD-MS-SSIM relative to MS-ILLM, and Tables~\ref{tab:supp_kodak}--\ref{tab:supp_div2k} provide the point-wise values. The BD calculation follows the same pairwise common-bitrate and log-bpp fitting protocol as Table~4 of the main manuscript. Because all methods are evaluated on the same source images within each dataset, dataset-average SNR differs from dataset-average PSNR by a dataset-specific constant; accordingly, BD-SNR equals BD-PSNR up to numerical precision.

\begin{strip}
\centering
\includegraphics[width=0.84\textwidth]{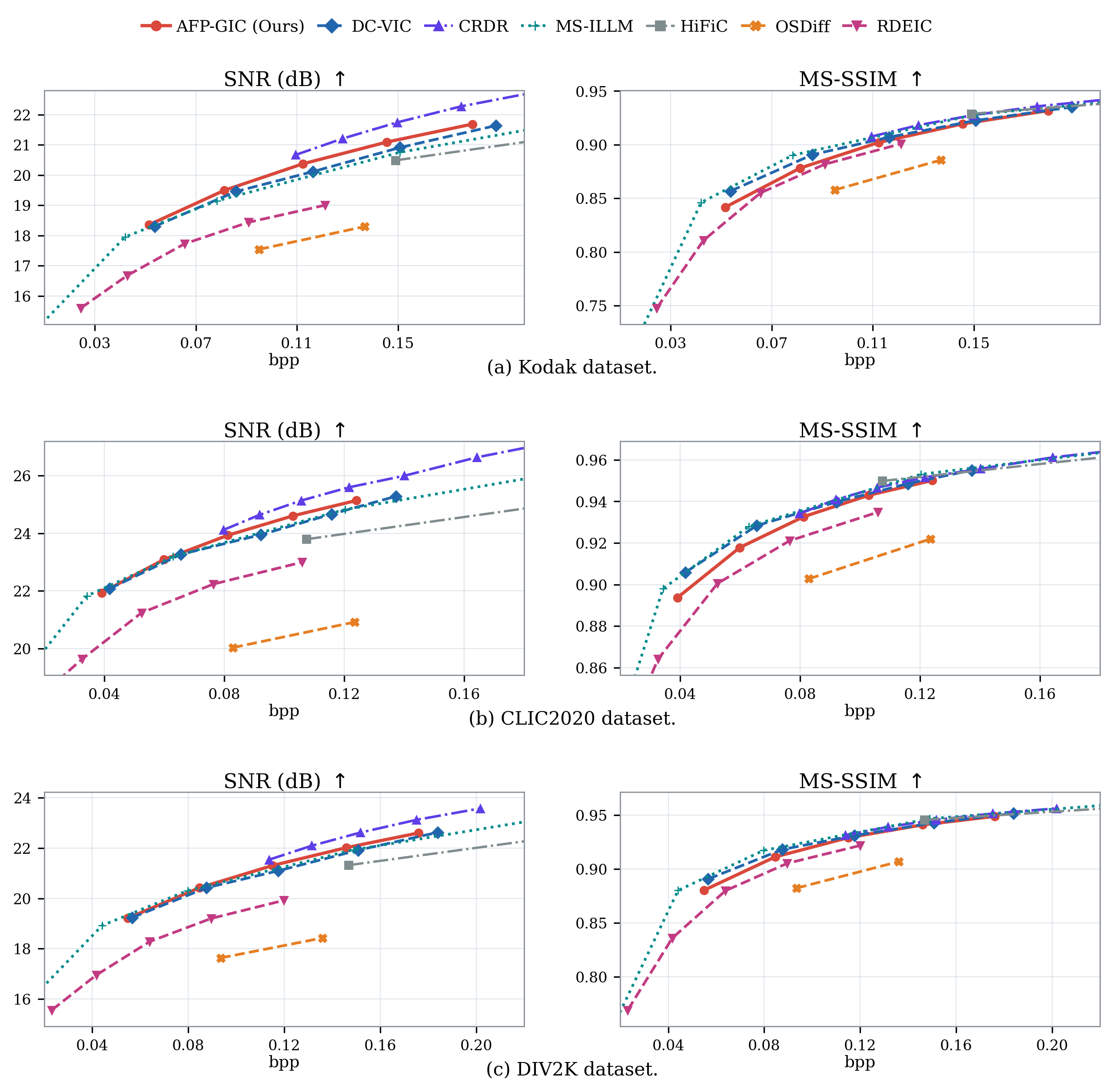}
\captionof{figure}{SNR and MS-SSIM comparisons at the learned-method operating points used in Fig.~3 of the main manuscript on (a) Kodak, (b) CLIC2020, and (c) DIV2K. Higher is better.}
\label{fig:supp_snr_msssim_rd}
\end{strip}

\clearpage

\begin{table}[!t]
\centering
\caption{Bj{\o}ntegaard delta metrics for SNR and MS-SSIM relative to MS-ILLM. ``\(\uparrow\)'' indicates higher is better.}
\label{tab:supp_bd_snr_msssim}
\footnotesize
\setlength{\tabcolsep}{2.5pt}
\renewcommand{\arraystretch}{0.84}
\begin{tabular*}{\columnwidth}{@{\extracolsep{\fill}}lcc@{}}
\hline
\rule{0pt}{1.8ex}Method & BD-SNR (dB) $\uparrow$ & BD-MS-SSIM $\uparrow$ \\
\hline
\multicolumn{3}{l}{\textit{Kodak}} \\
\rule{0pt}{1.8ex}MS-ILLM & 0.0000 & 0.00000 \\
\rule{0pt}{1.8ex}DC-VIC & 0.0659 & -0.00643 \\
\rule{0pt}{1.8ex}CRDR & 0.9689 & -0.00208 \\
\rule{0pt}{1.8ex}HiFiC & -0.3895 & -0.00332 \\
\rule{0pt}{1.8ex}OSDiff & -2.1296 & -0.04195 \\
\rule{0pt}{1.8ex}RDEIC & -1.2087 & -0.02968 \\
\rule{0pt}{1.8ex}AFP-GIC (Ours) & 0.2850 & -0.01244 \\
\hline
\multicolumn{3}{l}{\textit{CLIC2020}} \\
\rule{0pt}{1.8ex}MS-ILLM & 0.0000 & 0.00000 \\
\rule{0pt}{1.8ex}DC-VIC & -0.0634 & -0.00243 \\
\rule{0pt}{1.8ex}CRDR & 0.7225 & -0.00283 \\
\rule{0pt}{1.8ex}HiFiC & -0.9147 & -0.00269 \\
\rule{0pt}{1.8ex}OSDiff & -3.8811 & -0.03513 \\
\rule{0pt}{1.8ex}RDEIC & -1.8024 & -0.02487 \\
\rule{0pt}{1.8ex}AFP-GIC (Ours) & 0.0632 & -0.00711 \\
\hline
\multicolumn{3}{l}{\textit{DIV2K}} \\
\rule{0pt}{1.8ex}MS-ILLM & 0.0000 & 0.00000 \\
\rule{0pt}{1.8ex}DC-VIC & -0.1275 & -0.00470 \\
\rule{0pt}{1.8ex}CRDR & 0.5713 & -0.00336 \\
\rule{0pt}{1.8ex}HiFiC & -0.7207 & -0.00437 \\
\rule{0pt}{1.8ex}OSDiff & -3.1581 & -0.04040 \\
\rule{0pt}{1.8ex}RDEIC & -1.6605 & -0.03159 \\
\rule{0pt}{1.8ex}AFP-GIC (Ours) & 0.0113 & -0.00823 \\
\hline
\end{tabular*}
\end{table}

\begin{table}[!t]
\centering
\caption{Point-wise SNR and MS-SSIM for Fig.~3 on Kodak.}
\label{tab:supp_kodak}
\footnotesize
\setlength{\tabcolsep}{3.0pt}
\renewcommand{\arraystretch}{0.80}
\begin{tabular*}{\columnwidth}{@{\extracolsep{\fill}}lccc@{}}
\hline
\rule{0pt}{1.8ex}Method & bpp & SNR (dB) $\uparrow$ & MS-SSIM $\uparrow$ \\
\hline
\rule{0pt}{1.8ex}AFP-GIC (Ours) & 0.0516 & 18.356 & 0.84168 \\
 & 0.0812 & 19.491 & 0.87817 \\
 & 0.1124 & 20.372 & 0.90213 \\
 & 0.1456 & 21.093 & 0.91945 \\
 & 0.1794 & 21.686 & 0.93170 \\
\addlinespace[0.2pt]
\rule{0pt}{1.8ex}DC-VIC & 0.0537 & 18.300 & 0.85675 \\
 & 0.0860 & 19.457 & 0.89060 \\
 & 0.1164 & 20.115 & 0.90697 \\
 & 0.1508 & 20.910 & 0.92260 \\
 & 0.1888 & 21.638 & 0.93509 \\
\addlinespace[0.2pt]
\rule{0pt}{1.8ex}CRDR & 0.1095 & 20.667 & 0.90758 \\
 & 0.1280 & 21.206 & 0.91819 \\
 & 0.1497 & 21.743 & 0.92754 \\
 & 0.1750 & 22.275 & 0.93573 \\
 & 0.2043 & 22.745 & 0.94256 \\
\addlinespace[0.2pt]
\rule{0pt}{1.8ex}MS-ILLM & 0.0039 & 13.764 & 0.60524 \\
 & 0.0071 & 14.903 & 0.66964 \\
 & 0.0420 & 17.953 & 0.84603 \\
 & 0.0783 & 19.145 & 0.89019 \\
 & 0.1508 & 20.755 & 0.92822 \\
 & 0.2935 & 22.857 & 0.95969 \\
 & 0.4300 & 24.257 & 0.97190 \\
 & 0.7068 & 26.435 & 0.98370 \\
\addlinespace[0.2pt]
\rule{0pt}{1.8ex}HiFiC & 0.1490 & 20.487 & 0.92884 \\
 & 0.3033 & 22.321 & 0.95655 \\
\addlinespace[0.2pt]
\rule{0pt}{1.8ex}OSDiff & 0.0950 & 17.540 & 0.85785 \\
 & 0.1369 & 18.296 & 0.88567 \\
\addlinespace[0.2pt]
\rule{0pt}{1.8ex}RDEIC & 0.0245 & 15.595 & 0.74761 \\
 & 0.0429 & 16.678 & 0.81066 \\
 & 0.0655 & 17.723 & 0.85505 \\
 & 0.0910 & 18.438 & 0.88176 \\
 & 0.1211 & 19.002 & 0.90078 \\
\hline
\end{tabular*}
\end{table}

\begin{table}[!t]
\centering
\caption{Point-wise SNR and MS-SSIM for Fig.~3 on CLIC2020.}
\label{tab:supp_clic2020}
\footnotesize
\setlength{\tabcolsep}{3.0pt}
\renewcommand{\arraystretch}{0.80}
\begin{tabular*}{\columnwidth}{@{\extracolsep{\fill}}lccc@{}}
\hline
\rule{0pt}{1.8ex}Method & bpp & SNR (dB) $\uparrow$ & MS-SSIM $\uparrow$ \\
\hline
\rule{0pt}{1.8ex}AFP-GIC (Ours) & 0.0391 & 21.936 & 0.89368 \\
 & 0.0599 & 23.086 & 0.91778 \\
 & 0.0812 & 23.928 & 0.93265 \\
 & 0.1029 & 24.602 & 0.94298 \\
 & 0.1240 & 25.134 & 0.95008 \\
\addlinespace[0.2pt]
\rule{0pt}{1.8ex}DC-VIC & 0.0417 & 22.073 & 0.90581 \\
 & 0.0655 & 23.266 & 0.92838 \\
 & 0.0922 & 23.946 & 0.93976 \\
 & 0.1158 & 24.662 & 0.94852 \\
 & 0.1372 & 25.278 & 0.95503 \\
\addlinespace[0.2pt]
\rule{0pt}{1.8ex}CRDR & 0.0797 & 24.114 & 0.93413 \\
 & 0.0918 & 24.636 & 0.94082 \\
 & 0.1056 & 25.131 & 0.94659 \\
 & 0.1215 & 25.591 & 0.95158 \\
 & 0.1401 & 25.992 & 0.95569 \\
 & 0.1642 & 26.629 & 0.96119 \\
 & 0.1923 & 27.209 & 0.96576 \\
\addlinespace[0.2pt]
\rule{0pt}{1.8ex}MS-ILLM & 0.0036 & 16.666 & 0.72115 \\
 & 0.0062 & 18.120 & 0.77099 \\
 & 0.0343 & 21.831 & 0.89803 \\
 & 0.0629 & 23.189 & 0.92797 \\
 & 0.1203 & 24.821 & 0.95306 \\
 & 0.2310 & 26.786 & 0.97226 \\
 & 0.3309 & 28.062 & 0.97958 \\
 & 0.5121 & 29.835 & 0.98619 \\
\addlinespace[0.2pt]
\rule{0pt}{1.8ex}HiFiC & 0.1074 & 23.791 & 0.94984 \\
 & 0.2258 & 25.526 & 0.96822 \\
\addlinespace[0.2pt]
\rule{0pt}{1.8ex}OSDiff & 0.0829 & 20.035 & 0.90272 \\
 & 0.1235 & 20.920 & 0.92195 \\
\addlinespace[0.2pt]
\rule{0pt}{1.8ex}RDEIC & 0.0174 & 18.230 & 0.81247 \\
 & 0.0326 & 19.636 & 0.86406 \\
 & 0.0524 & 21.242 & 0.90049 \\
 & 0.0764 & 22.239 & 0.92104 \\
 & 0.1059 & 22.993 & 0.93481 \\
\hline
\end{tabular*}
\end{table}

\begin{table}[!t]
\centering
\caption{Point-wise SNR and MS-SSIM for Fig.~3 on DIV2K.}
\label{tab:supp_div2k}
\footnotesize
\setlength{\tabcolsep}{3.0pt}
\renewcommand{\arraystretch}{0.80}
\begin{tabular*}{\columnwidth}{@{\extracolsep{\fill}}lccc@{}}
\hline
\rule{0pt}{1.8ex}Method & bpp & SNR (dB) $\uparrow$ & MS-SSIM $\uparrow$ \\
\hline
\rule{0pt}{1.8ex}AFP-GIC (Ours) & 0.0549 & 19.214 & 0.88023 \\
 & 0.0846 & 20.420 & 0.91120 \\
 & 0.1151 & 21.303 & 0.92899 \\
 & 0.1459 & 22.016 & 0.94083 \\
 & 0.1760 & 22.601 & 0.94875 \\
\addlinespace[0.2pt]
\rule{0pt}{1.8ex}DC-VIC & 0.0566 & 19.235 & 0.89064 \\
 & 0.0876 & 20.420 & 0.91803 \\
 & 0.1175 & 21.096 & 0.93093 \\
 & 0.1507 & 21.914 & 0.94276 \\
 & 0.1839 & 22.621 & 0.95154 \\
\addlinespace[0.2pt]
\rule{0pt}{1.8ex}CRDR & 0.1137 & 21.538 & 0.93117 \\
 & 0.1315 & 22.092 & 0.93892 \\
 & 0.1517 & 22.614 & 0.94559 \\
 & 0.1751 & 23.121 & 0.95133 \\
 & 0.2017 & 23.575 & 0.95609 \\
\addlinespace[0.2pt]
\rule{0pt}{1.8ex}MS-ILLM & 0.0043 & 14.024 & 0.63852 \\
 & 0.0081 & 15.352 & 0.71212 \\
 & 0.0441 & 18.922 & 0.88015 \\
 & 0.0798 & 20.310 & 0.91720 \\
 & 0.1489 & 21.930 & 0.94653 \\
 & 0.2797 & 23.969 & 0.96897 \\
 & 0.4089 & 25.340 & 0.97778 \\
 & 0.6566 & 27.328 & 0.98590 \\
\addlinespace[0.2pt]
\rule{0pt}{1.8ex}HiFiC & 0.1468 & 21.319 & 0.94542 \\
 & 0.2930 & 23.212 & 0.96622 \\
\addlinespace[0.2pt]
\rule{0pt}{1.8ex}OSDiff & 0.0935 & 17.633 & 0.88224 \\
 & 0.1360 & 18.426 & 0.90676 \\
\addlinespace[0.2pt]
\rule{0pt}{1.8ex}RDEIC & 0.0231 & 15.558 & 0.76917 \\
 & 0.0417 & 16.952 & 0.83589 \\
 & 0.0639 & 18.279 & 0.87984 \\
 & 0.0896 & 19.202 & 0.90516 \\
 & 0.1199 & 19.922 & 0.92168 \\
\hline
\end{tabular*}
\end{table}

\nocite{blau2018perceptiondistortion,blau2019rdp}
\twocolumn[{
\begin{minipage}{\textwidth}
\centering
\captionof{table}{Taxonomy, Characteristics, Strengths, and Limitations of the Evaluated Image Quality Metrics.}
\label{tab:metrics_taxonomy}
\footnotesize
\setlength{\tabcolsep}{4.5pt}
\renewcommand{\arraystretch}{1.18}
\begin{tabularx}{\textwidth}{>{\raggedright\arraybackslash}p{2.3cm} >{\raggedright\arraybackslash}p{2.05cm} >{\raggedright\arraybackslash}p{2.55cm} >{\raggedright\arraybackslash}X >{\raggedright\arraybackslash}X}
\toprule
\textbf{Metric Category} & \textbf{Metric} & \textbf{Input Domain} & \textbf{Key Strengths} & \textbf{Limitations at Very-Low Bitrates} \\
\midrule
\multirow{2}{*}{\shortstack[l]{\textbf{Reference}\\\textbf{Fidelity}}}
& PSNR~\cite{wang2009mse} / SNR & Pixel-domain Full-Reference & Mathematically rigorous; standard benchmark across classical and neural codecs. & Strongly favors pixel alignment; penalizes valid generative textures, causing over-smoothed outputs. \\
\cmidrule{2-5}
& SSIM~\cite{wang2004ssim} / MS-SSIM~\cite{wang2003msssim} & Structural Full-Reference & Better aligns with Human Visual System (HVS) perception of structural contours than MSE. & Highly sensitive to sub-pixel spatial shifts and realistic texture synthesis. \\
\midrule
\multirow{2}{*}{\shortstack[l]{\textbf{Full-Reference}\\\textbf{Perceptual}}}
& LPIPS~\cite{zhang2018lpips} & Deep Feature Full-Reference & Well-aligned with human visual judgment of blur and compression artifacts. & Enforces localized feature alignment; penalizes realistic but unaligned fine details. \\
\cmidrule{2-5}
& DISTS~\cite{ding2022dists} & Deep Feature Full-Reference & Decouples texture and structure; robust to mild spatial shifts and texture swapping. & Remains a full-reference metric; cannot fully reward plausible out-of-reference synthesis. \\
\midrule
\shortstack[l]{\textbf{No-Reference}\\\textbf{Naturalness}}
& NIQE~\cite{mittal2013niqe} & No-Reference (Decoded only) & Reference-free; directly measures image naturalness and statistical plausibility. & Blind to ground-truth content consistency; sensitive to unnatural high-frequency sharpness. \\
\midrule
\shortstack[l]{\textbf{Distributional}\\\textbf{Realism}}
& FID~\cite{heusel2017fid} & Distributional Set-Reference & Evaluates distribution-level generative realism and structural diversity. & Highly sensitive to sample size and crop protocols; unsuitable for single-image assessment. \\
\bottomrule
\end{tabularx}
\end{minipage}
\vspace{1pt}
}]

\section{Taxonomy of Evaluation Metrics}
In the very-low-bitrate generative image compression regime, relying on a single metric is insufficient to characterize codec performance due to the fundamental rate-distortion-perception trade-off~\cite{blau2018perceptiondistortion,blau2019rdp}. Therefore, we employ a multi-faceted evaluation suite spanning four distinct dimensions:

\begin{list}{}{%
  \setlength{\leftmargin}{1.5em}%
  \setlength{\rightmargin}{0pt}%
  \setlength{\itemsep}{0.35em}%
  \setlength{\parsep}{0pt}%
  \setlength{\topsep}{0.35em}%
}
\item[] \textbf{Pixel and Structural Fidelity (PSNR, SNR, SSIM, MS-SSIM):}~\cite{wang2009mse,wang2004ssim,wang2003msssim} These full-reference metrics measure the strict mathematical and structural fidelity against the ground truth. While they ensure that the reconstructed image does not drift from the reference content, optimizing solely for them tends to produce over-smoothed textures under severe bit constraints.

\item[] \textbf{Full-Reference Perceptual Similarity (LPIPS, DISTS):}~\cite{zhang2018lpips,ding2022dists} Evaluated in deep feature spaces, these metrics better reflect human perceptual judgment of visual artifacts and local structural preservation compared to raw pixel differences.

\item[] \textbf{No-Reference Perceptual Naturalness (NIQE):}~\cite{mittal2013niqe} NIQE measures deviations from a natural-scene statistical model without requiring a reference image, providing a reference-free estimate of statistical naturalness.

\item[] \textbf{Distributional Realism (FID):}~\cite{heusel2017fid} FID evaluates distributional differences between reconstructed and reference image sets. In this work, it is used for beta-pair selection and reported as a dataset-level supplementary metric rather than a per-image measure.
\end{list}

\section{Header-Overhead Accounting}
AFP-GIC uses a fixed 48-bit header comprising two 16-bit image dimensions, an 8-bit field encoding the maximum absolute value of the quantized latent \(\hat{\mathbf{y}}\), and an 8-bit operating-point index. For image \(i\), let \(B_i\) denote the total realized bitstream length in bits, and let \(b_i=B_i/(H_iW_i)\) denote the corresponding bpp. The image-wise header fraction is
\begin{equation}
h_i=\frac{48}{B_i}\times100\%
=\frac{48}{b_iH_iW_i}\times100\%.
\end{equation}
For a dataset containing \(N\) images, each entry in Table~9 of the main manuscript is the arithmetic mean \(\bar{h}=N^{-1}\sum_{i=1}^{N}h_i\) over all images at the corresponding operating point.

\bibliographystyle{IEEEtran}
\bibliography{AFP_GIC_Supplementary_References}